\newtheorem{claim}{Claim}
\newtheorem{theorem}{Theorem}
\newtheorem{corollary}{Corollary}
\newtheorem{definition}{Definition}
\newtheorem{remark}{Remark}
\newtheorem{lemma}{Lemma }
\newtheorem{notation}{Notation}
\newcommand{\wt}{\widetilde}
\newcommand{\wh}{\widehat}
\newcommand{\mse}{MSE}
\newcommand{\mb}{\mathbf}
\newcommand{\mbt}{\mathbf}
\newcommand{\lra}{\leftrightarrow}
\newcommand{\ol}{\overline}
\newcommand{\ch}{\ensuremath{\mathrm{conv}}}
\newcommand{\set}[1]{\mathscr{#1}}
\begin{document}

\sloppy

\title{ LP Bounds for Rate-Distortion with Variable Side Information} 
\author{Sinem Unal and Aaron B. Wagner~
\thanks{S.~Unal was with  Cornell University, School of Electrical \& Computer Engineering, Ithaca, NY 14853 USA. She is now with KenCast Inc. Norwalk, CT 06854 USA.  A.~B.~Wagner is with  Cornell University, School of Electrical \& Computer Engineering, Ithaca, NY 14853 USA (e-mail: su62@cornell.edu, wagner@ece.cornell.edu). This paper was presented in part at the IEEE Int. Symposium on Information Theory (ISIT), Barcelona, July 2016 and it is submitted for presentation in part to Data Compression Conference (DCC), Snowbird, UT April, 2017.}
}

\maketitle

\begin{abstract}
We consider a rate-distortion problem with side information at multiple decoders. Several upper and lower bounds have been proposed for this general problem  
or special cases of it.
We provide an upper bound for general instances of this problem, 
which takes the form of a linear program, by utilizing random binning and simultaneous decoding techniques \cite{ElGamal} and compare it with the existing bounds.
We also provide a lower bound for the general problem,
which was inspired by a linear-programming lower bound for index coding,
 and show that it subsumes most of the lower bounds in literature. 
Using these upper and lower bounds, we explicitly characterize the rate-distortion function of a problem that can be seen as a Gaussian analogue of the ``odd-cycle'' index coding problem.
\end{abstract}

\section{Introduction}

We consider the rate-distortion tradeoff for a canonical problem in source coding: an encoder with access to a source of interest broadcasts a single message to multiple decoders, each endowed with side information about the source. Each decoder then wants
to reproduce the source subject to a distortion constraint. This is essentially the multiple-decoder extension of the \textit{Wyner-Ziv}~\cite{wyner_ziv} 
problem, sometimes referred to as the \textit{Heegard-Berger}~\cite{berger}
problem.

Even for the two-decoder case, the complete characterization of the rate-distortion 
function is a long-standing open problem. However, the rate-distortion function has been determined in several special cases, including when the side information at the various
decoders can be ordered according to stochastic degradedness~\cite{berger},
when there are two decoders whose side information is ``mismatch degraded''
\cite{watanabe}, and when there are two decoders and the side information at decoder $2$ is ``conditionally less noisy'' than the side information at decoder 1 and decoder 1 seeks to 
losslessly reproduce a deterministic function of the source~\cite{timo_lessnoisy}. Also, instead of imposing some degraded structure on the side information, one can consider degraded reconstruction sets at the two decoders in which one component of the source is reconstructed at both decoders with vanishing block error probability and the other component of the source is only reconstructed at a single decoder \cite{benammar}. Various vector Gaussian instances of the problem  are solved \cite{sinem_gauss, sinem_CISS}. Several instances of the index coding problem, which is an important special case, have also been solved~(e.g.,~\cite{birk, blasiak,sinem_index}). 

Upper and lower bounds on the rate-distortion function in the
general case are also
available. Existing achievable schemes proceed by crafting
separate messages for different subsets of decoders, which
are encoded and decoded in a fixed order using random 
binning~\cite{berger,watanabe,timo}.
Our first contribution is to show  how such schemes can 
be improved using \emph{simultaneous decoding}~\cite{ElGamal},
in which each decoder decodes all of its messages at once
instead of sequentially. The resulting achievable bound 
involves optimizing over auxiliary random variables and,
for each choice of such variables, solving a linear program (LP).
Prior to this work, the best achievable bound was due to
Timo \textit{et al.}~\cite{timo}, who corrected an earlier achievable
bound
due to Heegard and Berger~\cite{berger}. In fact, as we 
discuss in Section \ref{sec:subsumes}, the proof given by
Timo \textit{et al.} contains an error similar to the one
contained in Heegard and Berger.

One natural way of obtaining a lower bound is to consider a relaxed instance of the problem in which the side information at some of the decoders is enhanced in such a way that the problem becomes stochastically degraded. Indeed, most existing lower bounds adopt this approach in some form~\cite{sinem_index,sinem_CISS}. For the special case of index coding, Blasiak~\emph{et al.}~\cite{blasiak} provide a lower bound that takes the form of a linear program, the constraints for which are derived from properties
of the entropy functional, such as submodularity. This  raises the question of whether a similar-style bound can be obtained for more general instances of the problem.
The second main contribution of the paper is such a bound. It is obtained by introducing a notion of \textit{generalized side information} and capturing the properties of mutual information in the form of a linear program. We show that this lower bound subsumes several existing lower bounds. 

To demonstrate the efficacy of our upper and lower bounds, we consider a rate-distortion problem obtained by extending the odd-cycle index coding problem to Gaussian sources with mean squared error (MSE) distortion constraints.  We find an explicit expression for its rate-distortion function by combining the two bounds.

The outline of the paper is as follows. Section \ref{sec:problem_description} formulates the general rate-distortion problem. Section \ref{sec:upper_bound} presents the  LP-type upper bound based on simultaneous decoding  while Section \ref{sec:upper_gauss_ext} provides the extension of this upper bound to  Gaussian sources. In Section \ref{sec:lower_bound}, we provide the LP-type lower bound and in Section \ref{sec:subsumes} we show that the LP-type upper and lower bounds subsume several existing bounds. Lastly Section \ref{sec:odd_cycle_gauss} presents optimality results including the rate distortion characterization of the  odd-cycle Gaussian problem.

\section{Problem Description}
\label{sec:problem_description}
Let $X$ denote the source at the encoder and $\mathcal{X}$ denote the source alphabet. Also, $Y_l \in \mathcal{Y}_l$, $l \in [m]$ denotes the side information at decoder $l$ and $Y_l$ is jointly distributed with the source, $X$. Lastly, $\wh{X}_{l}\in \wh{\mathcal{X}}_l$ denotes the reconstruction of $X$ at decoder $l$ and $D_l$ denotes the corresponding distortion constraint. Each decoder wishes to reconstruct the source, $X$, subject to its distortion constraint and we assume initially that the source alphabet, $\mathcal{X}$,  the side information alphabets, $\mathcal{Y}_l$, $l \in [m]$, and the reconstruction alphabets $\wh{\mathcal{X}}_l$, $l \in [m]$, are finite.  
We use the following definitions to formulate the problem.
\begin{definition}
An $(n, M, \mb{D})$ code where $n$ denotes the blocklength and $M$ denotes the message size and $\mb{D} =(D_1,\ldots, D_m )$ is composed of 
\begin{itemize}
\item an encoding function
\begin{align*}
f  : \mathcal{X}^{n}  \rightarrow  \{ 1, ... , M \}
\end{align*}
\item and decoding functions
\begin{align*}
g_1 & : \{ 1, ... , M \}  \times  \mathcal{Y}_1^{n}  \rightarrow \wh{\mathcal{X}}^n_1
\\
\vdots
\\
g_m & : \{ 1, ... , M \}  \times  \mathcal{Y}_m^{n}  \rightarrow \wh{\mathcal{X}}^n_m 
\end{align*}
\end{itemize}
satisfying the distortion constraints
\begin{align*}
 E\left[\frac{1}{n} 
\sum_{k = 1}^{n} d_l(X_k,\wh{X}_{lk})\right] \le D_l, \mbox{ for } l \in [m]
\end{align*}
where 
\begin{align*}
& \wh{X}^n_{l} = g_l(f(X^n), Y_l^n), \mbox{ for } l \in [m]
\end{align*}
 and $d_l(\cdot,\cdot) \in [0, \infty)$ is the distortion measure for decoder $l$.
 \end{definition}
\begin{definition}
A rate $R$ is \emph{$\mb{D}$-achievable} if for every $\epsilon > 0$ there exists an $(n,M, \mb{D} + \epsilon \mb{1})$ (where $\mb{1}$ is the all-ones vector) \emph{code} such that for sufficiently large $n$ we have $n^{-1}\log{M} \le R + \epsilon$. 
\end{definition}

We define the \emph{rate-distortion function} as
\begin{align*}
R(\mb{D}) = \inf\{R : R  \textrm{ is } \mb{D} \textrm{-achievable} \}.
\end{align*}


\section{Simultaneous Decoding Based Upper Bound to $R(\mb{D})$}
\label{sec:upper_bound}
We present our first main result, which is an upper bound to the rate-distortion function $R(\mb{D})$.  The following  notation, which is  similar to that in \cite{timo}, will be useful to state the results.
\begin{notation}
Let $(X,Y,Z)$ be a random vector. Then $X \perp Y$ denotes that $X$ and $Y$ are independent, $X \perp Y | Z$ denotes that $X$ and $Y$ are conditionally independent given $Z$, and $X \lra Y \lra Z$ denotes that $X$, $Y$ and $Z$ form a Markov chain.
\end{notation}

\begin{notation}
$v = \set{S}_1, \ldots, \set{S}_{2^m-1}$ denotes an ordered list of all nonempty subsets of $[m]$, where each $\set{S}_i$ denotes a different subset. $\set{V}$ denotes the set of all possible such $v$. 
\end{notation}
\begin{notation}
Let $v \in \set{V}$ be fixed. $\mathcal{U}_{\set{S}_1}$,$\ldots$, $\mathcal{U}_{\set{S}_{2^m-1}}$ denote the alphabets of finite-alphabet random variables $U_{\set{S}_1}$,$\ldots$, $U_{\set{S}_{2^m-1}}$ respectively. $\set{P}_v$ denotes the set of all distributions on $\mathcal{U}^*_v \times \mathcal{X} \times \mathcal{Y}^*$ where  $\mathcal{U}_v^* = \mathcal{U}_{\set{S}_1}$ $\times$ $\cdots$ $\times$ $\mathcal{U}_{\set{S}_{2^m-1}}$ and $\mathcal{Y}^* = \mathcal{Y}_{1}$ $\times$ $\cdots$ $\times$ $\mathcal{Y}_{m}$.
\end{notation}
\begin{notation}
\label{not3:mesg_sets}
Let $\set{U} = \{U_{\set{S}_1},U_{\set{S}_2},\ldots, U_{\set{S}_{2^m-1}}\}$,  $\set{D}_l = \{ \set{S}_i | \mbox{ } l \in \set{S}_i \}$, and $\set{D}'_l$ be a nonempty subset of  $\set{D}_l$. Then we define
\begin{align*}
    U_{\set{D}'_l} &= \left\{ U_{\set{S}_i} \in \set{U} |\mbox{ } \set{S}_i \in  \set{D}'_l \right\}
    \\
   U^{-}_{\set{S}_j} &= \Big\{U_{\set{S}_i} \in \set{U}\ |\ i < j \Big\}
 \\
      U^{-}_{\set{S}_j,\set{D}'_l} &= \left\{ U_{\set{S}_i} \in U^{-}_{\set{S}_j}  |\mbox{ } \set{S}_i \in  \set{D}'_l \right\}.
  \end{align*}
\end{notation}

\begin{theorem}
\label{thm:gen_ach}
The rate-distortion function, $R(\mb{D})$, is upper bounded by 
\begin{align}
&R_{ach}(\mb{D}) = \ch\left( R'_{ach}(\mb{D}) \right)
\end{align}
where $\ch(R'_{ach}(\mb{D}))$ denotes the lower convex envelope of $R'_{ach}(\mb{D})$ with respect to $\mb{D}$,
\begin{align}
\label{eq:gen_ach}
R'_{ach}(\mb{D}) = \min_{v \in \set{V}} \inf_{C_{ach, v}(\mb{D})} \inf_{C^{LP}_{ach}}  \sum^{2^m-1}_{j = 1} R_{\set{S}_j}, 
\end{align}
\begin{align*}
 C_{ach, v}(\mb{D}) &\ \text{denotes the set of } \  p \in \set{P}_v \mbox{ such that }
 \\
 & 1)  p(x,y_1,\ldots, y_m)  \mbox{ equals the } \mbox{joint distribution of } (X, Y_1,\ldots, Y_m)
 \\
& 2) \set{U} \lra X \lra (Y_1, \ldots, Y_m) 
\\
& 3) \mbox{ There exist functions } g_l (U_{\set{D}_l}, Y_l)
\mbox{ such that } E\left[d_l (X, g_l (U_{\set{D}_l}, Y_l))\right] \le D_l  \mbox{ for all } l \in [m],
\end{align*}
and
\begin{align*}
 C^{LP}_{ach} & \ \text{denotes the set of } \ R_{\set{S}_j}, R'_{\set{S}_j}, \mbox{ where } \set{S}_j \in v, \mbox{ such that } 
 \\
  & 1) R_{\set{S}_j} \ge 0,   R'_{\set{S}_j} \ge 0  \mbox{ for all } j \in [2^m-1] 
\notag \\
& 2) R_{\set{S}_j} \ge I\big(X,U^-_{\set{S}_j};U_{\set{S}_j}\big) - R'_{\set{S}_j}
\mbox{ for all } j \in [2^m-1]
\notag \\
& 3) \mbox{ For each decoder } l, l \in [m],
\notag \\
&\sum_{\set{S}_j \in \set{D}'_l} R'_{\set{S}_j} \le
\sum_{\set{S}_j \in \set{D}'_l}H(U_{\set{S}_j})  - H(U_{\set{D}'_l} | U_{\set{D}_l \setminus \set{D}'_l }, Y_l), \mbox{ for all } \set{D}'_l  \subseteq \set{D}_l. 
 \notag
\end{align*}
\end{theorem}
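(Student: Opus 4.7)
The plan is a standard random coding argument, where the novelty (relative to Timo \emph{et al.}) is that each decoder decodes \emph{all} of its intended messages simultaneously rather than sequentially. I fix $v \in \mathcal{V}$, a joint distribution $p \in C_{ach,v}(\mathbf{D})$, reconstruction functions $g_l$, and nonnegative rates $(R_{\set{S}_j}, R'_{\set{S}_j})$ in $C^{LP}_{ach}$, and build a sequence of codes whose rate approaches $\sum_j R_{\set{S}_j}$ and whose distortions satisfy $D_l + \epsilon$.

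\textbf{Codebook generation and encoding (covering).} For each $j = 1, \ldots, 2^m - 1$ in the order prescribed by $v$, I draw $2^{n(R_{\set{S}_j} + R'_{\set{S}_j})}$ codewords $u^n_{\set{S}_j}(k_{\set{S}_j}, k'_{\set{S}_j})$ i.i.d. according to the marginal $p(u_{\set{S}_j})$, then randomly and uniformly assign each codeword a ``bin index'' $k_{\set{S}_j} \in \{1, \ldots, 2^{nR_{\set{S}_j}}\}$ and a ``within-bin index'' $k'_{\set{S}_j} \in \{1, \ldots, 2^{nR'_{\set{S}_j}}\}$. Given $x^n$, the encoder sequentially picks $u^n_{\set{S}_1}$ jointly typical with $x^n$, then $u^n_{\set{S}_2}$ jointly typical with $(x^n, u^n_{\set{S}_1})$, and so on, and transmits only the bin indices $(k_{\set{S}_1}, \ldots, k_{\set{S}_{2^m-1}})$; this uses total rate $\sum_j R_{\set{S}_j}$. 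By sequential application of the covering lemma, the probability that this encoding step fails vanishes if for each $j$
\begin{align*}
R_{\set{S}_j} + R'_{\set{S}_j} \ge I(U_{\set{S}_j}; X, U^-_{\set{S}_j}) + \delta,
\end{align*}
which is exactly constraint 2 of $C^{LP}_{ach}$. Here I use that $\set{U} \markov X \markov (Y_1, \ldots, Y_m)$ together with the assumed joint distribution to ensure that, after encoding succeeds, $(X^n, Y_1^n, \ldots, Y_m^n, U^n_{\set{S}_1}, \ldots, U^n_{\set{S}_{2^m-1}})$ is jointly typical with high probability.

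\textbf{Decoding (simultaneous packing) and error analysis.} Decoder $l$ receives the bin indices $\{k_{\set{S}_j} : \set{S}_j \in \set{D}_l\}$ and looks for the unique within-bin tuple $\{\hat k'_{\set{S}_j}\}_{\set{S}_j \in \set{D}_l}$ such that the corresponding codewords $u^n_{\set{D}_l}$ are jointly typical with $y_l^n$; it then forms $\wh{X}_{l,k} = g_l(U_{\set{D}_l,k}, Y_{l,k})$ coordinatewise. For each nonempty $\set{D}'_l \subseteq \set{D}_l$ I bound the probability of the ``$\set{D}'_l$ error event'' (the decoder returns codewords that disagree with the encoder in exactly the indices of $\set{D}'_l$). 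A competing tuple in the prescribed bins is independent of the true $u^n_{\set{D}_l \setminus \set{D}'_l}$ and $y_l^n$, and has the correct product-of-marginals distribution across the different $\set{S}_j$'s because the codebooks are drawn independently; the joint typicality lemma then gives each such tuple probability
\begin{align*}
\approx 2^{-n\left(\sum_{\set{S}_j \in \set{D}'_l} H(U_{\set{S}_j}) - H(U_{\set{D}'_l}\mid U_{\set{D}_l\setminus \set{D}'_l}, Y_l)\right)}
\end{align*}
of being jointly typical with $(u^n_{\set{D}_l \setminus \set{D}'_l}, y_l^n)$. Applying the union bound over the $\prod_{\set{S}_j \in \set{D}'_l} 2^{nR'_{\set{S}_j}}$ candidates gives the constraint in item 3 of $C^{LP}_{ach}$. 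Summing over the (finitely many) nonempty $\set{D}'_l$ and over $l \in [m]$ shows the total decoding error probability vanishes.

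\textbf{Distortion and convexification.} On the complement of the encoding and decoding error events, every decoder recovers the true $U^n_{\set{D}_l}$ and therefore produces $\wh{X}_l^n = g_l(U^n_{\set{D}_l}, Y_l^n)$; the typical average lemma and the distortion constraint in $C_{ach,v}(\mathbf{D})$ then yield $\frac{1}{n}\sum_k d_l(X_k, \wh{X}_{l,k}) \le D_l + \epsilon'$ in expectation, while the error events contribute an arbitrarily small amount once $n$ is large (a standard truncation/clipping step handles the possibility that distortion measures are unbounded on the error event by using the finite alphabet assumption). Taking $\delta, \epsilon, \epsilon' \to 0$ establishes achievability of $R'_{ach}(\mathbf{D})$, and time-sharing between codes of different $(v, p)$ yields the convex envelope $R_{ach}(\mathbf{D})$.

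\textbf{Main obstacle.} The delicate step is the simultaneous-decoding error analysis: one must verify that the relevant competing codewords have genuinely i.i.d.\ marginals across the different subsets $\set{S}_j$ (which is what turns their joint-typicality probability into the product form $\sum_j H(U_{\set{S}_j}) - H(U_{\set{D}'_l}\mid \cdots)$ rather than $H(U_{\set{D}'_l}) - H(U_{\set{D}'_l}\mid\cdots)$), and then enumerate the error events indexed by $\set{D}'_l \subseteq \set{D}_l$ without double counting. It is precisely at this point that the sequential-decoding scheme of \cite{berger,timo} and the simultaneous scheme diverge, and a careful accounting of the error events is what yields the LP.
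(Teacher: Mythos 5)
Your proposal is correct and follows essentially the same route as the paper's proof: random binning with a sequential covering encoder (giving constraint 2 of $C^{LP}_{ach}$), simultaneous joint-typicality decoding with error events indexed by the subsets $\set{D}'_l \subseteq \set{D}_l$ (giving constraint 3), the standard typical-average distortion argument, and time-sharing for the convex envelope. The ``main obstacle'' you flag—justifying that the competing within-bin codewords behave as independent draws with product marginals despite the encoder's selection depending on the codebooks—is exactly what the paper handles with its bin-independence lemma (Lemma 6, following El Gamal--Kim Lemma 11.1) together with its Corollary 1, so your plan matches the paper's argument step for step.
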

\begin{proof}[Proof of Theorem \ref{thm:gen_ach}]
The proof is given in the Appendix \ref{app:gen_ach}.
\end{proof}

\begin{remark}
Using the chain rule, we can rewrite condition 3) of $C^{LP}_{ach}$ in Theorem \ref{thm:gen_ach} as 
\begin{align}
\label{thm:ach_gen_LPconds_rep}
&\mbox{ for each decoder } l, l \in [m],
\notag \\
&\sum_{\set{S}_j \in\set{D}'_l} R'_{\set{S}_j} \le
\sum_{\set{S}_j \in \set{D}'_l} I\left( U_{\set{S}_j};U^{-}_{\set{S}_j,\set{D}'_l}, U_{\set{D}_l \setminus \set{D}'_l }, Y_l\right), \mbox{ for all } \set{D}'_l  \subseteq \set{D}_l. 
\end{align}
This representation will be useful when we extend this theorem to continuous sources. Hence, from now on we consider the condition 3) of  $C^{LP}_{ach}$ in the form of (\ref{thm:ach_gen_LPconds_rep}).
\end{remark}
\begin{remark}
Since $R_{ach}(\mb{D})$ is an upper bound to the rate-distortion function, $R(\mb{D})$,
we can obtain  a computable upper bound to $R(\mb{D})$ by imposing a cardinality constraint on the alphabets of auxiliary random variables $U_{\set{S}_j}$ in Theorem \ref{thm:gen_ach}.
\end{remark}
The scheme can be described as follows. Each $U_{\set{S}_j}$ in Theorem \ref{thm:gen_ach} can be viewed as a message for all decoders $l$ for which $l \in \set{S}_j$. The encoder encodes each message $U_{\set{S}_j}$ with respect to the order $v \in \set{V}$, using random binning.
Here $R_{\set{S}_j}$ and  $R'_{\set{S}_j}$ can be interpreted as the number of bins in the codebook of message $U_{\set{S}_j}$ and the number of codewords per bin, respectively.
Then each decoder $l$ decodes its messages using simultaneous decoding and reconstructs the source using these messages and its side information, $Y_l$.

\subsection{Rate-Distortion Function with Gaussian Source and Side Information}
\label{sec:upper_gauss_ext}
We next extend the achievable scheme in Theorem \ref{thm:gen_ach} to the rate-distortion problem with vector Gaussian sources.
More specifically, we are interested in the following rate-distortion problem.
The source and the side information at the decoders, $(\mb{X}, \mb{Y_1}, \ldots, \mb{Y_m})$, are zero mean jointly Gaussian vectors. The source $\mb{X} = (X_1,\ldots,X_k)$ has length $k$ and the length of $\mb{Y_i}$ is $k_i$, $i\in [m]$.
\begin{notation}
Let $\mb{v}$ and $\mb{w}$ be $k\times 1$ vectors. The notation $\mb{v} \le \mb{w}$ denotes that the $i^{th}$ component of $\mb{v}$, denoted by $v_i$, is less than or equal to $w_i$ for all $i \in [k]$. 
\end{notation}

\begin{notation}
Let $M$ be an $m\times m$ matrix. $(M)_d$ denotes the vector whose $i^{th}$
component is the $i^{th}$ diagonal element of $M$, $i \in [m]$.
\end{notation}

\begin{notation}
$K_{\mb{X}}$ denotes the covariance matrix of $\mb{X}$. $K_{\mb{X}|\mb{Y}}$ denotes the conditional covariance matrix of  $\mb{X}$ conditioned  on $\mb{Y}$.
\end{notation}
 Let $\mb{D_i} > 0 \mbox{ for all } i \in [m]$. The distortion constraints are 
\begin{align}
 \left(\frac{1}{n} 
\sum_{k = 1}^{n} E\left[(\mb{X}_{k} - \mb{\wh{X}}_{\mb{i}k})(\mb{X}_{k} - \mb{\wh{X}}_{\mb{i}k})^T\right] \right)_d \le \mb{D_i}, \mbox{ for all } i \in [m],
\end{align}
i.e., we impose component-wise mean square error ($\mse$) distortion constraints.
Since we have $\mse$ distortion constraints, without loss of generality we can take the reconstruction at each decoder to be the conditional expectation of the source given the output of the encoder and the corresponding side information. 
From now on, we denote the rate-distortion function of this problem as $R^{\mse}(\mb{D})$. 

\begin{theorem}
\label{thm:gen_ach_gauss}
Let the joint distribution of $(\mb{X},\mb{Y_i})$, $i \in [m]$ be given. Then the rate-distortion function, $R^{\mse}(\mb{D})$, is upper bounded by  
\begin{align}
\label{eq:gen_ach_gauss}
&R^G_{ach}(\mb{D}) = \min_{v \in \set{V}} R^G_{ach,v}(\mb{D})
\end{align}
where 
\begin{align*}
R^G_{ach,v}(\mb{D}) = \inf_{C^G_{ach,v}(\mb{D})} \inf_{C^{LP}_{ach}}  \sum^{2^m-1}_{j = 1} R_{\set{S}_j} 
\end{align*}
\begin{align*}
 C^G_{ach,v}(\mb{D}) & \ \text{denotes the set of }  \ p \in \set{P}_{v}  \mbox{ such that }
 \\
 & 1) p(\mb{x}, \mb{y_1}, \ldots, \mb{y_m})  \mbox{ equals the } \mbox{joint distribution of } (\mb{X}, \mb{Y_1}, \ldots, \mb{Y_m})
 \\
 &2) T \mbox{ is  a discrete random variable over $[\tau]$ for some positive integer $\tau$ such that $T \perp (\mb{X}, \mb{Y_1},\ldots,\mb{Y_m})$}
 \\
& 3) \set{U} \lra \mb{X} \lra (\mb{Y_1}, \ldots, \mb{Y_m}) 
\\
& 4)  K_{\mb{X}|\mb{U}_{\set{D}_i}, \mb{Y_i} } \le \mb{D_i}  \mbox{ for all } i \in [m]
\\
&5) \mb{U}_{\set{S}_j} = (\mb{U}_{\set{S}_j,t}, T)  \mbox{ such that }  \mb{U}_{\set{S}_j} = \mb{U}_{\set{S}_j,t} \mbox{ if } T=t, \mbox{ all } \mb{U}_{\set{S}_j,t} \mbox{ are jointly Gaussian with } (\mb{X},\mb{Y_1},\ldots,\mb{Y_m}),
\\
&\quad   \mbox{ and } I(\mb{U}_{\set{S}_j}; \mb{U}_{\set{S}_i},\mb{X}) < \infty
\mbox{ for all } \set{S}_j  \in [2^m-1], \set{S}_i \in [2^m-1] \mbox{ and } i \neq j,
\end{align*}
and $C^{LP}_{ach}$ is  the set of conditions obtained by  replacing each $X$, $Y_i$, and $U_{\set{S}_j}$ in the conditions of $C^{LP}_{ach}$ in Theorem \ref{thm:gen_ach} by  $\mb{X}$, $\mb{Y_i}$, and $\mb{U}_{\set{S}_j}$ respectively.
\end{theorem}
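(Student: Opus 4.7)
The plan is to reduce Theorem \ref{thm:gen_ach_gauss} to the finite-alphabet achievability of Theorem \ref{thm:gen_ach} by discretizing the source, side information, and auxiliaries, prefaced by a time-sharing step to absorb the mixture variable $T$ of condition 5). If the claim is established whenever $T$ is trivial and $(\mathbf{X},\mathbf{Y_1},\ldots,\mathbf{Y_m},\mathbf{U}_{\set{S}_1},\ldots,\mathbf{U}_{\set{S}_{2^m-1}})$ is jointly Gaussian, then for a general discrete $T\perp(\mathbf{X},\mathbf{Y_1},\ldots,\mathbf{Y_m})$ supported on $[\tau]$ one runs a separate conditional-Gaussian code on each subblock $\{k: T_k=t\}$ and concatenates; the rate and per-component MSE become the corresponding $P_T$-weighted convex combinations, so it suffices to treat the pure Gaussian case.

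Fix jointly Gaussian auxiliaries $\mathbf{U}_{\set{S}_j}$ satisfying 1), 3), 4) and the finiteness clause of 5), together with an LP-feasible tuple $(R_{\set{S}_j},R'_{\set{S}_j})$ from $C^{LP}_{ach}$ evaluated in the mutual-information form (\ref{thm:ach_gen_LPconds_rep}). For truncation radius $B>0$ and grid spacing $\Delta>0$, I would build a discrete proxy by first coordinatewise rounding $\mathbf{X}$ and $\mathbf{Y_l}$ to the nearest point of $\Delta\mathbb{Z}\cap[-B,B]$ (mapping anything outside to a single out-of-range symbol) to obtain $\mathbf{X}^{(B,\Delta)}$ and $\mathbf{Y_l}^{(B,\Delta)}$, then drawing $\mathbf{U}_{\set{S}_j}^{(B,\Delta)}$ from the Gaussian conditional $p(\mathbf{u}_{\set{S}_j}\mid \mathbf{X}^{(B,\Delta)})$ and rounding on the same fine grid. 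Because $\set{U}^{(B,\Delta)}$ is by construction a function of $\mathbf{X}^{(B,\Delta)}$ plus independent noise, the Markov chain $\set{U}^{(B,\Delta)}\lra \mathbf{X}^{(B,\Delta)}\lra(\mathbf{Y_1}^{(B,\Delta)},\ldots,\mathbf{Y_m}^{(B,\Delta)})$ required by Theorem \ref{thm:gen_ach} is preserved. That theorem then yields a finite-alphabet code whose rate is close to the LP value at the quantized law, and whose reconstruction $g_i^{(B,\Delta)}$ I would take to be the discretized linear MMSE estimator of $\mathbf{X}$ given $(\mathbf{U}_{\set{D}_i},\mathbf{Y_i})$. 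A routine Gaussian-tail estimate plus an $O(\Delta^2)$ grid-error bound shows that its componentwise MSE against the true $\mathbf{X}$ converges to $(K_{\mathbf{X}\mid \mathbf{U}_{\set{D}_i},\mathbf{Y_i}})_d$ as $\Delta\downarrow 0$ and $B\uparrow\infty$.

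To pass to the limit I would use the fact that every mutual information appearing in $C^{LP}_{ach}$—both the encoding-side terms $I(\mathbf{X},\mathbf{U}^-_{\set{S}_j};\mathbf{U}_{\set{S}_j})$ and the decoding-side sums in (\ref{thm:ach_gen_LPconds_rep})—is monotone increasing under refinement of finite measurable partitions and converges up to its Gaussian value, with finiteness of that Gaussian value guaranteed by the hypothesis $I(\mathbf{U}_{\set{S}_j};\mathbf{U}_{\set{S}_i},\mathbf{X})<\infty$ in condition 5) combined with the chain rule and the data-processing inequality. Hence any Gaussian-LP-feasible $(R_{\set{S}_j},R'_{\set{S}_j})$ is quantized-LP-feasible up to arbitrarily small slack for all sufficiently extreme $(B,\Delta)$, and absorbing that slack along with the distortion-approximation error into the $\epsilon$ afforded by the definition of $\mathbf{D}$-achievability produces the claimed code.

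The main obstacle is coordinating these limits: I need every one of the exponentially many subset-indexed LP inequalities in $C^{LP}_{ach}$ to converge in the correct direction simultaneously, and I need the Gaussian-tail MSE contribution from the complement of $[-B,B]^k$ to remain negligible without forcing the quantization grid so coarse as to inflate the rate. The single finiteness hypothesis $I(\mathbf{U}_{\set{S}_j};\mathbf{U}_{\set{S}_i},\mathbf{X})<\infty$ of condition 5) is placed precisely so that a chain-rule-plus-data-processing argument propagates finiteness through to every relevant conditional mutual information, and showing that this one assumption really does suffice to make all the limits well-behaved is where the principal technical care will go.
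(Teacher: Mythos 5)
Your reduction of the time-sharing variable $T$ is where the argument breaks. You propose to run a separate pure-Gaussian code on each subblock $\{k : T_k = t\}$ and concatenate, so that the rate becomes the $P_T$-weighted average of the per-$t$ LP values, and you conclude that it suffices to prove the theorem for trivial $T$. But the quantity the theorem asks you to achieve is $\inf_{C^{LP}_{ach}}\sum_j R_{\set{S}_j}$ evaluated at the \emph{mixture} law $p$, and this is in general strictly smaller than $\sum_t P_T(t)$ times the LP value of the conditional law given $T=t$. The simplest instance already shows it: two decoders, a single common message $\mb{U}_{\{1,2\}}=(\mb{U}_{\{1,2\},T},T)$, all other auxiliaries degenerate. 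The LP value at the mixture is $\max_{i\in\{1,2\}} \sum_t P_T(t)\, I(\mb{X};\mb{U}_{\{1,2\},t}\,|\,\mb{Y_i})$, whereas concatenating per-$t$ codes gives $\sum_t P_T(t)\, \max_{i} I(\mb{X};\mb{U}_{\{1,2\},t}\,|\,\mb{Y_i})$; when the maximizing decoder differs across $t$ (easy to arrange with jointly Gaussian $\mb{U}_{\{1,2\},t}$), the first is strictly smaller. The same max-of-average versus average-of-max phenomenon appears throughout the binning constraints 3) of $C^{LP}_{ach}$, because the LP value is a convex, not affine, function of the mutual-information data, and convexity points in the wrong direction for your reduction. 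Condition 5) puts $T$ \emph{inside} the auxiliaries precisely to capture these coded-time-sharing gains (note also that, unlike Theorem \ref{thm:gen_ach}, Theorem \ref{thm:gen_ach_gauss} takes no outer convex hull), so discarding $T$ at the outset fails to establish the stated bound for any mixture $p$ whose LP value beats every per-$t$ time-share.

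The remainder of your plan (truncate and quantize $\mb{X}$, the $\mb{Y_i}$'s and the Gaussian components of the auxiliaries, preserve the Markov chain, invoke Theorem \ref{thm:gen_ach} for the quantized law, and pass to the limit using convergence of the finitely many mutual-information terms together with an MSE cross-term estimate) is essentially the paper's argument, and the paper's treatment of $T$ shows how to close the gap: do not decompose over $T$ at all. After quantizing the Gaussian parts, each $\mb{U}_{\set{S}_j}=(\mb{U}_{\set{S}_j,t},T)$ is a bona fide finite-alphabet auxiliary (T is already discrete), the quantized joint law lies in $C_{ach,v}((1+\delta)\mb{D})$, and Theorem \ref{thm:gen_ach} is applied directly to the mixture; conditions (\ref{ineq:proof_ach_gen_gauss_quant3})--(\ref{ineq:proof_ach_gen_gauss_quant4}) plus continuity of the LP in its data then give convergence to $R^G_{LP}(\mb{D},p)$, i.e., to the mixture LP value rather than to a time-shared surrogate. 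If you restructure your proof to quantize the mixture auxiliaries as a whole (keeping $T$ as a component) and otherwise keep your discretization and limiting arguments, the approach goes through and coincides with the paper's.
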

\begin{remark}
Since all feasible messages $\mb{U}_{\set{S}_j}$ in (\ref{eq:gen_ach_gauss}) are Gaussian mixtures  and the source and the side information vectors are jointly Gaussian, the minimum mean square error (MMSE) estimator is linear given the realization of $T$. In other words, we can write $\mb{\hat{X}_i} = A_{i,t}\mb{U}_{\set{D}_i} + B_{i,t}\mb{Y_i}$ given $T=t$, where the value of $A_{i,t}$ and $B_{i,t}$ are determined by the joint distribution $p  \in C^G_{ach,v}(\mb{D})$.
\end{remark}
\begin{proof}[Proof of Theorem \ref{thm:gen_ach_gauss}]
The argument is based on a quantization of the source and messages similar to the procedure in \cite[Section 3]{ElGamal}. First we quantize the source, all messages and the side information. Then we apply the achievable scheme in the proof of Theorem \ref{thm:gen_ach} to these quantized variables and show that the rate in (\ref{eq:gen_ach_gauss}) is $\mb{D}$-achievable for our problem.

Let $v \in \set{V}$ be fixed and $\epsilon > 0$ be given. Also let $(\mb{X}, \set{U}, \mb{Y_1},\ldots, \mb{Y_m})$ be such that the joint distribution of it, denoted by $p$, is in $C^G_{ach,v}(\mb{D})$.  Note that we can 
represent each message $\mb{U}_{\set{S}_j} = A_{\set{S}_j,t}\mb{X} + \mb{N}_{\set{S}_j,t}$, $\set{S}_j \in v$, if $T= t$ where $\mb{N}_{\set{S}_j,t} \perp (\mb{Y_1},\ldots,\mb{Y_m})$ and we can represent the side information as $\mb{Y_i} = B_{i}\mb{X} + \mb{N}_{i}$, $i \in [m]$ where $\mb{N}_{i} \perp (\set{U}, \mb{X})$.
Now we quantize $\mb{X}$ and all $\mb{Y_i}$, $i \in [m]$, and we use the notation $\mb{\ol{X}}$ to denote the quantized version of $\mb{X}$. We perform the quantization such that 

\begin{align}
\label{ineq:proof_ach_gen_gauss_quant1}
& E\left[(X_j - \ol{X}_j)^2 \right] \le \delta(\epsilon)\min_{i \in [m]} D_{ij} \mbox{ for all } j \in [k]
\\
\label{ineq:proof_ach_gen_gauss_quant2}
&E\left[(\ol{X}_j - \ol{\wh{X}}_{ij})^2\right] \le D_{ij} +\delta(\epsilon)D_{ij} \mbox{ for all } i \in [m] \mbox{ and } j \in [k]
\end{align}
\begin{align}
\label{ineq:proof_ach_gen_gauss_quant3}
& | I\big(\mb{X},\mb{U}^-_{\set{S}_j};\mb{U}_{\set{S}_j}\big) - I\big(\mb{\ol{X}},\ol{\mb{U}}^-_{\set{S}_j}; \mb{\ol{U}}_{\set{S}_j}\big) | \le \delta(\epsilon), \mbox{ for all } \set{S}_j \in [2^m-1]
\\
\label{ineq:proof_ach_gen_gauss_quant4}
&\left| \sum_{\set{S}_j \in \set{D}'_i} I\left( \mb{U}_{\set{S}_j};\mb{U}^{-}_{\set{S}_j,\set{D}'_i},\mb{U}_{\set{D}_i \setminus \set{D}'_i }, \mb{Y_i}\right) 
-
 \sum_{\set{S}_j \in \mb{\set{D}'_i}} I\left( \mb{\ol{U}}_{\set{S}_j};\ol{\mb{U}}^{-}_{\set{S}_j,\set{D}'_i},\ol{\mb{U}}_{\set{D}_i \setminus \set{D}'_i }, \mb{\ol{Y}_i}\right) \right| \le  \delta(\epsilon), \mbox{ for all } i\in [m] \mbox{ and  }\set{D}'_i \subseteq \set{D}_i,
\end{align}
where $\delta(\epsilon) > 0$ is to be specified later, 
and
\begin{align*}
\ol{\set{U}} \lra \mb{\ol{X}} \lra \mb{X} \lra (\mb{Y_1}, \ldots, \mb{Y_m}) \lra (\mb{\ol{Y}_1},\ldots, \mb{\ol{Y}_m}).
\end{align*}

Let  $\ol{p}$ denote the joint distribution of $(\ol{\set{U}}, \mb{\ol{X}}, \mb{\ol{Y}_1},\ldots, \mb{\ol{Y}_m})$. Now we form a new problem in which the source is $\mb{\ol{X}}$, the side information at decoder $i$ is 
$\mb{\ol{Y}_i}$, $i \in [m]$, and the distortion constraints are as in (\ref{ineq:proof_ach_gen_gauss_quant2}).
Note that for this problem, $\ol{p}$ is in  $C_{ach,v}((1+\delta(\epsilon))\mb{D})$ in (\ref{eq:gen_ach}). Then we can apply the achievable scheme in the proof Theorem \ref{thm:gen_ach} to the new problem.

Let  $R_{LP}((1+\delta(\epsilon))\mb{D},\ol{p})$ denote the result of the linear program $\inf_{C^{LP}_{ach}}  \sum^{2^m-1}_{j = 1} R_{\set{S}_j}$ in Theorem \ref{thm:gen_ach} when the joint distribution is  $\ol{p}$. Then from    Theorem $1$,  rate $R_{LP}((1+\delta(\epsilon))\mb{D},\ol{p})$ is $(1+\delta(\epsilon))\mb{D}$-achievable for the new problem. In other words, we can find an  $(n,M, (1+\delta(\epsilon))\mb{D} + \epsilon'\mb{1})$, $\epsilon'(\epsilon) >0$ (specified later), code with rate 
\begin{align}
\label{ineq:proof_qrate_req}
R_{LP}\left((1+\delta(\epsilon))\mb{D},\ol{p}\right) +\epsilon'(\epsilon)
\end{align}
and
 \begin{align}
 \label{ineq:proof_qdist_req}
  \left(\frac{1}{n} 
\sum_{j = 1}^{n} E\left[(\ol{\mb{X}}_{j} - \ol{\mb{\wh{X}}}_{\mb{i}j})(\ol{\mb{X}}_{j} - \ol{\mb{\wh{X}}}_{\mb{i}j})^T\right] \right)_d \le (1+\delta(\epsilon))\mb{D_i} + \epsilon'\mb{1} \mbox{ for all } i \in [m]
 \end{align}
when the blocklength, $n$, is sufficiently large. 

For our original problem, first we quantize the source, the side information and all the messages distributed by $p$ as described above and then we apply the $(n, M, (1+\delta(\epsilon))\mb{D} + \epsilon'\mb{1})$ code with rate (\ref{ineq:proof_qrate_req}) to these quantized variables, the joint distribution of which is $\bar{p}$.
Let $R^G_{LP}(\mb{D},p)$ denote the result of the linear program $\inf_{C^{LP}_{ach}}  \sum^{2^m-1}_{j = 1} R_{\set{S}_j}$ in Theorem \ref{thm:gen_ach_gauss} when the joint distribution is $p$. 
Note that the linear programs defining both  $R_{LP}\left((1+\delta(\epsilon))\mb{D},\ol{p}\right)$ and  $R^G_{LP}(\mb{D},p)$ are finite. Thus by (\ref{ineq:proof_ach_gen_gauss_quant3}), (\ref{ineq:proof_ach_gen_gauss_quant4}) and standard results on the continuity of linear programs \cite{cont_LP}, we have that 
\begin{align*}
|R^G_{LP}(\mb{D},p) - R_{LP}\left((1+\delta(\epsilon))\mb{D},\ol{p}\right)| \le \gamma(\epsilon),
\end{align*}
where $\gamma(\epsilon) \rightarrow 0$ as $\delta(\epsilon) \rightarrow 0$.
Lastly utilizing the Cauchy and Jensen inequalities and using (\ref{ineq:proof_ach_gen_gauss_quant1}) and (\ref{ineq:proof_qdist_req}) as in \cite[Section 3]{ElGamal}, we can obtain\footnote{When $\mb{v}$, $\mb{w}$ are $k\times 1$ vectors, $\mb{u}=\mb{v}\mb{w}$ is also a $k\times 1$ vector such that $u_i = v_iw_i$, $i \in [k]$. }
\begin{align}
 \label{ineq:proof_dist_req}
  &\left(\frac{1}{n} 
\sum_{j = 1}^{n} E\left[(\mb{X}_{j} - \mb{\wh{X}}_{\mb{i}j})(\mb{X}_{j} - \mb{\wh{X}}_{\mb{i}j})^T\right] \right)_d
\notag \\
& \le \delta(\epsilon)\mb{D_i} + (1+\delta(\epsilon))\mb{D_i} + \epsilon'\mb{1} + 2\sqrt{(\delta(\epsilon))\mb{D_i}((1+\delta(\epsilon))\mb{D_i} + \epsilon'\mb{1})}
 \notag \\
 &= \mb{D_i} + 2\delta(\epsilon)\mb{D_i} + \epsilon'\mb{1} + 2\sqrt{(\delta(\epsilon))\mb{D_i}((1+\delta(\epsilon))\mb{D_i} + \epsilon'\mb{1})}\mbox{ for all } i \in [m],
 \end{align}
for sufficiently large $n$.

Thus for all sufficiently large $n$, there exists a code whose rate does not exceed
\begin{align*}
R^G_{LP}(\mb{D},p) + \epsilon'(\epsilon) +\gamma(\epsilon)
\end{align*}
and whose distortion at the decoder $i$ is dominated by the expression in (\ref{ineq:proof_dist_req}). It follows that $R^G_{LP}(\mb{D},p)$ is $\mb{D}$-achievable.
\end{proof}

\section{An LP Lower Bound to $R(\mb{D})$}
\label{sec:lower_bound}
We present our second main result, a lower bound on the rate-distortion function $R(\mb{D})$ of the problem where the source $\mbt{X}$ and side information $\mbt{Y_i}$ are random vectors and the distortion constraint for each decoder $i$ is  $\mb{d_i}(\mbt{X},\mbt{\wh{X}_i}) \le \mbt{D_i}$. The same definitions for the scalar case are used to formulate this problem by replacing the scalar source, side information and distortion constraints by the vector ones given above. 
\begin{definition}
\cite{timo_lessnoisy}
 $\mbt{B}$ is \emph{conditionally less noisy} than $
\mbt{A}$ given $\mbt{C}$, denoted as $(\mbt{B} \succeq \mbt{A} | \mbt{C})$, if
$I(\mbt{W};\mbt{B} | \mbt{C}) \ge I(\mbt{W}; \mbt{A} | \mbt{C})$ for all $\mbt{W}$ such that $\mbt{W} \lra (\mbt{X},\mbt{C}) \lra (\mbt{A},\mbt{B})$.
\end{definition}

\begin{definition}
Given a random vector $\mbt{W}$, $\mathcal{C}(\mbt{W})$ denotes the set of joint distributions over two vectors where the first vector has the same marginal distribution as $\mbt{W}$. 
\end{definition}
We informally refer to $\mathcal{C}(\mbt{W})$ as the ``set of random vectors coupled to $\mbt{W}$" and we sometimes write $\mbt{V} \in \mathcal{C}(\mbt{W})$ to denote such a random vector.

\begin{definition}
Given $\mbt{V} \in \mathcal{C}(\mbt{X})$ and a mapping $\mbt{U}_{\cdot} : $  $\mathcal{C}(\mbt{X}) \rightarrow \mathcal{C}(\mbt{X}, \mbt{V})$, let $R^{LP}_{lb}(\epsilon)$
denote the infinite-dimensional $LP$ in Table \ref{table1}, where $K(\cdot)$ varies over all maps from $\mathcal{C}(\mbt{X})$ to $[0, \infty)$, and $f_1(\cdot)$ and $f_2(\cdot)$  are deterministic functions. Here $K(\cdot)$ assigns the same number to all deterministic random variables and $K(\emptyset)$ denotes this common number. Whenever $(\mbt{U_A},\mbt{V},\mbt{X},\mbt{A},\mbt{B})$ appear together, their joint distribution is assumed to satisfy $(\mbt{U_A},\mbt{V}) \lra \mbt{X} \lra (\mbt{A},\mbt{B})$.
\end{definition} 
\begin{table*}[t]
\caption{LP for the Rate-Distortion Problem}
\label{table1}
$\inf K(\emptyset) - \epsilon \mbox{ \textit{subject to}}$ 
\\
$K(\mbt{X}) = 0  \mbox{  \textit{(initialize)}} $
\\
$K(\mbt{A}) \ge 0, \mbox{ for all } \mbt{A}  \mbox{  \textit{(non-negativity)}}$ 
\\
$K(\mbt{B}) + I(\mbt{B}; \mbt{V}, \mbt{U_B}| \mbt{A}) \ge K(\mbt{A}), \mbox{ for all }(\mbt{A},\mbt{B}) \mbox{ couplings }: \mbt{A} \lra \mbt{B} \lra \mbt{X}  \mbox{ \textit{(slope)}}$
\\
$K((\mbt{A},\mbt{C})) \ge K((\mbt{B},\mbt{C})), \mbox{ for all } (\mbt{A},\mbt{B}) \mbox{ couplings } : (\mbt{B} \succeq \mbt{A} | \mbt{C})  \mbox{ \textit{(monotonicity)}}$
\\
$ K(\mbt{A})  \ge K(\mbt{B}) + I(\mbt{B}; \mbt{V},\mbt{U_A} | \mbt{A}), \mbox{ for all } (\mbt{A},\mbt{B}) \mbox{ couplings } : \mbt{A} \lra \mbt{B} \lra \mbt{X}  \mbox{ \textit{(monotonicity+)}}$
\\
$K(\mbt{A}) + K(\mbt{B}) \ge K(\mbt{C}) + K((\mbt{A}, \mbt{B})), \mbox{ for all }  (\mbt{A},\mbt{B},\mbt{C}) \mbox{ couplings } : \mbt{B}
\lra \mbt{C} \lra \mbt{A} \mbox{ and } \mbt{C} = f_1(\mbt{A}) \mbox{ or } 
 \mbt{C} = f_2(\mbt{B})  \mbox{ \textit{(submodularity)}}$
\end{table*}
\begin{theorem}
\label{theorem:lowergen}
For any $\epsilon > 0$, $R(\mb{D})$ is lower bounded by
\begin{align}
\label{eq:lower_gen}
R_{lb}(\mb{D} + \epsilon\mb{1}) = \inf_{\mbt{V} \in \mathcal{C}(\mbt{X})} \inf_{\mbt{U}_{\cdot}  : \mathcal{C}(\mbt{X}) \rightarrow \mathcal{C}(\mbt{X}, \mbt{V})} 
R^{LP}_{lb}(\epsilon) 
\end{align}
where  $\mbt{V}$ and $\mbt{U}_{\cdot}$ in the infima must satisfy
\\
1) For all $\mbt{B} \in \mathcal{C}(\mbt{X})$, $\mbt{U_B}$ is independent of $\mbt{X}$.
\\
2) If $\{\mbt{B}, A_1, \ldots, A_s \}$ are all elements of $\mathcal{C}(\mbt{X})$ and can be coupled so that $\mbt{X} \lra \mbt{B} \lra (A_1, \ldots, A_s)$ then it must be possible to couple $\mbt{U_B}$ and $(U_{A_1}, \ldots, U_{A_s})$ to $(\mbt{X},\mbt{V})$ such that 
$\mbt{X} \lra (\mbt{V},\mbt{U_B}) \lra (U_{A_1}, \ldots, U_{A_s})$.
\\
3) There exist functions $g_1(\mbt{V},\mbt{U_{Y_1}}, \mbt{Y_1}),\ldots,g_m(\mbt{V},\mbt{U_{Y_m}}, \mbt{Y_m})$ such that 
$E[d_i(\mbt{X}, g_i(\mbt{V},\mbt{U_{Y_i}}, \mbt{Y_i}))] \le \mbt{D_i} + \epsilon\mb{1}$, for all $i \in [m]$. 
\end{theorem}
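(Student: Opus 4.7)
The plan is to show that any $\mbt{D}$-achievable rate $R$ admits a feasible instance of the LP in Table~\ref{table1} whose objective $K(\emptyset)-\epsilon$ does not exceed $R$. Fix $\epsilon'\in(0,\epsilon)$ and an $(n,M,\mbt{D}+\epsilon'\mbt{1})$ code of rate at most $R+\epsilon'$. I would single-letterize in the standard way: introduce $Q$ uniform on $[n]$ independent of the source and side information, and set $\mbt{X}:=X_Q$, $\mbt{Y_i}:=Y_{i,Q}$. For each $\mbt{A}\in\mathcal{C}(\mbt{X})$, I form i.i.d.\ samples $(X_k,A_k)_{k=1}^n$ from the specified coupling so that $\mbt{A}:=A_Q$ carries the right marginal. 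The LP instance I would then propose takes $\mbt{V}:=(f(\mbt{X}^n),Q)$, the map $\mbt{U_A}:=\mbt{A}^{n\setminus Q}$ (the coupling samples at times other than $Q$), and the value function $K(\mbt{A}):=\frac{1}{n}H(M\mid\mbt{A}^n)$, where $M=f(\mbt{X}^n)$.

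With these choices the ``easy'' conditions fall out quickly: $K(\mbt{X})=0$ since $M$ is a function of $\mbt{X}^n$; non-negativity is trivial; condition 3 follows by taking $g_i(\mbt{V},\mbt{U}_{\mbt{Y_i}},\mbt{Y_i})$ to be the $Q$-th coordinate of the original decoder $g_i(M,Y_i^n)$, which reproduces the per-letter expected distortion; condition 1 ($\mbt{U_B}\perp\mbt{X}$) holds because the marginal of $\mbt{B}^{n\setminus Q}$ does not depend on $X_Q$; and condition 2 holds because, given $\mbt{B}^{n\setminus Q}$, each $A_i^{n\setminus Q}$ uses only independent coupling noise. The main work is in the four inequality constraints. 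The key identity is that, under i.i.d.\ tensorization of an $\mbt{A}\lra\mbt{B}\lra\mbt{X}$ coupling, $K(\mbt{A})-K(\mbt{B})=\frac{1}{n}I(M;\mbt{B}^n\mid\mbt{A}^n)$, and a per-symbol calculation using $B_k\perp(\mbt{A}^{n\setminus k},\mbt{B}^{k-1})\mid A_k$ rewrites this as $I(\mbt{V},\mbt{B}^{Q-1},\mbt{A}^{n\setminus Q};B_Q\mid A_Q)$. The slope constraint then follows because replacing $\mbt{A}^{n\setminus Q}$ by $\mbt{U_B}=\mbt{B}^{n\setminus Q}$ (and enlarging $\mbt{B}^{Q-1}$ to $\mbt{B}^{n\setminus Q}$) only increases the mutual information, since $\mbt{A}^{n\setminus Q}$ becomes conditionally redundant given $\mbt{B}^{n\setminus Q}$; monotonicity+ follows in the opposite direction by dropping $\mbt{B}^{Q-1}$, which shrinks it.

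Monotonicity uses the fact that the conditionally-less-noisy order tensorizes, applied with $M$ as the test variable via the Markov $M\lra(\mbt{X}^n,\mbt{C}^n)\lra(\mbt{A}^n,\mbt{B}^n)$. Submodularity reduces after rearrangement to $I(M;\mbt{B}^n\mid\mbt{A}^n)+I(M;\mbt{C}^n\mid\mbt{B}^n)\ge I(M;\mbt{B}^n\mid\mbt{C}^n)$; when $\mbt{C}=f_1(\mbt{A})$ this follows from the i.i.d.\ Markov $\mbt{A}^n\perp\mbt{B}^n\mid\mbt{C}^n$, which yields $I(M;\mbt{B}^n\mid\mbt{C}^n)\le I(M;\mbt{B}^n\mid\mbt{A}^n,\mbt{C}^n)=I(M;\mbt{B}^n\mid\mbt{A}^n)$, and the case $\mbt{C}=f_2(\mbt{B})$ is symmetric. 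To conclude, $K(\emptyset)=\frac{1}{n}H(M)\le R+\epsilon'$, so the LP objective is at most $R+\epsilon'-\epsilon<R$; taking $\epsilon'\downarrow 0$ and the infimum over codes gives $R_{lb}(\mbt{D}+\epsilon\mbt{1})\le R(\mbt{D})$. I expect the main obstacle to be the tension between slope, which wants $I(\mbt{B};\mbt{V},\mbt{U_B}\mid\mbt{A})$ large, and monotonicity+, which wants $I(\mbt{B};\mbt{V},\mbt{U_A}\mid\mbt{A})$ small; the choice $\mbt{U_A}=\mbt{A}^{n\setminus Q}$ resolves this because when the index of $\mbt{U}$ matches the conditioning variable the information is at most the target, while when it matches the other variable the redundancy in the i.i.d.\ coupling pushes it at least as high as the target.
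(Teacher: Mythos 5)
Your proposal is correct and takes essentially the same route as the paper's proof: the same construction $K(\mbt{A})=\tfrac{1}{n}H(M\mid\mbt{A}^n)$ (identical to the paper's $\tfrac{1}{n}I(\mbt{X}^n;I_0\mid\mbt{A}^n)$ since $M$ is a function of $\mbt{X}^n$), $\mbt{V}=(M,Q)$ and $\mbt{U_A}=\mbt{A}^{n\setminus Q}$ with a time-sharing index, the same chain-rule single-letterization giving slope and monotonicity+ from the identity $K(\mbt{A})-K(\mbt{B})=\tfrac{1}{n}I(M;\mbt{B}^n\mid\mbt{A}^n)$, the tensorized conditionally-less-noisy lemma for monotonicity, and the same key step $I(M;\mbt{B}^n\mid\mbt{A}^n)\ge I(M;\mbt{B}^n\mid\mbt{C}^n)$ (via the i.i.d.\ Markov coupling) for submodularity. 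The remaining differences (entropy versus mutual-information bookkeeping, the $\epsilon'$ choice, deriving both slope and monotonicity+ from one expanded expression) are cosmetic.
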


\begin{proof}
Let $R$ be a $\mb{D}$-achievable rate, $\epsilon > 0$,
 $p(\mbt{x})$ be given and $p(\mbt{y_i} | \mbt{x})$, $i \in [m]$  be fixed.
Then there exists an $(n, M, \mbt{D} + \epsilon \mbt{1})$ code for some $n$ such that $H(I_0) \le n(R + \epsilon)$, where $I_0$ is the output of the encoder. Also, let $K(\mbt{A}) = \frac{I(\mbt{X}^n; I_0| \mbt{A}^n)}{n}$, where $\mbt{A}$ is a random vector with pmf $\sum_{\mbt{x} \in \mathcal{X}}p(\mbt{a}|\mbt{x})p(\mbt{x})$, i.e., $\mbt{A} \in \mathcal{C}(\mbt{X})$.
We call such $\mbt{A}$  \textit{generalized side information}.  Lastly, let $\mbt{V'}_i = I_0 $, $\mbt{U'}_{\mbt{A}i} =(\mbt{A}^{-}_{i}, \mbt{A}^{+}_{i})$, where $\mbt{A}^{-}_{i} = (\mbt{A}_1, \ldots, \mbt{A}_{i-1})$   and $\mbt{A}^{+}_{i} = (\mbt{A}_{i+1}, \ldots, \mbt{A}_n)$ for $i \in [n]$, and let $T$ denote a random variable that is uniformly distributed on $[n]$ such that it is independent of the source $\mbt{X}$, all generalized side information $\mbt{A}$, $\mbt{U'}_{\mbt{A}i} $, and  $\mbt{V'}_i$. Define $\mbt{U_{A}} =(\mbt{U'_A},T)$, $\mbt{V} = (\mbt{V'}, T)$.
Note that we have 
\begin{align*}
R + \epsilon \ge K(\emptyset). 
\end{align*}
Also, we can write
$I(\mbt{X}^n; I_0| \mbt{X}^n) = 0$ and $I(\mbt{X}^n; I_0| \mbt{A}^n) \ge 0, \mbox{ for all } \mbt{A}$, giving the \textit{(initialize)} and  \textit{(non-negativity)} conditions in the LP.
 
 Let $\mbt{A} \lra \mbt{B} \lra \mbt{X}$. For any such $\mbt{A}$ and $\mbt{B}$ we can write $n(K(\mbt{A}) - K(\mbt{B}))$ as
\begin{align*}
I(\mbt{B}^n;I_0|\mbt{A}^n)
& = \sum^n_{i=1} I(\mbt{B}_{i} ; I_0, \mbt{B}^{-}_{i}, \mbt{A}^{-}_{i}, \mbt{A}^{+}_{i} | \mbt{A}_i)
\\
& \le \sum^n_{i=1} I(\mbt{B}_{i} ; I_0, \mbt{B}^{-}_{i}, \mbt{B}^{+}_{i} | \mbt{A}_i)
\\
&= \sum^n_{i=1} I(\mbt{B}_{i} ; \mbt{V'}_i, \mbt{U'}_{\mbt{B}i} | \mbt{A}_i).
 \end{align*}

Since $T$ is independent of $\mbt{X}, \mbt{V'}$, all generalized side information $\mbt{A}$ and all $\mbt{U'_{B}}$,  we can write
 \begin{align*}
n(K(\mbt{A}) - K(\mbt{B})) &\le \sum^{n}_{i=1}  I( \mbt{B}_i; \mbt{V'}_i, \mbt{U'}_{\mbt{B}i}| \mbt{A}_{i},T=i)
 \\
&=n I( \mbt{B}; \mbt{V'}, \mbt{U'}_{\mbt{B}}|\mbt{A},T)
\\
&=n I( \mbt{B}; \mbt{V}, \mbt{U_{B}}| \mbt{A}),
 \end{align*}
 which gives the (\textit{slope}) constraints in the LP. 
 
 Let $(\mbt{B} \succeq \mbt{A} |  \mbt{C})$. Then for each such coupling of $(\mbt{B}, \mbt{A}, \mbt{C})$, $n(K((\mbt{A},\mbt{C})) - K((\mbt{B},\mbt{C})))$ is equal to 
 \begin{align*}
  & H( I_0|\mbt{A}^n,\mbt{C}^n) - H(I_0|\mbt{B}^n,\mbt{C}^n) \ge 0 , \mbox{ by  \cite[Lemma 1]{timo_lessnoisy}},
 \end{align*}
giving the (\textit{monotonicity}) constraints in the LP.

Now we obtain the \textit{monotonicity+} conditions in the LP. Let $\mbt{A}\lra \mbt{B} \lra \mbt{X}$. By utilizing the chain rule again, we can write 
$n(K(\mbt{A}) - K(\mbt{B}))$ as
\begin{align}
I(\mbt{B}^n;I_0|\mbt{A}^n)
&\ge \sum^n_{i=1} I(\mbt{B}_{i} ; I_0, \mbt{A}^{-}_{i}, \mbt{A}^{+}_{i} | \mbt{A}_i) 
\notag \\
\label{eq:decodeA}
&=\sum^n_{i=1} I(\mbt{B}_{i} ; \mbt{V'}_i, \mbt{U'}_{\mbt{A}i} | \mbt{A}_i) 
\notag \\
& = n I(\mbt{B} ; \mbt{V}, \mbt{U_A} | \mbt{A}), 
\end{align}
giving  (\textit{monotonicity+}) conditions.

 Let $\mbt{A},\mbt{B},\mbt{C}$ be such that
$\mbt{A} \lra \mbt{C} \lra \mbt{B}$ and $\mbt{C} = f_1(\mbt{A})$ for some deterministic mapping $f_1(\cdot)$. By the chain rule, $n(K(\mbt{A}) + K(\mbt{B}))$ is equal to
\begin{align*}
&I(\mbt{B}^n; I_0|\mbt{A}^n) + I(\mbt{X}^n; I_0|\mbt{B}^n, \mbt{A}^n) + I(\mbt{X}^n; I_0|\mbt{B}^n)
\\
&\ge I(\mbt{B}^n; I_0|\mbt{C}^n) + I(\mbt{X}^n; I_0|\mbt{B}^n, \mbt{A}^n) + I(\mbt{X}^n; I_0|\mbt{B}^n)
\\
&\ge I(\mbt{B}^n; I_0|\mbt{C}^n) + I(\mbt{X}^n; I_0|\mbt{B}^n, \mbt{A}^n) + I(\mbt{X}^n; I_0|\mbt{B}^n,\mbt{C}^n)
\\
&=   I(\mbt{X}^n,\mbt{B}^n; I_0|\mbt{C}^n) + I(\mbt{X}^n; I_0|\mbt{B}^n, \mbt{A}^n) 
\\
&=   I(\mbt{X}^n; I_0|\mbt{C}^n) + I(\mbt{X}^n; I_0|\mbt{B}^n, \mbt{A}^n).
\end{align*} 
By setting $\mbt{C}= f_2(\mbt{B})$ and swapping the role of $\mbt{A}$ and $\mbt{B}$ in the procedure above,  we get the (\textit{submodularity}) conditions.

Now we find the properties of  $\mbt{V}$ and $\mbt{U_{A}}$ that give us the conditions 1)--3) in Theorem \ref{theorem:lowergen} and  the Markov chain property in Definition $5$.
Let $\mbt{A}$, $\mbt{\bar{A}} = (A_1,\ldots, A_s)$ for some $s$ be such that $\mbt{X} \lra \mbt{A} \lra (A_1,\ldots, A_s)$. 
Firstly, since any set of $\mbt{U'}_{\mbt{A}i}$ is independent of $\mbt{X}_i$ and of any set of generalized side information $A_i$'s, all $\mbt{U_{A}}$'s are independent of  $\mbt{X}$ and all $\mbt{A}$'s. Secondly, note that $\mbt{X}_i \lra ( \mbt{V'}_i, \mbt{U'}_{\mbt{A}i}) \lra (\mbt{V'}_i, {U'}_{A_1i}, \ldots {U'}_{A_si}) $ since
\begin{align*}
H(\mbt{V'}_i, \mbt{U'}_{\bar{\mbt{A}}i}|\mbt{V'}_i, \mbt{U'}_{\mbt{A}i},\mbt{X}_i) &= H(\mbt{\bar{A}}^{-}_{i}, \mbt{\bar{A}}^{+}_{i}| I_0, \mbt{A}^{-}_{i}, \mbt{A}^{+}_{i}, \mbt{X}_i)
\\
&= H(\mbt{\bar{A}}^{-}_{i}, \mbt{\bar{A}}^{+}_{i}| I_0, \mbt{A}^{-}_{i}, \mbt{A}^{+}_{i}).
\end{align*}

Then $\mbt{X}\lra (\mbt{V'}, \mbt{U'_{A}}) \lra (\mbt{V'}, {U'}_{A_1}, \ldots {U'}_{A_s})$ implies  $\mbt{X} \lra (\mbt{V}, \mbt{U_{A}})\lra (\mbt{V}, U_{A_1}, \ldots U_{A_s})$.
Furthermore, given  $(\mbt{V}, \mbt{U_{Y_i}})$ and $\mbt{Y_i}$, $i \in [m]$, decoder $i$ can reconstruct the source subject to its own distortion constraint. Lastly, $(\mbt{V}, \mbt{U_{A}}) \lra \mbt{X} \lra (\mbt{A}, \mbt{B})$ since $(\mbt{V'}_i, \mbt{U'}_{\mbt{A}i}) \lra \mbt{X}_i \lra (\mbt{A}_i, \mbt{B}_i) $ for all $i \in [n]$. 
\end{proof}

We can interpret $K(\mbt{A})$ in the LP as the amount of information that
a hypothetical decoder with side information $\mbt{A}$ receives about $\mbt{X}$
from the broadcasted message. We can also view $\mbt{U_A}$ as a quantized representation of the source that the hypothetical decoder can extract from the message with the help of its side information $\mbt{A}$ and $\mbt{V}$ as a common message to all decoders.

The \textit{(submodularity)} condition is so named for the following reason. Let $\mbt{X} = (X_1, \ldots, X_k)$, where $X_i$'s are all independent random variables  and let $\mbt{A} \subseteq \mbt{X}$,  $\mbt{B} \subseteq \mbt{X}$ \footnote{Although $\mb{X}$ is a vector, we can view it as an ordered set which also induces an ordered set structure on the subsets. Hence, we can use the set notation whenever it is convenient.}. Then we can write the \textit{(submodularity)} condition for such $\mbt{A}$ and $\mbt{B}$ as
$K(\mbt{A}) + K(\mbt{B}) \ge K(\mbt{A}\cap \mbt{B}) + K(\mbt{A} \cup \mbt{B})$.
The LP lower bound was inspired by a similar lower bound for the special case of
index coding~\cite{blasiak}. That lower bound does not require auxiliary
random variables, and it is expressed in terms of entropy instead of
mutual information.

\begin{remark}
Evidently other conditions that $I(\mbt{X}^n;I_0|\mbt{A}^n)/n$ must
satisfy in the context of this problem can be 
incorporated into the bound as desired.
\end{remark}

\begin{remark}
The lower bound in Theorem \ref{theorem:lowergen} can be generalized to continuous sources with well-behaved distortion constraints such as Gaussian sources subject to component-wise mean square error (MSE) distortion constraints.
\end{remark}


The lower bound in Theorem~\ref{theorem:lowergen} is not evidently
computable, since the infimum over $K(\cdot)$ is subject to a 
continuum of constraints and there are no cardinality bounds
on the $V$ and $U_\cdot$ variables. We next provide a weakened
lower bound that is computable. For this we need the following
notation.

\begin{notation}
\label{notation:computable}
Let  $\mbt{A} \lra \mbt{B} \lra \mbt{X}$ and $\mbt{D_A} = \{\mbt{D_i} | \mbt{Y_i} \lra \mbt{A} \lra \mbt{X} \}$. Then $R(\mbt{D_A})$ denotes the result of the following optimization problem :
\begin{align*}
\min_{C_\mbt{A}} I(\mbt{B}; \mbt{V} | \mbt{A} )
\end{align*}
where 
\begin{align*}
C_{\mbt{A}} : &\mbt{V} \in \mathcal{C}(\mbt{X}) \mbox{ such that }
\\
&\mbox{ there exists functions } g_i(\mbt{V}, \mbt{Y_i}) \mbox{ such that }
E[d_i(\mbt{X}, g_i(\mbt{V}, \mbt{Y_i}))] \le \mbt{D_i} \mbox{ for all }  \mbt{D_i} \in \mbt{D_A}.
\end{align*}
\end{notation}


\begin{theorem}
\label{theorem:lowergen_comp}
Let $S_{\mbt{A}}$ be a finite set of generalized side information variables $\mbt{A} \in C(\mbt{X})$ and consider the function $K(\cdot)$ over the elements of $S_{\mbt{A}}$. For any $\epsilon > 0$, $R_{lb}(\mb{D} + \epsilon \mb{1})$ is lower bounded by $R'_{lb}(\mb{D} + \epsilon \mb{1})$ where $R'_{lb}(\mb{D} + \epsilon \mb{1})$ is equal to
\begin{align}
\label{lin_prog:relaxed}
 \inf K(\emptyset) - \epsilon,
\end{align}
where the infimum is over all $K(\cdot ) : S_{\mbt{A}} \rightarrow [0, \infty)$ such that
\\
$K(\mbt{X}) = 0  \mbox{  \textit{(initialize)}} $
\\
$K(\mbt{A}) \ge 0, \mbox{ for all } \mbt{A}  \mbox{  \textit{(non-negativity)}}$ 
\\
$K((\mbt{A},\mbt{C})) \ge K((\mbt{B},\mbt{C})), \mbox{ for all } (\mbt{A},\mbt{B}) : (\mbt{B} \succeq \mbt{A} | \mbt{C})  \mbox{ \textit{(monotonicity)}}$
\\
$ K(\mbt{A})  \ge K(\mbt{B}) + R(\mbt{D_A} + \epsilon\mbt{1}), \mbox{ for all } (\mbt{A},\mbt{B})  : \mbt{A} \lra \mbt{B}  \lra \mbt{X} \mbox{ \textit{(monotonicity+)}}$
\\
$K(\mbt{A}) + K(\mbt{B}) \ge K(\mbt{C}) + K((\mbt{A}, \mbt{B})), \mbox{ for all }  (\mbt{A},\mbt{B},\mbt{C}) \mbox{ couplings } : \mbt{B}
\lra \mbt{C} \lra \mbt{A} \mbox{ and } \mbt{C} = f_1(\mbt{A}) \mbox{ or } 
 \mbt{C} = f_2(\mbt{B})  \mbox{ \textit{(submodularity)}}$
\end{theorem}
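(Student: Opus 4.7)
The plan is to show that every feasible triple $(K(\cdot), \mbt{V}, \mbt{U}_{\cdot})$ for the LP in Theorem~\ref{theorem:lowergen} induces a feasible $K(\cdot)\big|_{S_{\mbt{A}}}$ for the LP of Theorem~\ref{theorem:lowergen_comp} with the same objective value $K(\emptyset)-\epsilon$. Since both are minimization problems, this yields $R'_{lb}(\mb{D}+\epsilon\mb{1}) \le R_{lb}(\mb{D}+\epsilon\mb{1})$ and completes the proof.

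The routine step is to observe that the \emph{(initialize)}, \emph{(non-negativity)}, \emph{(monotonicity)}, and \emph{(submodularity)} constraints in Theorem~\ref{theorem:lowergen_comp} are structurally identical to the corresponding constraints of Theorem~\ref{theorem:lowergen}, so restricting the function $K(\cdot)$ to the finite set $S_{\mbt{A}}$ is automatically feasible for them. The \emph{(slope)} constraint of Theorem~\ref{theorem:lowergen} is simply dropped and so places no obligation on $K\big|_{S_{\mbt{A}}}$.

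The substantive step is to verify the new \emph{(monotonicity+)} inequality $K(\mbt{A}) \ge K(\mbt{B}) + R(\mbt{D_{A}} + \epsilon\mbt{1})$ for every coupling $\mbt{A} \lra \mbt{B} \lra \mbt{X}$ with $\mbt{A}, \mbt{B} \in S_{\mbt{A}}$. The original \emph{(monotonicity+)} already furnishes $K(\mbt{A}) - K(\mbt{B}) \ge I(\mbt{B}; \mbt{V}, \mbt{U_{A}} \mid \mbt{A})$, so it is enough to argue that the pair $\mbt{V}' := (\mbt{V}, \mbt{U_{A}})$ is a feasible candidate for the $\mbt{V}'$ in the optimization defining $R(\mbt{D_{A}} + \epsilon\mbt{1})$. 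Membership of $\mbt{V}'$ in $\mathcal{C}(\mbt{X})$ is immediate from the standing coupling. What remains is to produce, for each decoder $i$ with $\mbt{Y_{i}} \lra \mbt{A} \lra \mbt{X}$, a deterministic function $g'_{i}(\mbt{V}, \mbt{U_{A}}, \mbt{Y_{i}})$ achieving expected distortion at most $D_{i}+\epsilon$.

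The main obstacle is constructing this $g'_{i}$ from the $g_{i}(\mbt{V}, \mbt{U_{Y_{i}}}, \mbt{Y_{i}})$ that Theorem~\ref{theorem:lowergen} supplies, since the new decoder cannot see $\mbt{U_{Y_{i}}}$. The idea is to regenerate it. Applying Property 2 of Theorem~\ref{theorem:lowergen} to the Markov chain $\mbt{X} \lra \mbt{A} \lra \mbt{Y_{i}}$ (with $\mbt{B}=\mbt{A}$ and $A_{1}=\mbt{Y_{i}}$) yields a coupling of $(\mbt{V}, \mbt{U_{A}}, \mbt{U_{Y_{i}}})$ such that $\mbt{X} \lra (\mbt{V}, \mbt{U_{A}}) \lra \mbt{U_{Y_{i}}}$. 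Combining this with the standing chain $(\mbt{V}, \mbt{U_{A}}, \mbt{U_{Y_{i}}}) \lra \mbt{X} \lra \mbt{Y_{i}}$ from Definition~6, a short conditioning calculation promotes the first chain to $\mbt{X} \lra (\mbt{V}, \mbt{U_{A}}, \mbt{Y_{i}}) \lra \mbt{U_{Y_{i}}}$. Hence drawing $\mbt{U_{Y_{i}}}^{\ast}$ from $p(\mbt{u_{Y_{i}}} \mid \mbt{V}, \mbt{U_{A}}, \mbt{Y_{i}})$ independently of $\mbt{X}$ and feeding it to $g_{i}$ produces a randomized decoder whose joint law with $\mbt{X}$ matches the original, and therefore inherits the distortion bound $D_{i}+\epsilon$; the optimal deterministic function of $(\mbt{V}, \mbt{U_{A}}, \mbt{Y_{i}})$ then serves as $g'_{i}$. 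I expect this coupling-and-derandomization step to be the only nontrivial piece of the argument.
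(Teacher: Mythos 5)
Your proposal is correct and takes essentially the same route as the paper's proof: restrict the LP of Theorem~\ref{theorem:lowergen} to the finite family $S_{\mbt{A}}$, drop the \textit{(slope)} constraint, and observe that the original \textit{(monotonicity+)} implies the new one because $I(\mbt{B};\mbt{V},\mbt{U_A}\mid\mbt{A}) \ge R(\mbt{D_A}+\epsilon\mbt{1})$ by conditions 2) and 3) of Theorem~\ref{theorem:lowergen}. Your coupling-and-regeneration construction of $g'_i$ correctly fills in the details that the paper asserts by simply citing those two conditions.
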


\begin{proof}[Proof of Theorem \ref{theorem:lowergen_comp}]
Let $\epsilon >0$, $\mbt{V} \in \mathcal{C}(\mbt{X})$, and $\mbt{U}. : \mathcal{C}(\mbt{X}) \rightarrow \mathcal{C}(\mbt{X},\mbt{V})$ satisfying the conditions 1)--3) in Theorem \ref{theorem:lowergen} be given. Also, let LP1 be the linear program in Table \ref{table1} when $\mbt{A}$, $\mbt{B}$ and $\mbt{C}$ are in $S_{\mbt{A}}$ and let the solution of LP1 be denoted by $\bar{R}^{LP}_{lb}(\epsilon)$.
Then $R^{LP}_{lb}(\epsilon)$ in Theorem \ref{theorem:lowergen} is lower bounded by
$\bar{R}^{LP}_{lb}(\epsilon)$. Therefore it is enough to show $\bar{R}^{LP}_{lb}(\epsilon) \ge R'_{lb}(\mb{D} + \epsilon \mb{1})$.  Note that the constraints in LP1 and the LP in Theorem \ref{theorem:lowergen_comp}, denoted by LP2,  are the same except the \textit{monotonicity+} condition is different and there is no $slope$ condition in LP2. But for any $\mbt{A} \lra \mbt{B} \lra \mbt{X}$
the \textit{monotonicity+} condition in LP1 implies the \textit{monotonicity+} condition in LP2 since $I(\mbt{B}; \mbt{V}, \mbt{U_A} | \mbt{A}) \ge 
R(\mbt{D_A} + \epsilon\mbt{1})$ by condition 2) and 3) in Theorem \ref{theorem:lowergen}. Hence, $\bar{R}^{LP}_{lb}(\epsilon) \ge R'_{lb}(\mb{D} + \epsilon \mb{1})$.
\end{proof}
Note that $R'_{lb}(\mb{D} + \epsilon\mb{1})$ is computable since we have a finite number of constraints in the LP and each $R(\mb{D_A})$ can be computed by finding a cardinality constraint on the auxiliary random variable $\mbt{V}$ using standard techniques \cite{Csiszar}. 

\section{Comparison with Other Bounds}
\label{sec:subsumes}
\subsection{Upper Bound}

Although there are achievable schemes for various forms of
rate-distortion with side information (e.g. \cite{watanabe},\cite{sinem_index},\cite{timo},\cite{berger}), most are for special cases of the problem. The two exceptions, both of which purport to provide achievable schemes for the general problem considered here, are Heegard and Berger~\cite{berger} and Timo \emph{et al.}~\cite{timo}. Heegard and Berger's achievable rate was shown to be incorrect
via a counterexample by Timo \emph{et al.}, who also to provided a corrected
scheme. In fact, the proof of Timo \emph{et al.}'s achievable result 
contains an error that is similar to that of Heegard and Berger.
To see this, let us state Timo \emph{et al.}'s achievable 
result.\footnote{This problem also afflicts Theorem 1 in Timo \emph{et al.},
although we shall focus our discussion on Theorem 2 of that paper,
which is simpler and directly comparable to Theorem~\ref{thm:gen_ach} 
in the present paper.}
\begin{notation}
$\bar{v} = \set{S}_1, \ldots, \set{S}_{2^m-1}$ denotes an ordered list of all possible nonempty subsets of $[m]$, where each $\set{S}_i$ denotes a different subset such that $|\set{S}_i| \ge |\set{S}_j|$ for all $i  < j$.  $\bar{\set{V}}$ denotes the set of all possible such $\bar{v}$. 
\end{notation}

\begin{notation}
\begin{align*}
U^{-'}_{\set{S}_j} &= \Big\{U_{\set{S}_i} \in \set{U}\ |\ i < j,\ \set{S}_i \nsupseteq \set{S}_j  \Big\},
\\
U^{\supset}_{\set{S}_j} &=\Big\{ U_{\set{S}_i} \in \set{U}\ |\ \set{S}_i \supset \set{S}_j \Big\},
\\
U^{+}_{\set{S}_j} &= \Big\{U_{\set{S}_k} \in \set{U}\ |\ k > j,\ \set{S}_k \cap \set{S}_j \neq \emptyset \Big\}, 
\\
U^{\dag}_{\set{S}_j} &= \left\{ U_{\set{S}_i} \in U^{-'}_{\set{S}_j} |
  \begin{array}{ll}
    \exists U_{\set{S}_k} \in U^+_{\set{S}_j}, \\
    \set{S}_i \cap \set{S}_k \neq \emptyset
  \end{array}\right\}, \mbox{ and } 
  \\
  U^\ddag_{\set{S}_j,l} &= \Big\{U_{\set{S}_i} \in U^\dag_{\set{S}_j}\ :\ \set{S}_i\ni l \Big\}\ \text{ when } l \in \set{S}_j.
\end{align*}

\end{notation}
 
\begin{claim}[Theorem 2,\cite{timo}]
\label{thm:ach_timo}
The rate-distortion function $R(\mb{D})$ is upper bounded by 
\begin{align}
\label{eq:ach_timo}
&R^T_{ach}(\mb{D}) = \min_{\bar{v} \in \bar{\set{V}}} \inf_{C_{ach,v}(\mb{D})}
\inf_{C^{LP}_{T}}  \sum^{2^m-1}_{j = 1} R_{\set{S}_j} ,
\end{align}
where $C_{ach,\bar{v}}(\mb{D})$ is as in Theorem \ref{thm:gen_ach}
and
\begin{align}
 C^{LP}_{T} :\  & 1) R_{\set{S}_j} \ge 0,   R'_{\set{S}_j} \ge 0  \mbox{ for all } j \in [2^m-1] 
\notag \\
& 2) R_{\set{S}_j} \ge I\big(X,U^\dag_{\set{S}_j},U^\supset_{\set{S}_j};U_{\set{S}_j}\big) - R'_{\set{S}_j}
\mbox{ for all } j \in [2^m-1]
\notag \\
& 3)  R'_{\set{S}_j} \le\min_{l \in \set{S}_j}
I\big(U_{\set{S}_j};U^\ddag_{\set{S}_j,l},U^\supset_{\set{S}_j},Y_{l}\big) \mbox{ for all  } j \in [2^m -1].
\notag 
\end{align}
\end{claim}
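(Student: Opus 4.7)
The plan is a random binning argument combined with sequential-then-simultaneous decoding, processing the auxiliaries in the order $\bar v \in \bar{\set{V}}$, i.e., from larger subsets to smaller ones. Each $U_{\set{S}_j}$ will serve as a sub-message destined for every decoder $l$ with $l \in \set{S}_j$, and the ordering is chosen so that by the time decoder $l$ needs $U_{\set{S}_j}$, all auxiliaries indexed by sets strictly containing $\set{S}_j$ have already been decoded. For each $\set{S}_j$ in order, and for every realization of the codewords previously drawn for the auxiliaries in $U^\dag_{\set{S}_j} \cup U^\supset_{\set{S}_j}$, I would generate $2^{n(R_{\set{S}_j}+R'_{\set{S}_j})}$ conditionally i.i.d.\ codewords and partition them uniformly at random into $2^{nR_{\set{S}_j}}$ bins of size $2^{nR'_{\set{S}_j}}$.

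Given $x^n$, the encoder would apply the conditional (or multivariate) covering lemma in the same order to select $(u^n_{\set{S}_1},\ldots,u^n_{\set{S}_{2^m-1}})$ jointly typical with $x^n$; the standard analysis would require
\[
R_{\set{S}_j}+R'_{\set{S}_j} \ge I\!\left(X,U^\dag_{\set{S}_j},U^\supset_{\set{S}_j};U_{\set{S}_j}\right),
\]
which, after rearrangement, is condition 2) of $C^{LP}_T$. Only the bin indices $m_{\set{S}_j}$ would be transmitted, giving the claimed sum rate. Decoder $l$ would then, in the same order, seek within bin $m_{\set{S}_j}$ a codeword $u^n_{\set{S}_j}$ jointly typical with $Y_l^n$, with the already-decoded members of $U^\supset_{\set{S}_j}$, and with the already-decoded members of $U^\ddag_{\set{S}_j,l}$. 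A single-message packing argument would then deliver
\[
R'_{\set{S}_j} \le I\!\left(U_{\set{S}_j};U^\ddag_{\set{S}_j,l},U^\supset_{\set{S}_j},Y_l\right)
\]
for each $l \in \set{S}_j$, matching condition 3). Once $U_{\set{D}_l}$ is recovered, decoder $l$ outputs $\wh{X}_l = g_l(U_{\set{D}_l},Y_l)$ as guaranteed by $C_{ach,\bar v}(\mb{D})$, and the typical average lemma yields the distortion constraint within $\epsilon$.

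The principal obstacle, and the step most vulnerable to an overly optimistic analysis, is the treatment of the decoder's error events. Writing one packing bound per message, as in condition 3), implicitly assumes that the codewords in $U^\ddag_{\set{S}_j,l}$ on which decoder $l$ conditions are genuinely the true ones, rather than a spurious but still jointly-typical list produced at an earlier decoding stage. A fully simultaneous analysis instead produces one packing inequality for every nonempty subset of the messages that decoder $l$ must identify; collapsing that family down to a single-message bound per $(\set{S}_j,l)$ requires either a genuine dominance relation between error events or a sequential ordering that avoids the joint ambiguities entirely. Showing that the constraint in condition 3) really does cover the combined error events is, I expect, the crux of the proof, and it is precisely where the discussion promised in this section suggests the argument underlying Claim \ref{thm:ach_timo} may break down.
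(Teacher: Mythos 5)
There is a genuine gap here, and it sits one stage earlier than where you place it. The covering inequality you invoke at stage $j$, namely $R_{\set{S}_j}+R'_{\set{S}_j}\ge I\big(X,U^\dag_{\set{S}_j},U^\supset_{\set{S}_j};U_{\set{S}_j}\big)$, rests on a conditional covering argument that is valid only if the sequences being conditioned on --- $x^n$ together with the previously selected codewords for $U^\dag_{\set{S}_j}\cup U^\supset_{\set{S}_j}$ --- are themselves jointly typical. The encoder you describe (which is exactly Timo \emph{et al.}'s) never enforces this: at an earlier stage $i<j$ the codeword for $U_{\set{S}_i}$ is chosen to be typical only with $x^n$ and with the codewords in $U^\dag_{\set{S}_i}\cup U^\supset_{\set{S}_i}$, so two codewords selected at different earlier stages need not be jointly typical with each other. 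The paper makes this concrete with six decoders, the order $\bar v$ listing $\{1,2\},\{5,6\},\{3,4\},\{2,3\},\{4,5\}$ among the pairs, and only the messages $U_{\{i,i+1\}}$, $i\in[5]$, nontrivial: when $U_{\{3,4\}}$ is encoded one must condition on $U^\dag_{\{3,4\}}=(U_{\{1,2\}},U_{\{5,6\}})$, yet $U_{\{1,2\}}\notin U^\dag_{\{5,6\}}\cup U^\supset_{\{5,6\}}$ and $U_{\{5,6\}}\notin U^\dag_{\{1,2\}}\cup U^\supset_{\{1,2\}}$, so nothing in the construction guarantees that those two codewords are jointly typical, and the covering lemma (Lemma 3 of Timo \emph{et al.}) cannot be applied at that stage. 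The decoder-side concern you raise about collapsing joint error events into per-message packing bounds is secondary; the failure identified in the paper is in this encoding/covering step.

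Consequently you should not expect to complete a proof along this route, and indeed the paper does not: the result is stated only as a Claim precisely because the published proof contains this error (of the same nature as Heegard and Berger's), and the paper notes that whether the claimed rate is achievable in general remains open --- it is known to be valid only for two decoders, where the conflict above cannot arise. The paper's own achievable scheme (Theorem \ref{thm:gen_ach}) repairs the issue by requiring each codeword to be jointly typical with \emph{all} previously selected codewords, charging $I\big(X,U^-_{\set{S}_j};U_{\set{S}_j}\big)$ in condition 2) of $C^{LP}_{ach}$ accordingly, and by using simultaneous decoding, which is why its condition 3) consists of a family of constraints over all subsets $\set{D}'_l\subseteq\set{D}_l$ rather than a single packing bound per message; Lemma \ref{lemma:ach_2dec} then recovers the two-decoder form of the claimed bound as a special case.
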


The proof given by 
Timo \emph{et al.} proceeds as follows.
Let $\bar{v} \in \bar{\set{V}}$ be given. The codebook generation is the same as in the proof of Theorem \ref{thm:gen_ach}. Encoding is almost the same except that  at each stage $j$, we select a codeword that is jointly typical with only those already-selected codewords that correspond to 
the messages $U^\dag_{\set{S}_j},U^\supset_{\set{S}_j}$ and the source, instead of messages $U^-_{\set{S}_j}$ and the source as in Theorem~\ref{thm:gen_ach}.
This creates an issue, however, because if the encoding proceeds
in this fashion then there is no guarantee that the
variables $U^\dag_{\set{S}_j},U^\supset_{\set{S}_j}$ are themselves
jointly typical.

To illustrate this, consider the case in which there are six decoders and suppose that $\bar{v}=[6], \ldots,\{1,2\}, \{5,6\},\{3,4\},\{2,3\},\{4,5\},\{6\},\{5\}$, $\{4\},\{3\},\{2\},\{1\}$. Choose $\set{U}$ such that all $U_{\set{S}_j} = \emptyset$ except $U_{\{i,i+1\}}$, for $i \in [5]$. Then the encoding order of the nontrivial messages is $(U_{\{1,2\}}, U_{\{5,6\}},U_{\{3,4\}},U_{\{2,3\}},U_{\{4,5\}})$. When the message $U_{\{3,4\}}$ is encoded, the encoder selects a codeword that is jointly typical with the codewords related to messages $ U^\dag_{\{3,4\}}= (U_{\{1,2\}},U_{\{5,6\}})$ and the source (note that $U^\supset_{\{3,4\}} = \emptyset$).
However, in previous stages $U_{\{1,2\}}$ and $U_{\{5,6\}}$ were not selected in a way that guarantees that they are jointly typical, since  $U_{\{1,2\}} \notin \{ U^\dag_{\{5,6\}} \cup U^\supset_{\{5,6\}} \}$ and $U_{\{5,6\}} \notin \{U^\dag_{\{1,2\}} \cup U^\supset_{\{1,2\}} \}$.
The rate analysis in Timo \emph{et al.}, specifically the use of Lemma~3 
in that 
paper, presumes that the codewords corresponding to $U_{\{1,2\}}$ and 
$U_{\{5,6\}}$ are jointly typical when the codeword for $U_{\{3,4\}}$
is chosen. This error is similar to the one in Heegard and 
Berger~\cite{berger}.\footnote{Unlike the Heegard-Berger
result, however, the rate promised by Timo \emph{et al.}'s achievable 
result is not known to be unachievable in general at this point.}
For the two-decoder case, this issue does not arise, and the
Timo \emph{et al.} rate is indeed achievable, as is that of Heegard and 
Berger.

This error could be fixed in several ways. Our scheme in 
Theorem \ref{thm:gen_ach} avoids this issue by requiring that
each codeword be jointly typical with all of the previously-selected
codewords. If a certain pair of auxiliary random variables never 
appear together in any of the mutual information expressions, then
one can impose a conditional independence condition between them
without loss of generality, which is tantamount, from a rate perspective,
to not requiring that they be chosen in a way that ensures their
joint typicality.

 Our scheme in Theorem \ref{thm:gen_ach} differs from the achievable scheme in \cite{timo} in two other respects as well. We do not require that the
sets in $v$ be ordered so that their cardinalities are nonincreasing.
Arguably the most notable difference is in the decoding. While 
in \cite{timo}, each decoder decodes its messages sequentially in the same 
order that they are encoded, in our scheme we apply simultaneous decoding, i.e., we decode all messages for decoder $i$ together. We shall see later,
when discussing the odd-cycle index coding problem in 
Section~\ref{sec:odd_cycle_gauss}, that for a given
class of auxiliary random variables, simultaneous decoding can yield
a strict rate improvement.

We conclude this subsection by showing that for the two-decoder case in which Claim \ref{thm:ach_timo} is valid, the upper bound in \cite{timo} is 
no worse than that of Theorem~\ref{thm:gen_ach}.
 
\begin{lemma}
\label{lemma:ach_2dec}
When there are two decoders, $R_{ach}(\mb{D})$ is upper bounded by 
\begin{align}
R^T(\mb{D})  = \min_{C_{ach,v}(\mb{D})} \max_{i \in  \{1,2\}}\{I(X; U_{\{1,2\}} | Y_i)\} + I(X; U_{\{1\}} | U_{\{1,2\}}, Y_1) +  I(X; U_{\{2\}} | U_{\{1,2\}}, Y_2),
\label{eq:Timo:two}
\end{align}  
where $C_{ach,v}(\mb{D})$ is in Theorem \ref{thm:gen_ach}.
\end{lemma}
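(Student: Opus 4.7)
The plan is to show $R'_{ach}(\mb{D}) \le R^T(\mb{D})$, from which $R_{ach}(\mb{D}) = \ch(R'_{ach}(\mb{D})) \le R^T(\mb{D})$ follows immediately. Fix $p \in C_{ach,v}(\mb{D})$ and, without loss of generality, assume the maximum in $R^T$ is attained at $i=2$, equivalently $I(U_{\{1,2\}};Y_1) \ge I(U_{\{1,2\}};Y_2)$.

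The key observation is a symmetrization. In the $R^T$ expression, $U_{\{1\}}$ and $U_{\{2\}}$ never appear jointly. I would therefore pass to a new joint distribution $\tilde p$ in which $\tilde U_{\{1\}}$ and $\tilde U_{\{2\}}$ are conditionally independent given $(X,U_{\{1,2\}})$ with the same conditional marginals $p(u_{\{i\}}\mid x,u_{\{1,2\}})$ as under $p$. Because each individual marginal is preserved, the Markov chain $\tilde{\set{U}} \lra X \lra (Y_1,Y_2)$ and the distortion constraints remain in force, so $\tilde p \in C_{ach,v}(\mb{D})$; and each of the four mutual-information terms in $R^T$ has the same value under $\tilde p$ as under $p$.

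Choose the ordering $v = (\{1,2\},\{1\},\{2\})$ and set
\begin{align*}
R_{\{1,2\}} &= I(X;U_{\{1,2\}}\mid Y_2), & R'_{\{1,2\}} &= I(U_{\{1,2\}};Y_2), \\
R_{\{i\}} &= I(X;\tilde U_{\{i\}}\mid U_{\{1,2\}},Y_i), & R'_{\{i\}} &= I(\tilde U_{\{i\}};U_{\{1,2\}},Y_i),\ i\in\{1,2\}.
\end{align*}
By direct chain-rule expansion these values satisfy condition 2 of $C^{LP}_{ach}$ with equality, and their total equals the $R^T$ expression evaluated at $p$. The expansion for $R_{\{2\}}$ crucially uses $I(\tilde U_{\{1\}};\tilde U_{\{2\}}\mid X,U_{\{1,2\}})=0$, which is precisely what the symmetrization supplies; without it, $R_{\{2\}}$ would exceed $I(X;\tilde U_{\{2\}}\mid U_{\{1,2\}},Y_2)$ by the unwanted slack $I(U_{\{1\}};U_{\{2\}}\mid X,U_{\{1,2\}})$.

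It remains to check condition 3 for each decoder $l$ and each $\set{D}'_l \subseteq \set{D}_l$. The singleton constraints on $R'_{\{i\}}$ hold with equality by construction; the singleton constraint on $R'_{\{1,2\}}$ and the two-element constraint at decoder $l$ both reduce to the inequality $R'_{\{1,2\}} = I(U_{\{1,2\}};Y_2) \le I(U_{\{1,2\}};Y_l)$, which is trivial for $l=2$ and holds for $l=1$ by the WLOG hypothesis. The symmetrization step is the only conceptually nontrivial ingredient and is the likely main obstacle to extending this style of argument to more decoders; the rest is a routine verification of six LP inequalities.
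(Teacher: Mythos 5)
Your proposal is correct and follows essentially the same route as the paper's proof: both exploit that $U_{\{1\}}$ and $U_{\{2\}}$ never appear together in (\ref{eq:Timo:two}) to impose $U_{\{1\}}\perp U_{\{2\}}\mid X,U_{\{1,2\}}$ without loss of optimality, fix the order $v=(\{1,2\},\{1\},\{2\})$, and exhibit the same feasible LP point $R'_{\{1,2\}}=\min_i I(U_{\{1,2\}};Y_i)$, $R'_{\{i\}}=I(U_{\{i\}};U_{\{1,2\}},Y_i)$, with the rate sums saturating condition 2, the conditional independence killing the residual term $I(U_{\{1\}};U_{\{2\}}\mid X,U_{\{1,2\}})$. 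Your verification of the condition-3 inequalities is just a more explicit spelling-out of the paper's feasibility claim, so the two arguments coincide in substance.
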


\begin{proof}[Proof of Lemma \ref{lemma:ach_2dec}]
Firstly notice that $U_{\{1\}}$ and $U_{\{2\}}$ never appear together 
on the right-hand side of~(\ref{eq:Timo:two}). Hence without loss of 
optimality we can add the condition 
$U_{\{1\}} \perp U_{\{2\}} | X, U_{\{1,2\}}$ to $C_{ach,v}(\mb{D})$.
Let  $v = \{\{1,2\}, \{1\}, \{2\} \}$ and $U_{\set{S}_j} \in 
C_{ach,v}(\mb{D})$ with $U_{\{1\}} \perp U_{\{2\}} | X, U_{\{1,2\}}$.
From the LP conditions in Theorem \ref{thm:gen_ach}, we can write
\begin{align}
&R_{\{1,2\}} + R'_{\{1,2\}} \ge I(X; U_{\{1,2\}})
\\
& R_{\{1\}} + R'_{\{1\}} \ge I(X, U_{\{1,2\}};  U_{\{1\}})
\\
& R_{\{2\}} + R'_{\{2\}} \ge I(X, U_{\{1,2\}}, U_{\{1\}};  U_{\{2\}})
\\
& R'_{\{i\}} \le I( U_{\{i\}}; U_{\{1,2\}},  Y_i), \mbox{ for all } i \in \{1,2\}
\\
& R'_{\{1,2\}} \le \min_{ i \in \{1,2\}} \{ I( U_{\{1,2\}}; U_{\{i\}}, Y_i ) \}
\\
& R'_{\{1,2\}} + R'_{\{i\}}  \le I( U_{\{1,2\}};Y_i) + I(U_{\{i\}}  ; U_{\{1,2\}},Y_i), \mbox{ for all } i \in \{1,2\}.
\end{align}
Then  $R'_{\{1,2\}} = \min_{ i \in \{1,2\}} \{ I( U_{\{1,2\}}; Y_i ) \}$,
$R'_{\{i\}} =  I( U_{\{i\}}; U_{\{1,2\}},  Y_i)$, 
$R_{\{1,2\}} + R'_{\{1,2\}} = I(X; U_{\{1,2\}})$,
$ R_{\{1\}} + R'_{\{1\}} = I(X, U_{\{1,2\}};  U_{\{1\}})$, and
$ R_{\{2\}} + R'_{\{2\}} = I(X, U_{\{1,2\}}, U_{\{1\}};  U_{\{2\}})$ are feasible choices enabling us to  upper bound  $\inf_{C^{LP}_{ach}} \sum^3_{j =1} R_{\set{S}_j}$ in (\ref{eq:gen_ach}) by
\begin{align}
\max_{i \in  \{1,2\}}\{I(X; U_{\{1,2\}} | Y_i)\} + I(X; U_{\{1\}} | U_{\{1,2\}}, Y_1) +  I(X; U_{\{2\}} | U_{\{1,2\}}, Y_2) + I(U_{\{1\}} ; U_{\{2\}}| X, U_{\{1,2\}}),
\end{align}
which is equal to the mutual information expression in Lemma \ref{lemma:ach_2dec} when 
$U_{\{1\}} \perp U_{\{2\}} | X, U_{\{1,2\}}$. Therefore, $R^T(\mb{D}) \ge R_{ach}(\mb{D})$.
\end{proof}

\subsection{Lower Bounds}
\subsubsection{\textit{minimax}-type Lower Bound}
First we compare the general lower bound, $R_{lb}(\mb{D}+\epsilon\mb{1})$, with the minimax version of the  lower bound in \cite{sinem_index}.  For completeness, we state the minimax version of the theorem below. 
\begin{theorem}
\label{theorem:minmax}
Let the pmf's $p(x, y_i)$ for all $i \in [m]$ be given. Then $R(\mb{D})$ is lower bounded by
 \begin{align}
 \label{3lower_general}
 &R^m_{lb}(\mb{D} + \epsilon \mb{1}) = \sup_{\bar{P}} \inf_{\bar{C}} \bar{R}_{lb} - \epsilon,
\\
\label{4lower_general}
\mbox{where }&\bar{R}_{lb} =\max_{\sigma} \big{[} I(X; V, U_{Y_\sigma(1)}|Y_{\sigma(1)})
 \notag \\
& \quad + I( X ; U_{Y_\sigma(2)}| V, U_{Y_\sigma(1)},Y_{\sigma(1)},Y_{\sigma(2)}) +\cdots
\notag \\
&\quad + I( X ; U_{Y_\sigma(m)}| V, U_{Y_\sigma(1)},\ldots, U_{Y_\sigma(m-1)},{Y}) \big{]},
\end{align}
${Y} = (Y_{\sigma(1)}, \ldots,Y_{\sigma(m)})$, and
 \\
1)  $\bar{P}=\{ p(x,y_1, \ldots, y_m) | \sum_{\\
y_j : j\neq i  }p(x,y_1, \ldots, y_m)$ $= p(x,y_i), \forall i \in [m]\}$.
\\
2) $\bar{C}$ denotes the set of $(V, U_{Y_1},\ldots,U_{Y_m})$ jointly distributed with $X, Y_1,\ldots,Y_m$ such that
\\
$ (Y_1,\ldots, Y_m) \lra X \lra ( V, U_{Y_1}, \ldots, U_{Y_m})$
and there exists functions $g_1, \ldots, g_m$ with the property that
 \\
$\mathbb{E}[d_i(X,g_{i}( V, U_{Y_i}, Y_i))] \le D_i
+ \epsilon, \forall i \in [m]$.
\\
3) $\sigma(.)$ denotes a permutation on integers $[m]$.
\end{theorem}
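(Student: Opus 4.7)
The plan is to mirror the classical single-letterization converse (as used for Wyner--Ziv and Heegard--Berger), extended to $m$ decoders with an arbitrary decoding order $\sigma$. Fix a $\mb{D}$-achievable rate $R$, and for $\epsilon>0$ take a code of blocklength $n$ with $\log M \le n(R+\epsilon)$ and distortion at most $D_i+\epsilon$ at decoder~$i$. Because the code interacts only with the marginals $p(x,y_i)$, I am free to impose any joint distribution $p(x,y_1,\ldots,y_m) \in \bar{P}$ consistent with those marginals, and may assume $(X^n, Y_1^n,\ldots,Y_m^n)$ is drawn i.i.d.\ under this chosen joint. This handles the $\sup_{\bar{P}}$.

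Next I would identify the auxiliaries, in the spirit of the proof of Theorem~\ref{theorem:lowergen}. Let $T$ be uniform on $[n]$ and independent of $(X^n, Y^n, I_0)$, and set $V = (I_0, X^{T-1}, T)$ and $U_{Y_j} = (Y_j^{T-1}, Y_{j,T+1}^n)$ for each $j \in [m]$. The i.i.d.\ structure, together with the fact that $I_0$ is a function of $X^n$, yields the Markov chain $(Y_1,\ldots,Y_m) \lra X \lra (V, U_{Y_1},\ldots,U_{Y_m})$ at time $T$. Since $(V, U_{Y_i}, Y_{i,T})$ determines $(I_0, Y_i^n, T)$, each decoder's per-symbol reconstruction $\wh{X}_{i,T}$ is a function of this triple, and averaging over $T$ lifts the code's distortion bound to the single-letter constraint $E[d_i(X, g_i(V, U_{Y_i}, Y_i))] \le D_i+\epsilon$, placing $(V, U_\cdot)$ in $\bar{C}$.

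For each permutation $\sigma$ I would then carry out a chain-rule expansion starting from $n(R+\epsilon) \ge H(I_0) \ge I(X^n; I_0 \mid Y_{\sigma(1)}^n)$ and successively conditioning on $Y_{\sigma(2)}^n,\ldots,Y_{\sigma(m)}^n$ in the prescribed order. Using the i.i.d.\ conditional independence $X_i \perp (X^{i-1}, Y_{\sigma(\le j)}^{-i}) \mid Y_{\sigma(\le j),i}$, each per-coordinate term single-letterizes through $T$ to exactly one summand of
\[
I(X; V, U_{Y_{\sigma(1)}} \mid Y_{\sigma(1)}) + \sum_{j=2}^{m} I(X; U_{Y_{\sigma(j)}} \mid V, U_{Y_{\sigma(<j)}}, Y_{\sigma(\le j)}).
\]
Since the \emph{same} $(V, U_\cdot)$ is used for every $\sigma$, taking the max over $\sigma$, then the infimum over $\bar{C}$, and finally the supremum over $\bar{P}$ yields $R \ge R^m_{lb}(\mb{D}+\epsilon\mb{1})$.

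The main obstacle is making a single auxiliary choice work for every $\sigma$: standard Heegard--Berger-style converses often tailor the identification to one ordering, so here I must verify that the ``past-plus-future of each $Y_j$'' structure (together with the past of $X$ lumped into $V$) is rich enough to recover each chain-rule expansion and that, at each stage of the telescoping, residual cross terms involving $Y_{\sigma(>j)}$ either vanish or get absorbed via the $(V,U_\cdot) \perp Y \mid X$ Markov property and the i.i.d.\ conditional independence $X_i \perp Y^{-i} \mid Y_i$. A secondary subtlety is that the chosen joint $p \in \bar{P}$ may correlate the $Y_j$'s in an essentially arbitrary way, which must be accommodated without disturbing the coordinate-wise independence relations used in the single-letterization.
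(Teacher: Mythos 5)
Your setup is right in two respects: the reduction that handles $\sup_{\bar{P}}$ (the code only sees the marginals $p(x,y_i)$, so any coupling of the $Y_j$'s may be imposed for the converse) and the verification that your identified auxiliaries satisfy the Markov and distortion conditions of $\bar{C}$. The gap is in the step you yourself flag as ``the main obstacle'': you assert, but do not establish, that after telescoping $I(X^n;I_0\mid Y_{\sigma(1)}^n)=\sum_{j\ge 2} I(Y_{\sigma(j)}^n;I_0\mid Y_{\sigma(1)}^n,\ldots,Y_{\sigma(j-1)}^n)+I(X^n;I_0\mid Y_{\sigma(1)}^n,\ldots,Y_{\sigma(m)}^n)$ each piece single-letterizes to one summand of the stated bound, and with your particular choice $V=(I_0,X^{T-1},T)$ this step actually goes the wrong way. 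Expanding the $j$-th piece over time, memorylessness gives $I(Y_{\sigma(j)}^n;I_0\mid Y_{\sigma(<j)}^n)=\sum_t I\bigl(Y_{\sigma(j),t};\,I_0,Y_{\sigma(j)}^{t-1},Y_{\sigma(<j)}^{-t}\,\big|\,Y_{\sigma(<j),t}\bigr)$, whereas the single-letter term your auxiliaries produce is $\sum_t I\bigl(Y_{\sigma(j),t};\,I_0,X^{t-1},Y_{\sigma(<j)}^{-t}\,\big|\,Y_{\sigma(<j),t}\bigr)$; since, given $(X^{t-1},Y_{\sigma(<j)}^n)$, the block $Y_{\sigma(j)}^{t-1}$ is conditionally independent of $(I_0,Y_{\sigma(j),t})$, the latter quantity is at least as large as the former, not at most. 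So the $n$-letter expression does not dominate $\bar{R}_{lb}$ evaluated at your auxiliaries, and since the theorem is an infimum over $\bar{C}$, your proof needs exactly that domination for the specific auxiliaries you exhibit. Including the source past $X^{T-1}$ in $V$ is what breaks this: already the first summand then equals $\tfrac1n I(X^n;I_0\mid Y_{\sigma(1)}^n)$ with no slack, leaving nothing to absorb the remaining (generally positive) terms except $\tfrac1n I(I_0;Y_{\sigma(1)}^n)$, which is not shown to suffice.

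The repair is to take $V=(I_0,T)$ and $U_{Y_j}=(Y_j^{T-1},Y_{j,T+1}^n,T)$ (no $X$-past), exactly the identification used in the paper's proof of Theorem~\ref{theorem:lowergen}; then every single-letterization is a pure ``drop conditioning'' inequality in the favorable direction: each piece $I(Y_{\sigma(j)}^n;I_0\mid Y_{\sigma(<j)}^n)\ge n\,I(Y_{\sigma(j)};V,U_{Y_{\sigma(1)}},\ldots,U_{Y_{\sigma(j-1)}}\mid Y_{\sigma(<j)})$ and $I(X^n;I_0\mid Y_{\sigma(1)}^n,\ldots,Y_{\sigma(m)}^n)\ge n\,I(X;V,U_{Y_{\sigma(1)}},\ldots,U_{Y_{\sigma(m)}}\mid Y_{\sigma(1)},\ldots,Y_{\sigma(m)})$. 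A second, smaller omission in your sketch: even after this, the pieces do not match the stated summands term by term; one still needs the chain-rule identity (the $\Gamma_1=\cdots=\Gamma_m$ manipulation in the paper's proof of Theorem~\ref{theorem:subs_minmax}, which uses the Markov chain $(Y_1,\ldots,Y_m)\lra X\lra(V,U_{Y_1},\ldots,U_{Y_m})$) to convert the telescoped $(Y_j;\cdot)$-form into the $(X;\cdot)$-form appearing in \eqref{4lower_general}. Note also that in this paper Theorem~\ref{theorem:minmax} is quoted from \cite{sinem_index} and is recovered as a consequence of Theorems~\ref{theorem:lowergen} and~\ref{theorem:subs_minmax}; your intended direct converse is essentially that route, but it requires the auxiliary choice and the form-conversion step above, neither of which your proposal supplies.
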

The minimax lower bound in Theorem \ref{theorem:minmax}  is the state-of-the-art for the general rate-distortion problem with side information at multiple decoders.
Note that in Theorem \ref{theorem:minmax}, one can absorb $V$ into $U_{Y_i}$, $i \in [m]$ without loss of optimality. For the ease of comparison with $R_{lb} (\mb{D} + \epsilon \mb{1})$ we leave it as a separate variable, however.
\begin{theorem}
\label{theorem:subs_minmax}
$R_{lb} (\mb{D} + \epsilon \mb{1})\ge R^m_{lb}(\mb{D} + \epsilon \mb{1})$, where $\epsilon > 0$.
\end{theorem}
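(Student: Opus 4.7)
The plan is to show that any feasible triple $(\mathbf{V}, \mathbf{U}_\cdot, K)$ for the LP bound $R_{lb}(\mathbf{D}+\epsilon\mathbf{1})$ satisfies $K(\emptyset) - \epsilon \ge R^m_{lb}(\mathbf{D}+\epsilon\mathbf{1})$; taking infimum over such triples then yields the theorem.

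I first rearrange the quantifiers of $R^m_{lb} = \sup_{p \in \bar{P}} \inf_{\bar{C}} \bar{R}_{lb}(p,\bar{C}) - \epsilon$. It is enough to exhibit, for every $p \in \bar{P}$, some $\bar{C}^* \in \bar{C}(p)$ with $\bar{R}_{lb}(p,\bar{C}^*) \le K(\emptyset)$; then $\inf_{\bar{C}} \bar{R}_{lb}(p,\bar{C}) \le K(\emptyset)$ for every $p$, and the sup passes through. The natural choice borrows from the LP data: set $V^* = \mathbf{V}$ and $U^*_{\mathbf{Y}_i} = \mathbf{U}_{\mathbf{Y}_i}$. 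Condition~2 of Theorem~\ref{theorem:lowergen} (with $\mathbf{B}=\mathbf{X}$ and $A_i = \mathbf{Y}_i$) supplies a coupling of these variables with $(\mathbf{X},\mathbf{Y}_1,\dots,\mathbf{Y}_m)$ consistent with $p$ and the Markov chain required by $\bar{C}$; condition~3 supplies the distortion constraints. Thus $\bar{C}^* \in \bar{C}(p)$.

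The heart of the argument is to bound $\bar{R}_{lb}^\sigma(p,\bar{C}^*) \le K(\emptyset)$ for every permutation $\sigma$. Fix $\sigma$ and telescope the \emph{monotonicity+} constraint along the inclusion chain $\emptyset \subset \mathbf{Y}_{\sigma(1)} \subset \mathbf{Y}_\sigma^{(2)} \subset \cdots \subset \mathbf{Y}_\sigma^{(m)} = \mathbf{Y}$ (the requisite Markov chain is automatic at each step because each side information is a subvector of the next), then apply \emph{monotonicity+} once more with $\mathbf{A}=\mathbf{Y}$ and $\mathbf{B}=\mathbf{X}$ to get
\begin{align*}
K(\emptyset) \ge I(\mathbf{X}; \mathbf{V}, \mathbf{U}_\mathbf{Y} \mid \mathbf{Y}) + \sum_{k=1}^m I\!\left(\mathbf{Y}_{\sigma(k)}; \mathbf{V}, \mathbf{U}_{\mathbf{Y}_\sigma^{(k-1)}} \,\middle|\, \mathbf{Y}_\sigma^{(k-1)}\right),
\end{align*}
where $\mathbf{Y}_\sigma^{(0)} = \emptyset$. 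The sufficiency consequence of condition~2---namely $\mathbf{X} \lra (\mathbf{V}, \mathbf{U}_{\mathbf{Y}_\sigma^{(k)}}) \lra (\mathbf{U}_{\mathbf{Y}_{\sigma(1)}}, \dots, \mathbf{U}_{\mathbf{Y}_{\sigma(k)}})$---lets one replace each $\mathbf{U}_{\mathbf{Y}_\sigma^{(k)}}$ by the tuple $(\mathbf{U}_{\mathbf{Y}_{\sigma(1)}}, \dots, \mathbf{U}_{\mathbf{Y}_{\sigma(k)}})$ in the mutual-information terms without loosening the inequality. Finally, the Markov property $(\mathbf{V},\mathbf{U}_{\mathbf{Y}_{\sigma(1)}},\dots,\mathbf{U}_{\mathbf{Y}_{\sigma(k-1)}}) \lra \mathbf{X} \lra \mathbf{Y}_{\sigma(k)}$ yields the identity $I(\mathbf{Y}_{\sigma(k)}; \mathbf{V}, \mathbf{U} \mid \mathbf{Y}_\sigma^{(k-1)}) = I(\mathbf{X}; \mathbf{V}, \mathbf{U} \mid \mathbf{Y}_\sigma^{(k-1)}) - I(\mathbf{X}; \mathbf{V}, \mathbf{U} \mid \mathbf{Y}_\sigma^{(k)})$, and collecting terms produces exactly $\bar{R}_{lb}^\sigma(p,\bar{C}^*)$ as a telescoping sum. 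I verify the $m=2$ case directly as a sanity check: the tail $I(\mathbf{X};\mathbf{V},\mathbf{U}_{\mathbf{Y}_{\sigma(1)}},\mathbf{U}_{\mathbf{Y}_{\sigma(2)}}\mid\mathbf{Y})$ plus the single telescoping piece $I(\mathbf{Y}_{\sigma(2)};\mathbf{V},\mathbf{U}_{\mathbf{Y}_{\sigma(1)}}\mid\mathbf{Y}_{\sigma(1)})$ rearranges cleanly into $I(\mathbf{X};\mathbf{V},\mathbf{U}_{\mathbf{Y}_{\sigma(1)}}\mid\mathbf{Y}_{\sigma(1)}) + I(\mathbf{X};\mathbf{U}_{\mathbf{Y}_{\sigma(2)}}\mid\mathbf{V},\mathbf{U}_{\mathbf{Y}_{\sigma(1)}},\mathbf{Y})$.

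The main obstacle is the bookkeeping between the LP's $\mathbf{U}_{\mathbf{Y}_\sigma^{(k)}}$ (indexed by joint generalized side information) and the minimax's $\mathbf{U}_{\mathbf{Y}_{\sigma(j)}}$'s (indexed by individual decoder side information). Condition~2's coupling must be invoked repeatedly, and one must carefully check that the relevant conditional-independence statements persist when further conditioning on the intermediate $\mathbf{Y}_\sigma^{(k)}$ variables appearing in the telescoping sum---this is where the technically delicate work of the proof lies.
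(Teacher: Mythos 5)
Your argument is correct and essentially reproduces the paper's proof: for each fixed coupling in $\bar{P}$ you telescope the \emph{monotonicity}/\emph{monotonicity+} constraints along the nested side-information vectors, invoke condition 2) of Theorem \ref{theorem:lowergen} to replace the $\mbt{U}$ attached to joint side information by the tuple of individual $\mbt{U}_{\mbt{Y}_{\sigma(i)}}$'s, rearrange via the chain rule into the minimax expression for every permutation $\sigma$, and finally note that $(\mbt{V},\mbt{U}_{\mbt{Y}_i})$ is feasible for $\bar{C}$ — exactly the paper's route. The only slip is cosmetic: starting the telescoping at $\emptyset$ with \emph{monotonicity+} produces a bound exceeding $\bar{R}_{lb}$ by the nonnegative term $I(\mbt{Y}_{\sigma(1)};\mbt{V},\mbt{U}_{\emptyset})$, so collecting terms yields at least (not ``exactly'') the minimax expression — the paper instead uses plain \emph{monotonicity} for that first step — and this does not affect the conclusion.
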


\begin{proof}
Consider  $R_{lb}(\mb{D} + \epsilon \mb{1})$. Note that the $LP$ constraints of $R_{lb}(\mb{D} + \epsilon \mb{1})$ apply to all choices of the relevant
random variables. Hence
we can write
\begin{align}
\label{ineq:relaxedlb1}
R_{lb}(\mb{D} + \epsilon \mb{1}) \ge  \sup_{\bar{P}} \inf_{V \in \mathcal{C}(X)} \inf_{U_{\cdot}  : \mathcal{C}(X) \rightarrow \mathcal{C}(X, V)}  {R}^{LP}_{lb}(\epsilon)
\end{align}
where $\bar{P}$ is as in Theorem \ref{theorem:minmax}, and $V$ and $U$ in the infima satisfy the conditions 1)--3) in Theorem \ref{theorem:lowergen} for a fixed coupling of the random variables. Now we find a lower bound to the 
quantity
$R^{LP}_{lb}(\epsilon)$ in (\ref{ineq:relaxedlb1}) by utilizing the \textit{monotonicity} and \textit{monotonicity+} constraints of the LP in Table \ref{table1}. We can write the following series of inequalities:
\begin{align}
\label{ineq:LPcon0}
K(\emptyset) & \ge K(Y_1) \mbox{ by \textit{(monotonicity)}}
\\
\label{ineq:LPcon1}
K(Y_1) & \ge  K(Y_1,Y_2) + I(Y_2; V, U_{Y_1}| Y_1)  
\\
\vdots 
\notag \\
K(Y_1,\dots, Y_m) & \ge K(Y_1,\ldots,Y_m, X) 
\label{ineq:LPcon2}
 + 
I(X;V, U_{Y_1},\ldots,U_{Y_m}| Y_1,\ldots,Y_m)
 \\
\label{ineq:LPcon3}
K(Y_1,\ldots,Y_m, X) & = 0. 
\end{align}
where 
(\ref{ineq:LPcon1}) is from \textit{monotonicity+}  and  (\ref{ineq:LPcon2}) is from \textit{monotonicity+} and 
 $(Y_1,\ldots, Y_m) \lra X \lra (V, U_{Y_1 \ldots Y_m}) \lra (V, U_{Y_1}, \ldots, U_{Y_m})$. 
If we add all these inequalities side-by-side we obtain
\begin{align}
 K(\emptyset)& \ge I(Y_2; V, U_{Y_1}| Y_1)+ \cdots 
 \notag 
 \\
 &\quad + I(Y_m; V, U_{Y_1},\ldots,U_{Y_{m-1}}| Y_1,\ldots,Y_{m-1}) 
 \notag
 \\
 \label{ineq:mm_lbcomp}
 &\quad + I(X; V, U_{Y_1},\ldots,U_{Y_m}| Y_1,\ldots,Y_m).
 \end{align}
 By applying a series of chain rules and combining terms, we can write the right-hand side of (\ref{ineq:mm_lbcomp}) as
 \begin{align*}
 &I(X; V, U_{Y_1}| Y_1) + \cdots + I(X; U_{Y_2}| V, U_{Y_1},Y_1, Y_2) 
 \\
 & \quad + I(X; U_{Y_m} | V, U_{Y_1},\ldots,U_{Y_{m-1}},Y_1,\ldots,Y_m).
 \end{align*}
Let us define 
 \begin{align*}
 \Gamma_k &= \sum^k_{i =2} I(Y_i; V, U_{Y_1},\ldots, U_{Y_{i-1}}| Y_1, \ldots, Y_{i-1})
 \\
 & \quad + I(X; V, U_{Y_1},\ldots,U_{Y_k}| Y_1,\ldots,Y_{k}) 
 \\
 & \quad +\sum^m_{i = k+1}  I(X; U_{Y_i} | V, U_{Y_1},\ldots,U_{Y_{i-1}},Y_1,\ldots,Y_i)
 \end{align*}
 for $k \in [m]$ where ``empty" sums are zero. Note that $\Gamma_m$ is equal to the right-hand side of (\ref{ineq:mm_lbcomp}). One can show that $\Gamma_1 = \Gamma_2$ $=\ldots = \Gamma_m$. Hence
 $K(\emptyset) \ge \Gamma_1$.

  Also since there are $m$ decoders,  we can get $m!$ lower bounds on $K(\emptyset)$ by considering all possible permutations on integers $[m]$.  Hence, we have
$K(\emptyset) \ge \bar{R}_{lb}$. From (\ref{ineq:relaxedlb1}) we can write
\begin{align}
\label{ineq:relaxed_minmax-1}
{R}_{lb}(\mb{D}+ \epsilon\mb{1}) &\ge \sup_{\bar{P}} \inf_{V \in \mathcal{C}(X)} \inf_{U_{\cdot}  : \mathcal{C}(X) \rightarrow \mathcal{C}(X, V)}  \bar{R}_{lb} - \epsilon
 \\
\label{ineq:relaxed_minmax}
 &{\ge} \sup_{\bar{P}} \inf_{\bar{C}} \bar{R}_{lb} - \epsilon,
\end{align}
 where $\bar{C}$ is as in Theorem \ref{theorem:minmax}. Lastly, we have (\ref{ineq:relaxed_minmax}) since each feasible set of random variables in the infima in (\ref{ineq:relaxed_minmax-1}) is also feasible for $\bar{C}$. 
  Hence, 
$R_{lb}(\mb{D} + \epsilon\mb{1}) \ge R^m_{lb}(\mb{D} + \epsilon\mb{1}) $.
\end{proof}

\subsubsection{LP Lower Bound for the Index Coding Problem}
We next compare the general lower bound, $R_{lb}(\mb{D} + \epsilon \mb{1})$ with the linear programming lower bound in \cite{blasiak} for the index coding problem \cite{birk}. In the index coding problem, the source  $\mbt{X} = (X_1,\ldots,X_k)$ is such that $X_i$, $i \in [k]$ are independent and identically distributed (i.i.d.) Bernoulli $\left(\frac{1}{2}\right)$ random variables and each side information $\mbt{Y_i}$ at decoder $i$ is an arbitrary subset\footnote{Although $\mbt{X}$ is a vector, we can view it as an ordered set which also induces an ordered set structure on the subsets. Hence, we can use the set notation whenever it is convenient.} of the source $\mbt{X}$. Each decoder $i$ wishes to reconstruct an arbitrary subset of the source, $ \mbt{{\wh{X}_i}} \subseteq \mbt{X} \setminus \mbt{Y_i}$. The reconstructions can either be required to be zero error \cite{blasiak} or such that the block error probability vanishes~\cite{sinem_index}.
Both formulations are more stringent than considering the problem with Hamming distortion in the limit in which the distortion goes to zero,
so $R_{lb}(\epsilon\mb{1})$ is a valid lower bound to the index coding problem 
in all three cases.

We first state the LP lower bound in \cite{blasiak}, originally stated
for the zero-error form of the problem. For completeness, we need the following notation.
\begin{notation}
$\mbt{A} \leadsto \mbt{B}$ denotes ``$ \mbt{A}$ \textit{decodes} $ \mbt{B}$,'' meaning that $ \mbt{A} \subseteq  \mbt{B}$ and for every source component $X_i \in   \mbt{B} \setminus  \mbt{A}$ there is a decoder $j$  who reconstructs $X_i$  and $ \mbt{Y_j} \subseteq  \mbt{A}$. Also $S(\mbt{A}) = \{ X_i  | \mbox{ decoder $j$ reconstructs $X_i \in \mbt{X}$ and } \mbt{Y_j} \subseteq \mbt{A}  \}$.
\end{notation}

\begin{theorem}[LP lower bound \cite{blasiak}]
\label{theorem:blasiak}

The optimal value for the linear program in Table \ref{table2} \footnote{The statement of the result in \cite{blasiak} does not contain the (monotonicity) condition, although it is clear from the proof that it was intended to be 
included. The condition is present in the
preprint version of the paper \cite{blasiak_arxiv}.}, denoted by $\wh{R}^{LP}_{lb}$, is a lower bound to the index coding problem.

\begin{table*}[t]
\caption{LP Bound for Index Coding Problem }
\label{table2}
$\min \wh{K}(\emptyset) \mbox{ \textit{subject to}} $
\\
$\wh{K}( \mbt{X}) \ge | \mbt{X}|  \mbox{   \textit{(initialize)}}$
 \\
$\wh{K}( \mbt{A}) + | \mbt{B}\setminus  \mbt{A}| \ge \wh{K}( \mbt{B}), \mbox{ for all }  \mbt{A}\subseteq  \mbt{B}\subseteq  \mbt{X}  \mbox{ \textit{(slope)}}$
\\
$\wh{K}( \mbt{B}) \ge \wh{K}( \mbt{A}), \mbox{ for all }  \mbt{A} \subseteq  \mbt{B} \subseteq   \mbt{X}  \mbox{ \textit{(monotonicity)}}$
 \\
$\wh{K}( \mbt{A}) = \wh{K}( \mbt{B}), \mbox{ for all }   \mbt{A}, \mbt{B} \subseteq  \mbt{X}:  \mbt{A} \leadsto  \mbt{B} \mbox{ \textit{(decode)}}$
 \\
$\wh{K}( \mbt{A}) + \wh{K}( \mbt{B}) \ge \wh{K}( \mbt{A}\cap  \mbt{B}) + \wh{K}( \mbt{A} \cup  \mbt{B}),$
\\
$ \mbox{ for all }  \mbt{A}, \mbt{B} \subseteq  \mbt{X} \mbox{ \textit{(submodularity)}}.$
\end{table*}
\end{theorem}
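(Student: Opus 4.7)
The plan is to fix an arbitrary zero-error index code for the problem, with encoder output $Z$ and message length $L$ (so that $H(Z) \le L$), and to exhibit a feasible solution $\wh{K}(\cdot)$ of the LP in Table~\ref{table2} whose objective value is at most $L$. Since the infimum $\wh{R}^{LP}_{lb}$ is then at most $L$ for every such code, this shows that $\wh{R}^{LP}_{lb}$ lower-bounds the optimal index-coding rate.

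The candidate potential I would try is
\[
\wh{K}(\mbt{A}) = H(\mbt{A}, Z), \qquad \mbt{A} \subseteq \mbt{X}.
\]
This gives $\wh{K}(\emptyset) = H(Z) \le L$, so the objective is correctly bounded, and $\wh{K}(\mbt{X}) = H(\mbt{X}, Z) = H(\mbt{X}) = |\mbt{X}|$ handles \emph{initialize}, using that $Z$ is a deterministic function of $\mbt{X}$ in a zero-error code and that the source coordinates are i.i.d.\ Bernoulli$(1/2)$. For any $\mbt{A}\subseteq\mbt{B}$, the difference $\wh{K}(\mbt{B}) - \wh{K}(\mbt{A}) = H(\mbt{B}\setminus\mbt{A}\mid \mbt{A}, Z)$ is non-negative (giving \emph{monotonicity}) and bounded above by $H(\mbt{B}\setminus\mbt{A}) = |\mbt{B}\setminus\mbt{A}|$ (giving \emph{slope}). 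Whenever $\mbt{A}\leadsto\mbt{B}$, the decoders whose side information sits in $\mbt{A}$ can reconstruct every coordinate of $\mbt{B}\setminus\mbt{A}$ from $(\mbt{A},Z)$, so this conditional entropy vanishes and the inequality becomes the equality required by \emph{decode}. The key observation for \emph{submodularity} is to rewrite $\wh{K}(\mbt{A}) = H(\mbt{A}\cup\{Z\})$ and invoke submodularity of joint Shannon entropy, which immediately yields $\wh{K}(\mbt{A}) + \wh{K}(\mbt{B}) \ge \wh{K}(\mbt{A}\cup\mbt{B}) + \wh{K}(\mbt{A}\cap\mbt{B})$.

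The main obstacle I anticipate is not any single calculation but identifying the correct potential in the first place: the offset ``$+|\mbt{A}|$'' hidden in using joint rather than conditional entropy is exactly what aligns \emph{initialize} with \emph{slope} when the source is i.i.d.\ uniform, and treating $Z$ as just another element of the entropy argument is what lets ordinary submodularity of entropy do all the work. For the vanishing-error or block version of the problem, the same choice $\wh{K}(\mbt{A}) = H(\mbt{A}^n, Z)/n$ should go through after a short Fano step that replaces the exact zero in \emph{decode} by a term that vanishes with the error probability.
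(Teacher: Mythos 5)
Your proof is correct: the potential $\wh{K}(\mbt{A}) = H(\mbt{A}, Z)$ verifies all five constraints of the LP in Table~\ref{table2} exactly as you argue, and this is essentially the same argument as the cited proof of Blasiak \emph{et al.} (the paper itself only quotes the result); it also matches the reparametrization $\wh{K}(\mbt{A}) = K(\mbt{A}) + H(\mbt{A})$ with $K(\mbt{A}) = I(\mbt{X}^n; I_0 \mid \mbt{A}^n)/n$ used in Theorem~\ref{theorem:subsume_index}, which for a zero-error code reduces to your choice. Note only that the theorem as stated is for the zero-error formulation, so your closing remark about the vanishing-error case (where the \textit{decode} equality holds only up to a Fano term and one must relax the constraint and invoke LP continuity, as the paper does via Lemma~\ref{lemma:LP_eqv}) is not needed here.
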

Now that we stated the LP lower bound in \cite{blasiak}, we show that  $\lim_{\epsilon \rightarrow 0}R_{lb}(\epsilon\mb{1})$ is equal to this bound when we restrict the generalized side information, $\mbt{A}
$, in $R_{lb}(\epsilon\mb{1})$ to be a subset of the source, $\mbt{X}$. 
From now on we denote this weakened form of $R_{lb}(\epsilon\mb{1})$ obtained by restricting the generalized side information to be a subset of the source  by $R^I_{lb}(\epsilon\mb{1})$.
The following two lemmas will be useful to prove that the weakened lower bound  $R^I_{lb}(\epsilon\mb{1})$ is equal to the LP lower bound in Theorem \ref{theorem:blasiak}.
\begin{lemma}
\label{lemma:LP_insert}
Without loss of optimality we can  replace the  $\mbox{ (initialize)}$  and $\mbox{ \textit{(slope)}}$ conditions in the LP in Table \ref{table2} with 
\begin{align*}
&\wh{K}(\mbt{X}) = |\mbt{X}| \mbox{ (initialize*)}
\\
& \wh{K}( \mbt{A}) + | \mbt{B}\setminus \{ S(\mbt{A}) \cup \mbt{A}\}| \ge \wh{K}( \mbt{B}), \mbox{ for all }  \mbt{A}\subseteq  \mbt{B}\subseteq  \mbt{X}  \mbox{ \textit{(slope*)}},
\end{align*}
respectively.
\end{lemma}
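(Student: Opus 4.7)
The modified conditions (initialize*) and (slope*) are at least as restrictive as the originals---the first tightens $\ge$ to equality, the second tightens the slack $|\mbt{B}\setminus\mbt{A}|$ to $|\mbt{B}\setminus(S(\mbt{A})\cup\mbt{A})|$, which is smaller since $\mbt{A}\subseteq S(\mbt{A})\cup\mbt{A}$---so the feasible region of the modified LP is contained in that of the original, and its optimal value is at least as large. The substance of the lemma is the reverse inequality. I plan to establish it by taking any $\wh{K}(\cdot)$ feasible for the original LP and constructing, in two steps, a $\wh{K}'(\cdot)$ feasible for the modified LP with $\wh{K}'(\emptyset)\le\wh{K}(\emptyset)$.

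\textbf{Step 1 (derive (slope*) from the original constraints).} By the very definition of $S(\mbt{A})$ we have $\mbt{A}\leadsto\mbt{A}\cup S(\mbt{A})$, so (decode) yields $\wh{K}(\mbt{A})=\wh{K}(\mbt{A}\cup S(\mbt{A}))$. For $\mbt{A}\subseteq \mbt{B}$, set $\mbt{B}':=\mbt{B}\cup S(\mbt{A})\supseteq \mbt{A}\cup S(\mbt{A})$. Applying the original (slope) to $\mbt{A}\cup S(\mbt{A})\subseteq\mbt{B}'$ gives
\[
\wh{K}(\mbt{A}\cup S(\mbt{A}))+|\mbt{B}'\setminus(\mbt{A}\cup S(\mbt{A}))|\ge \wh{K}(\mbt{B}').
\]
Since $\mbt{B}'\setminus(\mbt{A}\cup S(\mbt{A}))=\mbt{B}\setminus(\mbt{A}\cup S(\mbt{A}))$ and (monotonicity) yields $\wh{K}(\mbt{B}')\ge \wh{K}(\mbt{B})$, the chain gives exactly (slope*). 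So every $\wh{K}$ feasible for the original LP already satisfies (slope*).

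\textbf{Step 2 (cap at $|\mbt{X}|$ to enforce (initialize*)).} Define $\wh{K}'(\mbt{A}):=\min\{\wh{K}(\mbt{A}),|\mbt{X}|\}$. Then $\wh{K}'(\mbt{X})=|\mbt{X}|$ (using the original (initialize)), $\wh{K}'(\emptyset)\le \wh{K}(\emptyset)$, and (non-negativity) is trivially preserved. The remaining constraints---(slope*), (monotonicity), (decode), and (submodularity)---are verified by a short case split on which arguments exceed $|\mbt{X}|$. Whenever a smaller set is capped, (monotonicity) of $\wh{K}$ forces the larger one to be capped as well, eliminating the inconsistent cases. In mixed cases where only the larger side is capped, the original inequality already controlled a quantity $\ge|\mbt{X}|$ and therefore dominates the capped version. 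The ``all capped'' and ``none capped'' situations are immediate. For (decode), both sides are capped identically because they were equal before capping.

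\textbf{Main obstacle.} Conceptually, Step~1 is the crux: the ``decoding closure'' $\mbt{A}\cup S(\mbt{A})$ has the same $\wh{K}$-value as $\mbt{A}$, which is precisely what allows the decodable indices to be subtracted from the slope slack. The only slightly finicky bookkeeping is the case analysis for (submodularity) under capping; the potentially awkward scenario is $\wh{K}(\mbt{A}\cup\mbt{B})>|\mbt{X}|$ while $\wh{K}(\mbt{A}),\wh{K}(\mbt{B})\le|\mbt{X}|$, but here the original (submodularity) inequality has a strictly larger right-hand side than the capped one and thus suffices. Everything else reduces to direct applications of the original constraints.
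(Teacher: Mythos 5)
Your proposal is correct, and its core coincides with the paper's: the derivation of \textit{(slope*)} from the original constraints is the same idea as in the paper's proof, which takes $\mbt{C}=\mbt{B}\cap S(\mbt{A})$, uses \textit{(decode)} to get $\wh{K}(\mbt{A})=\wh{K}(\mbt{A}\cup\mbt{C})$, and applies \textit{(slope)} to $\mbt{A}\cup\mbt{C}\subseteq\mbt{B}$; your variant with $\mbt{B}'=\mbt{B}\cup S(\mbt{A})$ needs one extra appeal to \textit{(monotonicity)} but is otherwise identical. Where you genuinely diverge is \textit{(initialize*)}: the paper takes any feasible $\wh{K}$ with $\wh{K}(\mbt{X})=|\mbt{X}|+\epsilon$, $\epsilon>0$, and subtracts $\epsilon$ from every value $\wh{K}(\mbt{A})$; since all the other constraints in Table \ref{table2} (and \textit{(slope*)}) are invariant under adding a constant and \textit{(initialize)} stays satisfied with equality, feasibility is immediate and the objective drops to $\wh{K}(\emptyset)-\epsilon$. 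Your truncation $\wh{K}'(\mbt{A})=\min\{\wh{K}(\mbt{A}),|\mbt{X}|\}$ also works: monotonicity of $\min\{\cdot,|\mbt{X}|\}$ handles \textit{(monotonicity)}, \textit{(decode)} and \textit{(slope*)}, and in the only delicate submodular case, $\wh{K}(\mbt{A}\cup\mbt{B})>|\mbt{X}|\ge\wh{K}(\mbt{A}),\wh{K}(\mbt{B})$, the uncapped right-hand side strictly dominates the capped one, exactly as you note — but this buys nothing over the paper's uniform shift and costs the case analysis, whereas the shift exploits translation invariance and needs no cases. Both arguments then finish the same way: since \textit{(initialize*)} and \textit{(slope*)} imply \textit{(initialize)} and \textit{(slope)}, replacing the latter by the former changes neither the feasible set (once the new constraints are known to be free) nor the optimal value.
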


\begin{proof}
First we show that without loss of optimality we can add the  $\mbox{ \textit{initialize*}}$  and $\mbox{ \textit{slope*}}$ conditions to the LP in Table \ref{table2}. Since they are more stringent than $\mbox{\textit{initialize}}$   and $\mbox{\textit{slope}}$ conditions  in Table \ref{table2}, the result then follows. We begin with  $\mbox{\textit{initialize*}}$. Let $\wh{K}(\mbt{A})$, $\mbt{A} \subseteq \mbt{X}$ be feasible for the LP in Table \ref{table2} such that $\wh{K}(\mbt{X}) > |\mbt{X}|$. Then there exists $\epsilon > 0$ such that $\wh{K}(\mbt{X}) = |\mbt{X}| + \epsilon$. Note that $\wh{K}(\mbt{A}) - \epsilon$,  $\mbt{A} \subseteq \mbt{X}$, is also feasible for the LP  in Table \ref{table2} giving a lower objective $\wh{K}(\emptyset) - \epsilon$. Hence, without loss of optimality we can insert the $\mbox{\textit{initialize*}}$  condition into the LP  in Table \ref{table2}.
Now we show that the $\mbox{\textit{slope}}$  and $\mbox{\textit{decode}}$ conditions of the LP in Table \ref{table2}  imply the \textit{slope*} condition.
Let $\mbt{A} \subseteq \mbt{B} \subseteq \mbt{X}$. If $\mbt{B} \cap S(\mbt{A}) = \emptyset$ then  the $\mbox{\textit{slope}}$ and  $\mbox{\textit{slope*}}$ conditions are equivalent. Otherwise, i.e., if  $\mbt{B} \cap S(\mbt{A}) = \mbt{C} \neq \emptyset$, then
from the $\mbox{\textit{decode}}$ and $\mbox{\textit{slope}}$ conditions we have 
\begin{align*}
&\wh{K}(\mbt{C} \cup \mbt{A}) = \wh{K}(\mbt{A}),
\\
&\wh{K}(\mbt{C} \cup \mbt{A})  + | \mbt{B} \setminus \{\mbt{C} \cup \mbt{A}\} |\ge  \wh{K}(\mbt{B})
\end{align*}
respectively. Since $\mbt{B} \setminus \{\mbt{C} \cup \mbt{A}\} =  \mbt{B} \setminus \{S(\mbt{A}) \cup \mbt{A}\} $,  the $\mbox{\textit{decode}}$ and $\mbox{\textit{slope}}$ conditions imply the $\mbox{\textit{slope*}}$ condition.
\end{proof}
\begin{lemma}
\label{lemma:LP_eqv}
Let $\epsilon > 0$ and $\bar{R}^{LP}_{lb}(\epsilon\mb{1})$ be the optimal value of the LP in Table \ref{table3}.
Then $ R^I_{lb}(\epsilon\mb{1}) \ge \bar{R}^{LP}_{lb}(\epsilon\mb{1})$ and  $\lim_{\epsilon \rightarrow 0} R^I_{lb}(\epsilon\mb{1}) = \bar{R}^{LP}_{lb}(\mb{0})$.
\begin{table*}[t]
\caption{Relaxation of the LP in Table \ref{table1}}
\label{table3}
$\min K(\emptyset) - \epsilon \mbox{ \textit{subject to}} $
\\
$K( \mbt{X}) = 0  \mbox{  \textit{(initialize)}} $
\\
$K( \mbt{A}) \ge 0, \mbox{ for all }  \mbt{A} \subseteq  \mbt{X}  \mbox{  \textit{(non-negativity)}}$ 
\\
$K( \mbt{B}) + H( \mbt{B} |  \mbt{A}) \ge K( \mbt{A}), \mbox{ for all }  \mbt{A} \subseteq  \mbt{B} \subseteq   \mbt{X}  \mbox{ \textit{(slope)}}$
\\
$K( \mbt{A}) \ge K( \mbt{B}), \mbox{ for all }  \mbt{A} \subseteq  \mbt{B} \subseteq   \mbt{X}  \mbox{ \textit{(monotonicity)}}$
\\
 $K( \mbt{A})  \ge K( \mbt{B}) + H( \mbt{B} |  \mbt{A}) - H( \mbt{B} |  S(\mbt{A}),\mbt{A}) - \epsilon\log|S(\mbt{A})|, \mbox{ for all }   \mbt{A} \subseteq  \mbt{B} \subseteq  \mbt{X} \mbox{ \textit{(monotonicity+)}}$
\\
$K( \mbt{A}) + K( \mbt{B}) \ge K( \mbt{A} \cap  \mbt{B}) + K( \mbt{A} \cup  \mbt{B}),$
\\
$\mbox{ for all }   \mbt{A},  \mbt{B} \subseteq  \mbt{X}  \mbox{ \textit{(submodularity)}}$.
\end{table*}
\end{lemma}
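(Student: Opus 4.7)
My plan is to show $R^I_{lb}(\epsilon\mb{1})\ge\bar{R}^{LP}_{lb}(\epsilon\mb{1})$ by verifying constraint-by-constraint that each constraint of the LP in Table \ref{table3} is implied by the corresponding constraint of the LP in Table \ref{table1} when the generalized side information is restricted to $\mbt{A}\subseteq\mbt{X}$. Once this is done, the $\ge$ direction of the limit equality will follow from continuity of the finite LP at $\epsilon=0$. I will also describe how one would attack the $\le$ direction, which appears to be the main obstacle.

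For the easy constraints: (initialize) and (non-negativity) transfer verbatim. (Slope) follows because $I(\mbt{B};\mbt{V},\mbt{U_B}\mid\mbt{A})\le H(\mbt{B}\mid\mbt{A})$ makes Table \ref{table3}'s slope weaker than Table \ref{table1}'s. (Monotonicity) uses that for $\mbt{A}\subseteq\mbt{B}\subseteq\mbt{X}$, independence of source components makes $(\mbt{B}\setminus\mbt{A})\succeq\emptyset\mid\mbt{A}$ trivially true, so Table \ref{table1}'s monotonicity with $\mbt{C}=\mbt{A}$ gives $K(\mbt{A})=K(\mbt{A},\emptyset)\ge K(\mbt{A},\mbt{B}\setminus\mbt{A})=K(\mbt{B})$. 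For (submodularity), I would take $\mbt{C}=\mbt{A}\cap\mbt{B}$, a deterministic function of $\mbt{A}$; independence then yields $\mbt{A}\lra\mbt{C}\lra\mbt{B}$, so Table \ref{table1}'s submodularity delivers $K(\mbt{A})+K(\mbt{B})\ge K(\mbt{A}\cap\mbt{B})+K(\mbt{A}\cup\mbt{B})$.

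The main nontrivial step is (monotonicity+), where the $\epsilon\log|S(\mbt{A})|$ slack originates from Fano's inequality. By condition 2 of Theorem \ref{theorem:lowergen}, for each decoder $j$ with $\mbt{Y_j}\subseteq\mbt{A}$, $\mbt{U_{Y_j}}$ can be coupled so that $\mbt{X}\lra(\mbt{V},\mbt{U_A})\lra\mbt{U_{Y_j}}$, and by condition 3 decoder $j$ reconstructs its targets from $(\mbt{V},\mbt{U_{Y_j}},\mbt{Y_j})$ with per-letter distortion at most $\epsilon$. Aggregating over such decoders, $S(\mbt{A})$ is estimable from $(\mbt{V},\mbt{U_A},\mbt{A})$ with per-letter distortion at most $\epsilon$, so Fano yields $H(S(\mbt{A})\mid\mbt{V},\mbt{U_A},\mbt{A})\le\epsilon\log|S(\mbt{A})|$ (with the lower-order $h(\cdot)$ terms absorbed into the slack). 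Then
\begin{align*}
I(\mbt{B};\mbt{V},\mbt{U_A}\mid\mbt{A})&\ge I(\mbt{B};S(\mbt{A})\mid\mbt{A})-H(S(\mbt{A})\mid\mbt{V},\mbt{U_A},\mbt{A})\\
&\ge H(\mbt{B}\mid\mbt{A})-H(\mbt{B}\mid S(\mbt{A}),\mbt{A})-\epsilon\log|S(\mbt{A})|,
\end{align*}
and combining with Table \ref{table1}'s (monotonicity+) yields Table \ref{table3}'s (monotonicity+).

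For the limit equality, the $\ge$ direction is immediate from the proved inequality together with continuity of the finite LP at $\epsilon=0$. The $\le$ direction is the main obstacle: one must exhibit, for each sufficiently small $\epsilon'$, a triple $(\mbt{V},\mbt{U}_{\cdot},K^*)$ feasible for Table \ref{table1} with $K^*$ optimal for Table \ref{table3} at $\epsilon=0$. The difficulty is that $\mbt{V}$ must be informative enough for the (slope) inequalities in Table \ref{table1} to saturate at $H(\mbt{B}\mid\mbt{A})$, while $\mbt{U_A}$ (which must remain independent of $\mbt{X}$ per condition 1) must be chosen so that $(\mbt{V},\mbt{U_A})$ decodes just $S(\mbt{A})$ and no more, so that the (monotonicity+) bounds approach the Fano limit; navigating this tension between the slope and monotonicity+ constraints is the central technical issue.
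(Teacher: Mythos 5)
Your proof of the inequality $R^I_{lb}(\epsilon\mb{1})\ge\bar{R}^{LP}_{lb}(\epsilon\mb{1})$ follows essentially the paper's route (slope via $H(\mbt{B}|\mbt{A})\ge I(\mbt{B};\mbt{V},\mbt{U_B}|\mbt{A})$, Fano for \textit{monotonicity+}, and the observation that for subsets of an independent source the remaining constraints transfer), and the extra detail you supply for \textit{monotonicity} and \textit{submodularity} is correct. The gap is in the second assertion of the lemma: you explicitly leave the direction $\lim_{\epsilon\to 0}R^I_{lb}(\epsilon\mb{1})\le\bar{R}^{LP}_{lb}(\mb{0})$ as an unresolved ``main obstacle,'' yet this is precisely the content of the equality beyond the inequality you proved, so the proof is incomplete. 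Moreover, the ``tension'' you describe is not actually there: the \textit{slope} constraint in Table \ref{table1} involves $(\mbt{V},\mbt{U_B})$ jointly, not $\mbt{V}$ alone, so $\mbt{V}$ itself need carry no information about $\mbt{X}$.

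The missing idea is a one-time-pad construction. Take $\mbt{V}=\mbt{Z}$, a vector of i.i.d.\ Bernoulli$(\tfrac12)$ bits of the same length as $\mbt{X}$ with $\mbt{Z}\perp\mbt{X}$, and set $\mbt{U_A}=(S(\mbt{A}),\mbt{A})\oplus\mbt{Z}$ for every $\mbt{A}\subseteq\mbt{X}$. Then each $\mbt{U_A}$ is independent of $\mbt{X}$ (condition 1 of Theorem \ref{theorem:lowergen}); since $\mbt{A}\subseteq\mbt{B}$ implies $S(\mbt{A})\subseteq S(\mbt{B})$, each $\mbt{U_A}$ is a function of $(\mbt{V},\mbt{U_B})$, giving condition 2; and decoder $i$ recovers $S(\mbt{Y_i})\supseteq\wh{\mbt{X}}_i$ exactly from $(\mbt{V},\mbt{U_{Y_i}})$, giving condition 3 with zero distortion. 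With this choice, $I(\mbt{B};\mbt{V},\mbt{U_B}|\mbt{A})=H(\mbt{B}|\mbt{A})$ and $I(\mbt{B};\mbt{V},\mbt{U_A}|\mbt{A})=H(\mbt{B}|\mbt{A})-H(\mbt{B}|S(\mbt{A}),\mbt{A})$, so the LP of Table \ref{table1} restricted to subsets of $\mbt{X}$ becomes the LP of Table \ref{table3} with the $\epsilon\log|S(\mbt{A})|$ slack set to zero (any remaining constraints being implied by those of Table \ref{table3}), while the objective still carries the $-\epsilon$. Since $R^I_{lb}(\epsilon\mb{1})$ is an infimum over admissible $(\mbt{V},\mbt{U}_\cdot)$, this yields $R^I_{lb}(\epsilon\mb{1})\le\bar{R}^{LP}_{lb}(\mb{0})-\epsilon$; combining with your inequality and the right-continuity of the finite LP value at $\epsilon=0$ gives $\lim_{\epsilon\to 0}R^I_{lb}(\epsilon\mb{1})=\bar{R}^{LP}_{lb}(\mb{0})$, which is what the lemma claims.
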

\begin{proof}
Since the  random variables $ \mbt{A}, \mbt{B}$ in  $ R^I_{lb}(\epsilon\mb{1})$ are such that $ \mbt{A}, \mbt{B} \subseteq  \mbt{X}$, the Markov chain $\mbt{A} \lra \mbt{B} \lra \mbt{X}$ is equivalent to $\mbt{A} \subseteq \mbt{B} \subseteq \mbt{X}$.
Then the \textit{slope} constraints of the LP in $R^I_{lb}(\epsilon\mb{1})$ imply the \textit{slope} constraints of $\bar{R}^{LP}_{lb}(\epsilon\mb{1})$, since $H(\mbt{B} | \mbt{A}) \ge I(\mbt{B}; \mbt{V}, \mbt{U_B}|\mbt{A})$.
Furthermore, using Fano's inequality, it can be seen that the \textit{monotonicity+} condition of the LP in $R^I_{lb}(\epsilon\mb{1})$ gives the \textit{monotonicity+} condition of  $\bar{R}^{LP}_{lb}(\epsilon\mb{1})$ and
the rest of the conditions are the same. Hence, we have $R^I_{lb}(\epsilon\mb{1}) \ge \bar{R}^{LP}_{lb}(\epsilon\mb{1})$.
Now we select $\mbt{V} = \mbt{Z}$ where $\mbt{Z}$ is a vector of i.i.d.\ Bernoulli$(\frac{1}{2})$ bits of the same length as $\mbt{X}$, $\mbt{Z} \perp \mbt{X}$, and we select $\mbt{U_A} = (S(\mbt{A}), \mbt{A})\oplus \mbt{Z}$,\footnote{ $\mb{a} \oplus\mb{b}$ denotes componentwise exclusive-OR operation where the shorter vector is zero padded as necessary.} $\mbt{A} \subseteq  \mbt{X}$. Note that this selection of $\mbt{V}$ and $\mbt{U_A}$ satisfy the conditions 1)--3) in Theorem \ref{theorem:lowergen}. Then the solution of the resulting  LP is equal to the LP in Table \ref{table3} where $\epsilon\log|S(\mbt{A})| = 0$, giving $\bar{R}^{LP}_{lb}(\mb{0}) - \epsilon \ge R^I_{lb}(\epsilon\mb{1})$. Since $\bar{R}^{LP}_{lb}(\epsilon\mb{1})$ is right-continuous at $\epsilon = 0$ \cite{cont_LP}, letting $\epsilon \rightarrow 0$ gives the result. 
\end{proof}
\begin{theorem}
\label{theorem:subsume_index}
$\lim_{\epsilon \rightarrow 0}R^I_{lb}(\epsilon\mb{1}) = \wh{R}^{LP}_{lb}$.
\end{theorem}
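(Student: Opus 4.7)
The plan is to combine Lemma~\ref{lemma:LP_eqv} with Lemma~\ref{lemma:LP_insert} and then verify that the LP in Table~\ref{table3} at $\epsilon = 0$ is equivalent to the (strengthened) Blasiak LP in Table~\ref{table2} under the simple affine correspondence $\wh{K}(\mbt{A}) = K(\mbt{A}) + |\mbt{A}|$. By Lemma~\ref{lemma:LP_eqv} it suffices to show $\bar{R}^{LP}_{lb}(\mb{0}) = \wh{R}^{LP}_{lb}$, and by Lemma~\ref{lemma:LP_insert} we may freely replace the \textit{(initialize)} and \textit{(slope)} constraints in Table~\ref{table2} by \textit{(initialize*)} and \textit{(slope*)}. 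Since the source components $X_i$ are i.i.d.\ Bernoulli$(1/2)$, we have $H(\mbt{B} \mid \mbt{A}) = |\mbt{B}\setminus\mbt{A}|$ and $H(\mbt{B} \mid S(\mbt{A}),\mbt{A}) = |\mbt{B}\setminus (S(\mbt{A})\cup\mbt{A})|$ whenever $\mbt{A}\subseteq\mbt{B}\subseteq\mbt{X}$, which is what makes the correspondence clean.

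For one direction I would start with a feasible $\wh{K}(\cdot)$ for the strengthened Blasiak LP (i.e., satisfying \textit{(initialize*)} and \textit{(slope*)}) and define $K(\mbt{A}) := \wh{K}(\mbt{A}) - |\mbt{A}|$. Then \textit{(initialize)} of Table~\ref{table3} follows from $\wh{K}(\mbt{X}) = |\mbt{X}|$; non-negativity follows from applying \textit{(slope*)} with $\mbt{B} = \mbt{X}$; the \textit{(slope)} constraint of Table~\ref{table3} reduces to Blasiak's \textit{(monotonicity)}; the \textit{(monotonicity)} constraint reduces to Blasiak's \textit{(slope)}; \textit{(monotonicity+)} at $\epsilon=0$ reduces precisely to Blasiak's \textit{(slope*)}; and \textit{(submodularity)} transfers verbatim because $|\mbt{A}| + |\mbt{B}| = |\mbt{A}\cap\mbt{B}| + |\mbt{A}\cup\mbt{B}|$. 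Moreover $K(\emptyset) = \wh{K}(\emptyset)$, so $\bar{R}^{LP}_{lb}(\mb{0}) \le \wh{R}^{LP}_{lb}$.

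For the reverse direction, I would take a feasible $K(\cdot)$ for Table~\ref{table3} at $\epsilon = 0$ and set $\wh{K}(\mbt{A}) := K(\mbt{A}) + |\mbt{A}|$. Every constraint of Table~\ref{table2} is then verified by reversing the algebra above. The only constraint that requires a small extra step is \textit{(decode)}: if $\mbt{A} \leadsto \mbt{B}$, then by definition $\mbt{B} \subseteq \mbt{A} \cup S(\mbt{A})$, so $H(\mbt{B}\mid S(\mbt{A}),\mbt{A}) = 0$ and \textit{(monotonicity+)} collapses to $K(\mbt{A}) \ge K(\mbt{B}) + |\mbt{B}\setminus\mbt{A}|$, which after the affine shift gives $\wh{K}(\mbt{A}) \ge \wh{K}(\mbt{B})$; combined with Blasiak's \textit{(monotonicity)} (which we have just proved from Table~\ref{table3}'s \textit{(slope)}), equality follows. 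This yields $\wh{R}^{LP}_{lb} \le \bar{R}^{LP}_{lb}(\mb{0})$, and hence equality.

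The main obstacle is book-keeping: ensuring that every constraint in Table~\ref{table3} maps cleanly to a constraint in the strengthened Table~\ref{table2} and vice versa, and in particular recognizing that \textit{(monotonicity+)} in Table~\ref{table3} is exactly \textit{(slope*)} from Lemma~\ref{lemma:LP_insert}, which in turn is a consequence of the original \textit{(slope)} and \textit{(decode)} of Table~\ref{table2}. Once this dictionary is established, combining the two directions gives $\bar{R}^{LP}_{lb}(\mb{0}) = \wh{R}^{LP}_{lb}$, and then Lemma~\ref{lemma:LP_eqv} finishes the proof via $\lim_{\epsilon \to 0} R^I_{lb}(\epsilon\mb{1}) = \bar{R}^{LP}_{lb}(\mb{0}) = \wh{R}^{LP}_{lb}$.
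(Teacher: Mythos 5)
Your proposal is correct and follows essentially the same route as the paper: reduce via Lemmas \ref{lemma:LP_insert} and \ref{lemma:LP_eqv} to comparing the LP of Table \ref{table3} at $\epsilon=0$ with the (strengthened) LP of Table \ref{table2}, and then match constraints under the affine reparametrization $\wh{K}(\mbt{A})=K(\mbt{A})+H(\mbt{A})=K(\mbt{A})+|\mbt{A}|$. Your explicit two-direction verification, including obtaining \textit{(decode)} from \textit{(monotonicity+)} together with \textit{(slope)}, is exactly the constraint dictionary the paper uses, just spelled out in both directions.
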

\begin{proof}
Let  $LP_1$ and $LP_2$ denote the LPs in Theorem \ref{theorem:blasiak} and Table \ref{table3}  with $\epsilon = 0$, respectively. By Lemma \ref{lemma:LP_insert}, without loss of optimality we can add the \textit{initialize*} and \textit{slope*} conditions in Lemma \ref{lemma:LP_insert} to $LP_1$ and consider $LP_1$  of this form. Notice that $\bar{R}^{LP}_{lb}(\mb{0})$ is the solution of $LP_2$ and from Lemma \ref{lemma:LP_eqv},  $\lim_{\epsilon \rightarrow 0}R^I_{lb}(\epsilon\mb{1}) = \bar{R}^{LP}_{lb}(\mb{0})$.
Hence, it is enough to show that $\wh{R}^{LP}_{lb} = \bar{R}^{LP}_{lb}(\mb{0})$.
We show this by reparametrizing $LP_2$ in terms of $\wh{K}( \mbt{A})$ where $ \wh{K}( \mbt{A}) = K( \mbt{A}) + H( \mbt{A})$. Note that 
$\wh{K}(\emptyset) = K(\emptyset).$
Hence, the objective of $LP_2$ is the same as the objective of $LP_1$.
Now we show that the constraint set in $LP_2$ and the constraint set in $LP_1$ are the same.
We can rewrite the \textit{initialize} and \textit{non-negativity} conditions of $LP_2$ as
\\
$\wh{K}( \mbt{X})  = H( \mbt{X})$
\\
$\wh{K}( \mbt{A}) \ge H( \mbt{A})$ respectively. Together those two conditions are equivalent to the \textit{initialize*} and \textit{slope} conditions of $LP_1$.

When we rewrite the \textit{slope} condition of $LP_2$, we get
\\
$\wh{K}( \mbt{B})  \ge \wh{K}( \mbt{A})$,
the \textit{monotonicity} condition of $LP_1$.

When we rewrite the \textit{monotonicity} and \textit{monotonicity+} conditions of $LP_2$, we get
\\
$\wh{K}( \mbt{A}) +H( \mbt{B}| \mbt{A}) \ge \wh{K}( \mbt{B})$
\\
$\wh{K}( \mbt{A}) + H( \mbt{B}| S(\mbt{A}),\mbt{A}) \ge \wh{K}( \mbt{B})$ respectively and they are equivalent to the
\textit{slope} and \textit{slope*} conditions of $LP_1$.

Also, combining the \textit{submodularity} condition of $LP_2$ and $H( \mbt{A}) + H( \mbt{B}) = H( \mbt{B} \cap  \mbt{A}) + H( \mbt{B} \cup  \mbt{A})$ we can get the same \textit{submodularity} condition of $LP_1$.

Lastly,  from the \textit{monotonicity+} and \textit{slope}  conditions of $LP_2$, we can obtain
$K( \mbt{A}) + H( \mbt{A}) = K( \mbt{B}) + H( \mbt{B}| \mbt{A}) + H( \mbt{A})$ for all $A \leadsto B$, which is  the \textit{decode} condition of $LP_1$. 
Hence, each constraint (or combination of constraints) in $LP_2$ corresponds to a constraint in $LP_1$ and vice versa. Since the objectives of $LP_1$ and $LP_2$ are the same, we conclude that  $\wh{R}^{LP}_{lb} =  \bar{R}^{LP}_{lb}(\mb{0})$.
\end{proof}

\section{Optimality Results}
\label{sec:odd_cycle_gauss}
The LP upper and  lower bounds are tight in several instances\footnote{
In a recent work of Benammar \textit{et al.}~\cite{benammar}, the rate-distortion problem with two decoders having degraded reconstruction sets is considered and
the corresponding rate-distortion function is characterized. 
The construction of auxiliary random variables in the converse result of Benammar \textit{et al.} \cite{benammar} is specific to that problem setting and at this point it is unclear whether the LP lower bound subsumes this converse result.}. We begin with several classes of instances for which the rate-distortion function is
already known, the last of which is the odd-cycle index coding problem, which can be considered as a special case of Heegard-Berger problem. We conclude this section by finding an explicit characterization of the rate-distortion function for a new ``odd-cycle Gaussian  problem" using the upper and lower bounds in Theorems \ref{thm:gen_ach} and \ref{theorem:lowergen_comp}, respectively.

\subsection{Rate-Distortion Function with Mismatched Side Information at Decoders \cite{watanabe}}
In this problem, there is one encoder with source $ \mbt{X} = (X_1, X_2)$ and two decoders with side information $ \mbt{Y_1} = (Y_{11}, Y_{12})$ and $ \mbt{Y_2} = (Y_{21}, Y_{22})$, respectively. The source and side information satisfy the following relations
\begin{align}
\label{wt:indp}
& (X_1,Y_{11}, Y_{21}) \perp (X_2,Y_{12}, Y_{22})
\\
\label{wt:degraded}
&X_1\lra Y_{11} \lra Y_{21} \mbox{ and }
X_2\lra Y_{22} \lra Y_{12}
\end{align}
and the reconstructions at the decoders, $ \mbt{\wh{X}_1} = (\wh{X}_{11},\wh{X}_{12})$ and $ \mbt{\wh{X}_2} = (\wh{X}_{21}, \wh{X}_{22})$, are such that
\begin{align}
\label{wt:dist_const1}
&\mathbb{E}[d_{1i}(X_{1},\wh{X}_{1i})] \le D_{1i}
\\
\label{wt:dist_const2}
&\mathbb{E}[d_{2i}(X_{2},\wh{X}_{2i})] \le D_{2i}
\mbox{ for }  i \in [2].
\end{align}

We denote the rate-distortion function of this problem as $R^M{(\mb{D})}$. Theorem \ref{theorem:subs_mismatched} shows that the minimax lower bound in Theorem \ref{theorem:minmax} is greater than or equal to $R^M({\mb{D}})$, the rate-distortion function characterized by Watanabe \cite{watanabe}. Hence,  it implies 
that the lower bounds in both Theorems \ref{theorem:minmax} and \ref{theorem:lowergen} are tight for this problem.

\begin{theorem}[\cite{watanabe}]
\label{theorem:watanabe}
\label{theorem-main}
The rate-distortion function, $R^M(\mb{D})$, equals
\begin{align*}
 &R^M(\mb{D}) = 
\min[ \max\{  R^M_1, R^M_2 \} ], \mbox{ where }
\\
 &R^M_1 = I(X_1; W_1|Y_{11}) + I(X_2;W_2|Y_{12}) 
 + I(X_1;U_1|Y_{11},W_1)  + I(X_2;U_2|Y_{22},W_2)
 \\
  &R^M_2 =  I(X_1;W_1|Y_{21}) + I(X_2;W_2|Y_{22}) 
  + I(X_1;U_1|Y_{11},W_1) + I(X_2;U_2|Y_{22},W_2) ,
\end{align*}
and the minimization is taken over all auxiliary random variables
$W_1,W_2,U_1,U_2$ satisfying the following:
\\
1) $(W_i,U_i) \leftrightarrow X_i \leftrightarrow (Y_{1i}, Y_{2i})$ for $i=1,2$.
\\
2) $(W_1,U_1,X_1,Y_{11},Y_{21})$ and $(W_2,U_2,X_2,Y_{12},Y_{22})$ are independent of each other.
\\
3) There exist functions $g_{11}(W_1,U_1,Y_{11}) = \wh{X}_{11}$,
$g_{12}(W_2,Y_{12}) = \wh{X}_{12}$, $g_{21}(W_1,Y_{21}) = \wh{X}_{21}$,
and $g_{22}(W_2,U_2,Y_{22}) = \wh{X}_{22}$ such that they satisfy (\ref{wt:dist_const1}) and (\ref{wt:dist_const2}). 
\\
4) $|{\cal W}_i| \le |{\cal X}_i| + 3$ and 
$|{\cal U}_i| \le |{\cal X}_i| \cdot (|{\cal X}_i| + 3) + 1$ for $i=1,2$, where
${\cal W}_i$ and ${\cal U}_i$ are alphabets of $W_i$ and $U_i$
respectively.
\end{theorem}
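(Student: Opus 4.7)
The plan is to prove Theorem \ref{theorem:watanabe} by deriving the achievable direction from Theorem \ref{thm:gen_ach} (specifically its two-decoder specialization Lemma \ref{lemma:ach_2dec}) and the converse from the LP lower bound of Theorem \ref{theorem:lowergen}, routed through the minimax lower bound of Theorem \ref{theorem:minmax} and the domination result in Theorem \ref{theorem:subs_minmax}.

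For achievability, I would apply Lemma \ref{lemma:ach_2dec} with the identifications $U_{\{1,2\}} = (W_1, W_2)$, $U_{\{1\}} = U_1$, $U_{\{2\}} = U_2$, where $(W_1, U_1)$ and $(W_2, U_2)$ are the auxiliary random variables from Theorem \ref{theorem:watanabe}. The product structure $(X_1, Y_{11}, Y_{21}) \perp (X_2, Y_{12}, Y_{22})$ together with the cross-side independence of $(W_1, U_1)$ from $(W_2, U_2)$ (which is part of condition 2) makes each mutual information in the lemma split cleanly,
\begin{align*}
I(X; W_1, W_2 \mid Y_j) &= I(X_1; W_1 \mid Y_{j1}) + I(X_2; W_2 \mid Y_{j2}), \\
I(X; U_i \mid W_1, W_2, Y_i) &= I(X_i; U_i \mid W_i, Y_{ii}).
\end{align*}
Substituting into the formula of Lemma \ref{lemma:ach_2dec} identifies the two candidates inside its outer $\max$ with exactly $R^M_1$ and $R^M_2$, so the resulting upper bound is $\min\max\{R^M_1, R^M_2\}$ as claimed.

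For the converse, Theorem \ref{theorem:subs_minmax} already gives $R(\mb{D}) \ge R^m_{lb}(\mb{D} + \epsilon \mb{1})$, so it suffices to show $R^m_{lb}(\mb{D}) \ge R^M(\mb{D})$. I would instantiate $\bar{P}$ in Theorem \ref{theorem:minmax} at the genuine joint distribution (which satisfies (\ref{wt:indp})--(\ref{wt:degraded})), then start from an arbitrary feasible triple $(V, U_{Y_1}, U_{Y_2})$ and construct $(W_1, W_2, U_1, U_2)$ by projecting each auxiliary onto the two independent ``sides.'' The bound $\bar{R}_{lb}$ is a maximum over two permutations; after decomposition each of these two sums specializes to $R^M_1$ (for $\sigma=(1,2)$) or $R^M_2$ (for $\sigma=(2,1)$), since the Markov chains $X_1 \lra Y_{11} \lra Y_{21}$ and $X_2 \lra Y_{22} \lra Y_{12}$ let one replace the conditioning on the ``worse'' side information by conditioning on the ``better'' one inside each $I(X_i; U_i \mid W_i, \cdot)$ term. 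Taking $\epsilon \to 0$ and applying continuity of $R^M$ closes the argument; the cardinality bounds on $\mathcal{W}_i, \mathcal{U}_i$ follow from standard Carath\'eodory-type support-lemma arguments.

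The main obstacle will be the converse decomposition. A generic feasible triple $(V, U_{Y_1}, U_{Y_2})$ does not literally factor along the $X_1$/$X_2$ split, so one must rigorously define the projection that turns it into $(W_1, W_2, U_1, U_2)$ without losing the distortion-feasibility or increasing the rate. The key technical lever is that, under (\ref{wt:indp}), one has $X_i \perp Y_{j, 3-i} \mid Y_{ji}$, which lets one replace each auxiliary by its conditional on the relevant side without changing the pertinent mutual informations. Carefully aligning each permutation bound in $\bar{R}_{lb}$ with the matching $R^M_j$, using (\ref{wt:degraded}) to move the conditioning to the ``better'' side in exactly the right term, is where essentially all the work concentrates.
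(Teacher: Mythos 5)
First, note what the paper itself does with this statement: Theorem \ref{theorem:watanabe} is quoted from \cite{watanabe} and is not proved in this paper at all; the in-paper material closest to your plan is Lemma \ref{lemma:ach_2dec} (for the direct part) and Theorem \ref{theorem:subs_mismatched}, which proves exactly the inequality $\liminf_{\epsilon\to 0}R^m_{lb}(\mb{D}+\epsilon\mb{1})\ge R^M(\mb{D})$ that your converse needs. Your achievability half is fine and coincides with how the paper (and Watanabe) obtain the direct part: with $U_{\{1,2\}}=(W_1,W_2)$, $U_{\{1\}}=U_1$, $U_{\{2\}}=U_2$, the independence in condition 2) and (\ref{wt:indp}) split every term of (\ref{eq:Timo:two}) as you state, the Markov and reconstruction conditions of $C_{ach,v}(\mb{D})$ are inherited from conditions 1)--3), and the cardinality bounds only shrink the feasible set, so $R(\mb{D})\le R_{ach}(\mb{D})\le R^M(\mb{D})$. (Small misattribution: $R(\mb{D})\ge R^m_{lb}(\mb{D}+\epsilon\mb{1})$ is Theorem \ref{theorem:minmax} itself; Theorem \ref{theorem:subs_minmax} gives $R_{lb}\ge R^m_{lb}$.)

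The genuine gap is in the converse, and it is precisely the step you defer. Showing $\bar{R}_{lb}\ge\max\{R^M_1,R^M_2\}$ for an \emph{arbitrary} feasible $(V,U_{Y_1},U_{Y_2})$ is the entire content of Theorem \ref{theorem:subs_mismatched}, and the construction that works there is not a ``projection of each auxiliary onto its side.'' One keeps the auxiliaries whole and \emph{crosses} them, augmenting with side-information components: $W_1=(V,U_{Y_2},Y_{22})$, $W_2=(V,U_{Y_1},Y_{11})$, $U_1=U_{Y_1}$, $U_2=U_{Y_2}$; then $\bar{R}_{lb1}$ and $\bar{R}_{lb2}$ are expanded by the chain rule over $(X_1,X_2)$ using (\ref{wt:indp})--(\ref{wt:degraded}) and lower bounded term-by-term by $R^M_1$ and $R^M_2$. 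Watanabe's independence condition 2) is \emph{not} established by marginalizing the auxiliaries per side; it is imposed at the end without loss, exactly because $(W_1,U_1)$ and $(W_2,U_2)$ never appear in the same information term of $R^M_1,R^M_2$. A per-side projection or conditioning of $U_{Y_1},U_{Y_2}$ cannot produce this identification (note $W_1$, which must be computable at decoder $2$, is built from decoder $2$'s auxiliary $U_{Y_2}$ and $Y_{22}$, not from any reduction of $U_{Y_1}$), and it also puts the distortion feasibility of condition 3) at risk, since the decoders' reconstructions may genuinely use the joint auxiliary. So the high-level architecture of your proof is sound, but as written the converse rests on a construction that does not match what is needed; to close it you should reproduce the crossing argument of Theorem \ref{theorem:subs_mismatched} (plus the continuity-in-$\epsilon$ and cardinality steps, which both the paper and you defer to \cite{watanabe}).
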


\begin{theorem}
\label{theorem:subs_mismatched}
$\liminf_{\epsilon \rightarrow 0}R^m_{lb}(\mb{D} + \epsilon \mb{1}) \ge R^M(\mb{D})$  and $R_{ach}(\mb{D}) \le R^M(\mb{D})$.
\end{theorem}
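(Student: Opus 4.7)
The plan is to prove the two inequalities separately.

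For the achievability direction $R_{ach}(\mb{D}) \le R^M(\mb{D})$, I would apply Lemma~\ref{lemma:ach_2dec} with the assignment $U_{\{1,2\}} = (W_1, W_2)$, $U_{\{1\}} = U_1$, $U_{\{2\}} = U_2$, where $(W_1, W_2, U_1, U_2)$ are any auxiliaries admissible in Theorem~\ref{theorem:watanabe}. The Markov condition $\set{U} \to X \to (Y_1, Y_2)$ of $C_{ach,v}(\mb{D})$ follows from Watanabe's condition 1) and the product structure (\ref{wt:indp}); the conditional independence $U_{\{1\}} \perp U_{\{2\}} \mid X, U_{\{1,2\}}$ used in the proof of Lemma~\ref{lemma:ach_2dec} follows from Watanabe's condition 2); and distortion feasibility follows by taking $g_i(\cdot) = (g_{i1}(\cdot), g_{i2}(\cdot))$ from Watanabe's condition 3). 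Using (\ref{wt:indp}) and (\ref{wt:degraded}), the mutual-information terms in Lemma~\ref{lemma:ach_2dec} reduce to
\begin{align*}
I(X; W_1, W_2 \mid Y_i) &= I(X_1; W_1 \mid Y_{i1}) + I(X_2; W_2 \mid Y_{i2}), \\
I(X; U_j \mid W_1, W_2, Y_j) &= I(X_j; U_j \mid W_j, Y_{jj}),
\end{align*}
for $i, j \in \{1, 2\}$, recovering $\max(R^M_1, R^M_2)$. Infimizing over Watanabe's auxiliaries then yields $R_{ach}(\mb{D}) \le R^M(\mb{D})$.

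For the lower-bound direction $\liminf_{\epsilon \to 0} R^m_{lb}(\mb{D} + \epsilon\mb{1}) \ge R^M(\mb{D})$, I would specialize the outer supremum in $R^m_{lb}$ to the true joint coupling $\bar{P}^\ast$ satisfying (\ref{wt:indp}) and (\ref{wt:degraded}). The task then reduces to showing, for each minimax-admissible $(V, U_{Y_1}, U_{Y_2})$ with reconstructions $g_1, g_2$ meeting the distortions, that $\max(A, B) \ge R^M(\mb{D})$, where $A, B$ are the two permutation sums in $\bar{R}_{lb}$. The key step is to manufacture Watanabe-admissible $(W_1, W_2, U_1, U_2)$ by marginalization: set the joint conditional $p(w_j, u_j \mid x_j) = p(v, u_{Y_j} \mid x_j)$ (obtained by integrating out $X_{3-j}$) and couple so that $(W_1, U_1, X_1, Y_{11}, Y_{21})$ is independent of $(W_2, U_2, X_2, Y_{12}, Y_{22})$. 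Watanabe's condition 2) then holds by construction, and one checks using $V \perp Y_{ij} \mid X_j$ (which follows from $X_1 \perp X_2$ and the Markov chain $(V, U_{Y_j}) \to X \to (Y_1, Y_2)$) that the relevant joint laws match, so that $I(X_j; V \mid Y_{ij}) = I(X_j; W_j \mid Y_{ij})$ and $I(X_j; U_{Y_j} \mid V, Y_{jj}) = I(X_j; U_j \mid W_j, Y_{jj})$. Finally, chain-rule manipulations of $A$ and $B$, together with the entropy inequality $I(X_1, X_2; W \mid Y_{11}, Y_{12}) \ge I(X_1; W \mid Y_{11}) + I(X_2; W \mid Y_{12})$ (valid since $X_1 \perp X_2 \mid Y_{11}, Y_{12}$), and the degradedness chains to strip cross-decoder side information, should deliver $A \ge R^M_1$ and $B \ge R^M_2$.

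The main obstacle lies in the distortion-preservation step of Part~B. The minimax formulation lets $g_i$ depend jointly on $(V, U_{Y_i}, Y_i)$, and because $V, U_{Y_1}, U_{Y_2}$ may cross-correlate the two source halves, the optimal joint reconstruction of $X_1$ can legitimately depend on $Y_{12}$ through the $X_2$-side information in $V$, whereas Watanabe's formulation mandates the functional separation $\wh{X}_{ij} = g_{ij}(\cdot)$ depending only on $X_j$-side variables. Bridging this gap is the most delicate part of the plan: the idea is to use (\ref{wt:indp}) and a separable-distortion conditional-expectation argument to show that, after the marginalization step, the optimal separated Watanabe reconstruction achieves distortion no worse than the original joint reconstruction up to $\epsilon$. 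Once this separation lemma is in hand, the reverse of the chain-rule identities used in Part~A will identify $\max(A, B)$ with $\max(R^M_1, R^M_2)$ and complete the argument.
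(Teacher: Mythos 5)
Your achievability half is fine and is essentially the paper's route: Watanabe-admissible $(W_1,W_2,U_1,U_2)$ plugged into Lemma~\ref{lemma:ach_2dec} via $U_{\{1,2\}}=(W_1,W_2)$, $U_{\{1\}}=U_1$, $U_{\{2\}}=U_2$, with (\ref{wt:indp}) collapsing the mutual-information terms to $\max\{R^M_1,R^M_2\}$, indeed gives $R_{ach}(\mb{D})\le R^M(\mb{D})$ (the paper simply cites that Watanabe's achievability is built on the two-decoder rate of Lemma~\ref{lemma:ach_2dec}).

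The lower-bound half, however, has a genuine gap, and it is precisely the step you flag. The per-block marginalization $p(w_j,u_j|x_j)=p(v,u_{Y_j}|x_j)$ with a product coupling cannot in general meet Watanabe's condition 3), and no ``separation lemma'' can rescue it: the minimax reconstruction of $X_1$ at decoder 1 uses $(V,U_{Y_1},Y_{11},Y_{12})$, and $V$ may be useful for $X_1$ only jointly with $Y_{12}$. Concretely, take $V=X_1\oplus X_2$ with uniform binary components, $U_{Y_1}=U_{Y_2}=\emptyset$, $Y_{22}=Y_{12}=X_2$, $Y_{11}=Y_{21}$ trivial, and all distortion constraints loose except a small $D_{11}$; this tuple is minimax-feasible since $\wh{X}_{11}=V\oplus Y_{12}=X_1$, yet your marginalized $W_1$ is independent of $X_1$, so no $g_{11}(W_1,U_1,Y_{11})$ can meet $D_{11}$. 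The paper avoids this by choosing \emph{crossed} auxiliaries: $W_1=(V,U_{Y_2},Y_{22})$, $W_2=(V,U_{Y_1},Y_{11})$, $U_i=U_{Y_i}$. With this choice, condition 1) follows from the Markov chain and (\ref{wt:indp}); condition 3) holds because $g_{12}(W_2,Y_{12})$ and $g_{21}(W_1,Y_{21})$ see exactly decoder 1's and decoder 2's minimax information, while $g_{11}$ and $g_{22}$ can simulate the missing degraded side information ($Y_{12}$ from $Y_{22}$, $Y_{21}$ from $Y_{11}$) via (\ref{wt:degraded}); and condition 2) is then imposed without loss of generality because $(W_1,U_1)$ and $(W_2,U_2)$ never appear together in $R^M_1,R^M_2$ --- note this is the opposite of your plan, which enforces block independence in the construction. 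Chain-rule manipulations of the two permutation sums using (\ref{wt:degraded}) then give $\bar{R}_{lb1}\ge R^M_1$ and $\bar{R}_{lb2}\ge R^M_2$, and a cardinality/continuity-in-$\epsilon$ argument as in \cite{watanabe} completes the proof; your weaker marginalized information identities are not needed once the crossed auxiliaries are used.
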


\begin{proof}
We select the joint distribution of $( \mbt{X}, \mbt{Y_1}, \mbt{Y_2})$ such that it satisfies (\ref{wt:degraded}). 
First we show $\liminf_{\epsilon \rightarrow 0}R^m_{lb}(\mb{D} + \epsilon \mb{1}) \ge R^M(\mb{D})$.
Let $U_{Y} = (V, U_{Y_1})$ and $U_{Z} =(V, U_{Y_2})$. Then $\bar{R}_{lb}$ in Theorem \ref{theorem:minmax}  can be written as\footnote{Note that Theorem \ref{theorem:minmax} can be applied to vector-valued sources and side information at the decoders.}
$\bar{R}_{lb} = \max \{\bar{R}_{lb1}, \bar{R}_{lb2}\}$, 
\begin{align*}
&\bar{R}_{lb1} =  I( \mbt{X}; U_{Y}| \mbt{Y_1}) + I( \mbt{X}; U_{Z}| U_{Y}, \mbt{Y_{1}}, \mbt{Y_{2}})
\\
& \bar{R}_{lb2} =   I( \mbt{X};  U_{Z}| \mbt{Y_2}) + I( \mbt{X}; U_{Y}| U_{Z}, \mbt{Y_{1}}, \mbt{Y_{2}}).
\end{align*}

By the chain rule and using (\ref{wt:degraded}), $R_{lb1}$ can be rewritten as
\begin{align*}
&I(X_2;U_{Y},Y_{11}|Y_{12}) + I(X_1;U_{Y},Y_{12},X_2|Y_{11}) 
 + I(X_2;U_{Z}|U_{Y},Y_{11},Y_{22}) +  I(X_1;U_{Z}|U_{Y},Y_{22},X_2,Y_{11})
\\
&\overset{a}{=}I(X_2;U_{Y},Y_{11}|Y_{12}) + I(X_1;U_{Y},Y_{22},X_2|Y_{11})
  + I(X_2;U_{Z}|U_{Y},Y_{11},Y_{22}) +  I(X_1;U_{Z}|U_{Y},Y_{22},X_2,Y_{11})
%
\\
&\overset{b}{=} I(X_2;U_{Y},Y_{11}|Y_{12}) + I(X_1;U_{Y},Y_{22},X_2,U_{Z}|Y_{11}) 
+ I(X_2;U_{Z}|U_{Y},Y_{11},Y_{22}) 
\\
&\ge I(X_2;U_{Y},Y_{11}|Y_{12}) + I(X_1;U_{Y},Y_{22},U_{Z}|Y_{11}) 
 + I(X_2;U_{Z}|U_{Y},Y_{11},Y_{22}),
\end{align*}
 which equals $I(X_2;W_2|Y_{12}) + I(X_1;W_1,U_1|Y_{11}) + I(X_2;U_2|W_2,Y_{22})$ $= R^M_1$,
where $W_2= (V, U_{Y_1},Y_{11})$, $W_1 = (V, U_{Y_2},Y_{22})$,
$U_1 = U_{Y_1}$ and $U_2 = U_{Y_2}$. Here
\\
a: follows since $I(X_1;Y_{12},Y_{22}| X_2,U_{Y},U_{Z},Y_{11}) =0$.
\\
b: follows by combining the second and last term.

Similarly, we can obtain $\bar{R}_{lb2} \ge R^M_2$. 

Note that $(U_{Y},U_{Z})\lra (X_1,X_2) \lra (Y_{11},Y_{12},Y_{21},Y_{22})$ implies the first condition of the minimization in Theorem \ref{theorem:watanabe}.
Also, the distortion constraints in $R^m_{lb}(\mb{D} + \epsilon\mb{1})$ imply the third 
condition of the minimization with $\epsilon$ added to distortion constraints in Theorem \ref{theorem:watanabe}.
Hence, we can write 
\begin{align}
\label{ineq:proof_watanabe}
R^m_{lb}(\mb{D} + \epsilon\mb{1}) \ge \inf[\max\{ R^M_1,  R^M_2\}]  - \epsilon, 
\end{align}
where the minimization is over $(W_1, U_1,W_2, U_2)$ satisfying the first and third conditions of the minimization  in Theorem \ref{theorem:watanabe}.
Also, since $(W_1, U_1)$ and $(W_2, U_2)$ do not appear together, we can add the condition 2) in Theorem \ref{theorem:watanabe} to the minimization in (\ref{ineq:proof_watanabe}). Lastly, cardinality bounds on $(W_1,W_2,U_1,U_2)$ can be obtained as in $R^M(\mb{D})$  and  the right-hand side of (\ref{ineq:proof_watanabe}) can be shown to be continuous in $\epsilon$ using the same procedure as in \cite{watanabe}.

It remains to show that $R_{ach}(\mb{D}) \le R^M(\mb{D})$. In \cite{watanabe}, $R^T(\mb{D})$ in Lemma \ref{lemma:ach_2dec} is utilized to obtain $R^M(\mb{D})$. Hence, from Lemma \ref{lemma:ach_2dec}, we have $R_{ach}(\mb{D}) \le R^M(\mb{D})$.
\end{proof}
\subsection{Rate-Distortion Function with Conditionally Less Noisy Side Information \cite{timo_lessnoisy}} 
There are two decoders, and the distortion measure at decoder 1, 
$d_1(\cdot,\cdot)$, is such that $d_1(X,\wh{X}) = 0$ if $\wh{X} = a(X)$ and $d_1(X,\wh{X}) = 1$ otherwise, where $a(X)$ is a deterministic map. Also the allowable distortion at decoder $1$, $D_1$, is taken as zero. Timo \textit{et al.} \cite{timo_lessnoisy} show that their lower bound for this problem is tight if $Y_2$ is conditionally less noisy than $Y_1$, i.e., $(Y_2 \succeq Y_1 | a(X))$, and $ H(a(X) | Y_1)  \ge  H(a(X) | Y_2)$. Although whether the minimax lower bound in Theorem \ref{theorem:minmax} is tight for this problem is not known, the next theorem shows that $R_{lb}(\mb{D} + \epsilon \mb{1})$ subsumes the lower bound  in \cite{timo_lessnoisy}  when  $ (Y_2 \succeq Y_1 | a(X))$.
\begin{theorem}
$\liminf_{\epsilon \rightarrow 0}R_{lb}(\mb{D} + \epsilon \mb{1})  \ge R^{LN}(\mb{D})$  and $R_{ach}(\mb{D}) \le R^{LN}(\mb{D})$ where 
\begin{align*}
R^{LN}(\mb{D}) =& H(a(X) | Y_1) 
 + \min_{
\substack{
W\leftrightarrow X \leftrightarrow (a(X), Y_2)
\\
\mathbb{E[} d_2(X, g_2(W, a(X),Y_2) )]\le D_2 
\\ |\mathcal{W}| \le |\mathcal{X}| + 1,
}} I(X; W | a(X), Y_2)
\end{align*}
is the lower bound in \cite[Lemma 5]{timo_lessnoisy} when $ (Y_2 \succeq Y_1 | a(X))$.
\end{theorem}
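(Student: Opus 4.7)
The plan is to establish the two inequalities separately. For the lower bound, I would chain the LP constraints of Theorem~\ref{theorem:lowergen} from $K(\emptyset)$ down to $K(X) = 0$, collecting $H(a(X)\mid Y_1)$ along the way from decoder 1's lossless reconstruction and converting the residual into $I(X; W \mid a(X), Y_2)$ via the conditional less noisy hypothesis. For the upper bound, Lemma~\ref{lemma:ach_2dec} yields the result through a specific choice of auxiliary random variables.

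For the lower bound, fix $\epsilon > 0$ and any feasible $(V, U_\cdot)$ in $R_{lb}(\mb{D} + \epsilon\mb{1})$. I would prove the chain
\begin{align*}
K(\emptyset) &\overset{(a)}{\ge} K(Y_1) \overset{(b)}{\ge} K((Y_1, a(X))) + I(a(X); V, U_{Y_1} \mid Y_1) \\
&\overset{(c)}{\ge} K((Y_1, a(X))) + H(a(X) \mid Y_1) - \delta(\epsilon) \\
&\overset{(d)}{\ge} K((Y_2, a(X))) + H(a(X) \mid Y_1) - \delta(\epsilon) \\
&\overset{(e)}{\ge} I(X; V, U_{(Y_2, a(X))} \mid Y_2, a(X)) + H(a(X) \mid Y_1) - \delta(\epsilon),
\end{align*}
where (a) is \textit{monotonicity} (trivially $Y_1 \succeq \emptyset$), (b) is \textit{monotonicity+} on $Y_1 \lra (Y_1, a(X)) \lra X$, (c) is Fano's inequality applied to decoder 1's near-lossless reconstruction of $a(X)$ within average distortion $\epsilon$ (so $\delta(\epsilon) \to 0$ as $\epsilon \to 0$), (d) is \textit{monotonicity} using the hypothesis $(Y_2 \succeq Y_1 \mid a(X))$ with the outer variable $C = a(X)$, and (e) is \textit{monotonicity+} on $(Y_2, a(X)) \lra X \lra X$ combined with $K(X) = 0$.

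I would then identify $W = (V, U_{(Y_2, a(X))})$ as a candidate in the minimization defining $R^{LN}(\mb{D})$. The Markov chain $W \lra X \lra (a(X), Y_2)$ is inherited from the joint distribution in Theorem~\ref{theorem:lowergen}. For decoder 2's distortion, the original feasibility provides $g_2'(V, U_{Y_2}, Y_2)$ achieving $D_2 + \epsilon$; condition 2 of Theorem~\ref{theorem:lowergen} applied to $X \lra (Y_2, a(X)) \lra Y_2$ yields a coupling $X \lra (V, U_{(Y_2, a(X))}) \lra U_{Y_2}$, so decoder 2 can sample $U_{Y_2}$ from the induced conditional distribution (which, by the Markov chain, depends only on $(V, U_{(Y_2, a(X))}) \subseteq (W, a(X), Y_2)$) and then apply $g_2'$, achieving the same expected distortion. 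Taking the infimum over $(V, U_\cdot)$ and then $\liminf_{\epsilon \to 0}$, with continuity of the $R^{LN}$ minimum in the distortion level, gives $\liminf_{\epsilon \to 0} R_{lb}(\mb{D} + \epsilon\mb{1}) \ge R^{LN}(\mb{D})$. For the upper bound, I would apply Lemma~\ref{lemma:ach_2dec} with $U_{\{1,2\}} = a(X)$, $U_{\{1\}}$ trivial, and $U_{\{2\}} = W$ for any $W$ feasible in the definition of $R^{LN}(\mb{D})$; the resulting rate $\max\{H(a(X)\mid Y_1), H(a(X)\mid Y_2)\} + I(X; W \mid a(X), Y_2)$ collapses to $R^{LN}(\mb{D})$ under the accompanying Timo \textit{et al.}\ hypothesis $H(a(X)\mid Y_1) \ge H(a(X)\mid Y_2)$.

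The main obstacle is verifying that the identified $W$ satisfies decoder 2's distortion constraint. Producing the Markov chain $X \lra (V, U_{(Y_2, a(X))}) \lra U_{Y_2}$ via condition 2 of Theorem~\ref{theorem:lowergen} and then using it to simulate decoder 2's original reconstruction rule from $(W, a(X), Y_2)$ without inflating $I(X; W \mid a(X), Y_2)$, while ultimately producing a deterministic decoder, is the delicate step that ties together the LP chain with the $R^{LN}$ minimization.
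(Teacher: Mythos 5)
Your proposal follows essentially the same route as the paper: the identical chain $K(\emptyset)\ge K(Y_1)\ge K((a(X),Y_1))+H(a(X)\mid Y_1)-\delta(\epsilon)\ge K((a(X),Y_2))\ge I(X;V,U_{(a(X),Y_2)}\mid a(X),Y_2)$ via \emph{monotonicity}, \emph{monotonicity+} with Fano, and the conditionally-less-noisy hypothesis, and the same achievability step via Lemma~\ref{lemma:ach_2dec} with $U_{\{1,2\}}=a(X)$, $U_{\{1\}}=\emptyset$, $U_{\{2\}}=W$ (using $H(a(X)\mid Y_1)\ge H(a(X)\mid Y_2)$ to collapse the max). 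Your explicit check that $W=(V,U_{(a(X),Y_2)})$ is feasible in the $R^{LN}$ minimization—simulating $U_{Y_2}$ through the coupling guaranteed by condition 2) of Theorem~\ref{theorem:lowergen} and then derandomizing the reconstruction—is a correct filling-in of a step the paper's proof leaves implicit, not a different approach.
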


\begin{proof}
We begin with showing $\liminf_{\epsilon \rightarrow 0}R_{lb}(\mb{D} + \epsilon \mb{1}) \ge R^{LN}(\mb{D})$. Similar to the proof of Theorem \ref{theorem:minmax}, first we consider  $R_{lb}(\mb{D} + \epsilon \mb{1})$. 
For a given $\epsilon > 0$ we can write
\begin{align}
\label{ineq:relaxedlb2}
R_{lb}(\mb{D} + \epsilon \mb{1})  \ge \inf_{V \in \mathcal{C}(X)} \inf_{U_{\cdot}  : \mathcal{C}(X) \rightarrow \mathcal{C}(X, V)} {R}^{LP}_{lb}(\epsilon) 
\end{align}
where the $LP$ constraints on the random variables $(X, a(X),Y_1,Y_2)$ are as in the problem description. Now we find a lower bound to 
$R^{LP}_{lb}(\epsilon)$ in (\ref{ineq:relaxedlb2}) by utilizing some of the LP constraints.  Note that we can write
\begin{align*}
&K(\emptyset) \ge K(Y_1) \mbox{ by \textit{(monotonicity)}}
\\
&K(Y_1)  \ge  K(a(X), Y_1) + H(a(X) | Y_1) - \delta(\epsilon)
\mbox{ by  \textit{(monotonicity+)}, Fano's inequality, and  }\delta(\epsilon) > 0, 
\\
&K(a(X), Y_1) \ge K(a(X), Y_2) \mbox{ by \textit{(monotonicity)}}, 
\\
&K(a(X), Y_2) \ge I(X; V,U_{a(X)Y_2} | a(X), Y_2 )
\mbox{ by \textit{(monotonicity+)} and }
K(X, a(X), Y_2) = 0.
\end{align*}
 Hence, $R_{lb}(\mb{D}+ \epsilon\mb{1})$ is lower bounded by 
\begin{align*}
\inf_{V \in \mathcal{C}(X)} \inf_{U_{\cdot}  : \mathcal{C}(X) \rightarrow \mathcal{C}(X, V)}  H(a(X) | Y_1)  +  I(X; V, U_{a(X)Y_2} | a(X), Y_2 ) -\delta(\epsilon).
\end{align*}
By finding a cardinality constraint on $(V, U_{a(X)Y_2})$ and  letting $\epsilon \rightarrow 0$,
we have the result.

Now we show that $R_{ach}(\mb{D}) \le R^{LN}(\mb{D})$. By selecting the auxiliary random variables $U_{\{1,2\}} = a(X)$, $U_{\{1\}} = \emptyset$ and $U_{\{2\}} = W$ in Lemma \ref{lemma:ach_2dec} and imposing the cardinality constraint $|\mathcal{W}| \le |\mathcal{X}| +1$, we have $R_{ach}(\mb{D}) \le R^{LN}(\mb{D})$.
\end{proof}

\subsection{Odd-cycle Index Coding Problem}
 The source $\mb{X} = (X_1, \ldots, X_m)$, where $m \ge 5$ is an odd number, is i.i.d.\ Bernoulli ($\frac{1}{2}$) bits. The side information at decoder $i$, $i \in [m]$ is $\mb{Y_i} = (X_{i-1}, X_{i+1})$, where $+$ and $-$ in subscripts are modulo-$m$ operations\footnote{Here $x \mod m$ is assumed to lie in $[m]$.}, and decoder $i$ wishes to reconstruct $X_i$ with a vanishing block error probability.

Although the achievability result Theorem \ref{thm:gen_ach} is for  per-letter distortion constraints,  it can be easily modified to accommodate block error probabilities. Let $v \in \set{V}$ be fixed. Then we select the messages
 $\mbt{U}_{\set{S}_j}$, $\set{S}_j \in v$ such that 
\begin{align}
\mbt{U}_{jk} = (X_{j}, X_{k}) \mbox{ for } j \in [m],  k \equiv j+1\mod m
\end{align}
and all of the other messages $\mbt{U_{\set{S}_j}}$ are chosen to be $\emptyset$. \footnote{We represent $\mbt{U}_{\{j,k\}}$ as $\mbt{U}_{jk}$ for ease of notation.} 
 Let $j \in [m]$,  $i \equiv j-1\mod m$ , $k \equiv j+1 \mod m$, and $l \equiv k+1\mod m$. Then from the conditions in $C^{LP}_{ach}$,  we can write
\begin{align}
R_{jk} &\ge I\big(\mb{X};\mbt{U}_{jk}\big) - R'_{jk}, \mbox{ from condition 2) of } C^{LP}_{ach} 
 \notag  \\
& = 2 - R'_{jk}, 
  \notag  \\
 R'_{jk} &\le
\min\{ I\left(\mbt{U}_{jk};\mbt{U}_{ij},\mb{Y_j}\right),  I\left( \mbt{U}_{jk}; \mbt{U}_{kl},\mb{Y_k}\right)\}, \mbox{ from condition 3) of } C^{LP}_{ach} 
\notag \\
&= 2,
\notag \\
 R'_{ij} + R'_{jk} &\le
 I\left(\mbt{U}_{ij}; \mb{Y_j}\right) +  I\left(\mbt{U}_{jk}; \mbt{U}_{ij},\mb{Y_j}\right),  \mbox{ from condition 3) of } C^{LP}_{ach}
 \notag \\
&= 3.
\end{align}

Then selecting  $R'_{jk} =  \frac{3}{2}$ and $R_{jk} = \frac{1}{2}$
 satisfies the conditions of $C^{LP}_{ach}$. Hence,  rate  $\frac{m}{2}$ is achievable. Also, in  \cite{blasiak}  it is shown that the LP lower bound in Theorem \ref{theorem:blasiak} gives $\frac{m}{2}$ for the zero error case. From Theorem \ref{theorem:subsume_index}, we can conclude that the $R^I_{lb}$ lower bound, which is a valid lower bound for vanishing error probability, also gives  $\frac{m}{2}$ which is the optimal rate for this problem.

Note that prior to this work, the minimax lower bound in Theorem \ref{theorem:minmax}, $R^m_{lb}(\mb{D}+ \epsilon\mb{1})$, was the state-of-the-art lower bound to $R(\mb{D})$ for multiple decoders. The next lemma states that the minimax lower bound is strictly suboptimal for the odd-cycle index coding problem.
\begin{lemma}
  $\limsup_{\epsilon \rightarrow 0}R^m_{lb}(\mb{0}+ \epsilon\mb{1}) < \frac{m}{2}$ for the odd-cycle index coding problem.
\end{lemma}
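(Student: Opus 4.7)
The plan is to exhibit a feasible choice of auxiliary random variables in $\bar C$ for which $\bar R_{lb}$ is upper-bounded by the independence number $\alpha(C_m)=(m-1)/2$ of the odd cycle, which is strictly less than $m/2$.

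I would first observe that because $\mb{Y_i}=(X_{i-1},X_{i+1})$ is a deterministic function of $\mb{X}$, every joint in $\bar P$ must agree with the true joint of $(\mb{X},\mb{Y_1},\ldots,\mb{Y_m})$; hence $\bar P$ collapses to a singleton and $R^m_{lb}(\mb{0}+\epsilon\mb{1}) = \inf_{\bar C}\bar R_{lb}-\epsilon$. I would then take $V=\emptyset$ and $U_{\mb{Y_i}}=X_i$ for each $i\in[m]$: the required Markov chain in the definition of $\bar C$ is immediate since every auxiliary is a function of $\mb{X}$, and decoder $i$ recovers $X_i$ exactly via $g_i(V,U_{\mb{Y_i}},\mb{Y_i})=U_{\mb{Y_i}}$, so the distortion constraints are satisfied for every $\epsilon>0$.

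With these choices, for any permutation $\sigma$ the $j$-th summand of $\bar R_{lb}$ reduces to $I(\mb{X};X_{\sigma(j)}\mid\{X_{\sigma(k)}\}_{k<j},\{X_{\sigma(k)-1},X_{\sigma(k)+1}\}_{k\le j})$ (indices mod $m$). Since the $X_i$ are i.i.d.\ Bernoulli$(1/2)$, this mutual information equals $1$ if $X_{\sigma(j)}$ is absent from the conditioning and $0$ otherwise; as $\sigma$ is a permutation and $\sigma(j)\ne\sigma(j)\pm1$, the summand equals $1$ exactly when $\sigma(j)$ is not cycle-adjacent to any $\sigma(k)$ with $k<j$. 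The set of such indices is therefore an independent set in $C_m$, so $\bar R_{lb}\le\alpha(C_m)=(m-1)/2$ for every $\sigma$. Taking the maximum over $\sigma$ and the infimum over $\bar C$ yields $R^m_{lb}(\mb{0}+\epsilon\mb{1})\le(m-1)/2-\epsilon$, whence $\limsup_{\epsilon\to 0}R^m_{lb}(\mb{0}+\epsilon\mb{1})\le(m-1)/2<m/2$.

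The main subtlety is the combinatorial bookkeeping in the third step: each $\mb{Y_{\sigma(k)}}$ reveals exactly the two cycle-neighbors of $X_{\sigma(k)}$ while each $U_{\mb{Y_{\sigma(k)}}}$ reveals $X_{\sigma(k)}$ itself, so the running conditioning at step $j$ coincides with the closed neighborhood in $C_m$ of $\{\sigma(k):k<j\}$ together with the two neighbors of $\sigma(j)$. Once this identification is made, the reduction to a greedy-independent-set count on $C_m$ is immediate and the strict gap $(m-1)/2<m/2$ follows from $m$ being odd.
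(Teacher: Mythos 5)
Your proposal is correct, and it starts from exactly the same feasible point as the paper: $V=\emptyset$, $U_{Y_i}=X_i$, with $\bar{P}$ collapsing (or, what suffices for an upper bound, every element of $\bar{P}$ forcing $Y_i=(X_{i-1},X_{i+1})$), so that each summand of $R_\sigma$ reduces to $H(X_{\sigma(j)}\mid X_{\sigma(1)},\ldots,X_{\sigma(j-1)},Y_{\sigma(1)},\ldots,Y_{\sigma(j)})\in\{0,1\}$. Where you diverge is in how strictness below $m/2$ is concluded. The paper argues indirectly: it observes that the expression obtained at this feasible point is integer-valued (it coincides with the maximin index-coding bound of the earlier reference), and combines this with the fact, established just before the lemma, that $m/2$ is the optimal rate, so the minimax bound cannot exceed $m/2$; an integer at most $m/2$ with $m$ odd is at most $(m-1)/2$. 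You instead bound the value directly: a summand equals $1$ exactly when $\sigma(j)$ is not cycle-adjacent to any earlier $\sigma(k)$ (your bookkeeping here is right, including the harmless $k=j$ terms $X_{\sigma(j)\pm1}$), the indices contributing $1$ therefore form an independent set of the $m$-cycle, and the independence number is $(m-1)/2$, giving $R_\sigma\le(m-1)/2$ for every $\sigma$. Your route is self-contained -- it does not invoke the achievability of $m/2$ or integrality -- and it yields the explicit value $(m-1)/2$, which in fact parallels the paper's Gaussian analogue (Lemma~\ref{lemma:minimax_odd_cycle_gauss}, where the minimax bound is shown to equal $\frac{m-1}{4}\log\frac{1}{D}$); the paper's route is shorter given the results already proved in that section. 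One cosmetic remark: the per-permutation quantity you bound is $R_\sigma$ rather than $\bar{R}_{lb}$ (which already denotes the maximum over $\sigma$), but this does not affect the argument.
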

\begin{proof}
Firstly, note that $\limsup_{\epsilon \rightarrow 0} R^m_{lb}(\mb{0}+ \epsilon\mb{1}) \le \frac{m}{2}$ since $\frac{m}{2}$ is the optimal rate for the odd-cycle case.  Also, notice that when we select $(V, U_{Y_1}, \ldots, U_{Y_m})$ such that $V = \emptyset$, and $U_{Y_{i}} = X_i$, $i \in [m]$,  the random variables $(V, U_{Y_1}, \ldots, U_{Y_m})$ are feasible in the optimization problem in Theorem \ref{theorem:minmax} and  $\bar{R}_{lb}$ becomes 
\begin{align}
\label{eq:odd_cycle_minimax_int}
\max_{\sigma} \big{[} H( X_{\sigma(1)}|Y_{\sigma(1)})
+ H(X_{\sigma(2)}|  X_{\sigma(1)},Y_{\sigma(1)},Y_{\sigma(2)}) +\cdots
+ H( X_{\sigma(m)}| X_{\sigma(1)},\ldots, X_{\sigma(m-1)},{Y}) \big{]},
\end{align}
which is equal to the maximin lower bound for index coding in \cite{sinem_index} implying $\liminf_{\epsilon \rightarrow 0} R^m_{lb}(\mb{0}+ \epsilon\mb{1}) = \bar{R}_{lb}$. The fact that (\ref{eq:odd_cycle_minimax_int}) must take a integer value concludes the proof.
\end{proof}
Note that if we restrict the selection of auxiliary random variables to be a subset of the source $\mbt{X}$ in Theorem 2 of \cite{timo}, the scheme in \cite{timo} becomes valid and each mutual information term in the optimization becomes an entropy of a subset of the source which gives an integer value. Hence, in this case the scheme in \cite{timo} gives an integer rate which is strictly suboptimal for this problem.
\subsection{Odd-cycle Gaussian Rate-Distortion Problem}

We finish with an instance that seems not to be solvable
using existing lower bounds discussed in Section \ref{sec:subsumes}.
The problem setting we consider is analogous to the odd-cycle index coding problem \cite{blasiak}, by taking each source component as an independent Gaussian random variable instead of uniform binary bits and considering a mean square error (MSE) distortion constraint on the reconstructions. Hence, we call it the \textit{odd-cycle Gaussian} problem from now on. Specifically, the source $\mb{X} = (X_1, \ldots, X_m)$, where $m \ge 5$ is an odd number, is a Gaussian vector such that each component is independent of the others and has unit variance.  The side information at decoder $i$, $i \in [m]$ is $\mb{Y_i} = (X_{i-1}, X_{i+1})$, where $+$ and $-$ in the subscripts are modulo-$m$ operations\footnote{Here as well, $x \mod m$ is assumed to lie in $[m]$.}, and decoder $i$ wishes to reconstruct $X_i$ subject to an MSE distortion constraint, i.e.,
$E[(X_i - \wh{X}_i)^2] \le D$ for all $i \in [m]$.
\begin{theorem}
\label{thm:odd_cycle_GI}
The rate-distortion function, $R^{IG}(\mb{D})$, is
\begin{align}
\label{RD:odd_cycle_GI}
&R^{IG}(\mb{D}) = \frac{m}{4}\log\frac{1}{D}.
\end{align} 
\end{theorem}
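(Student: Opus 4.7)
The plan is to establish matching upper and lower bounds, mirroring the odd-cycle index coding proof earlier in this section. For the achievability, I will apply Theorem~\ref{thm:gen_ach_gauss} with $\mathbf{U}_{\set{S}}=\emptyset$ except on the $m$ adjacent-pair sets $\set{S}=\{j,j+1\bmod m\}$, $j\in[m]$, and for each such pair set
\[
\mathbf{U}_{j,j+1} \;=\; X_j + X_{j+1} + N_{j,j+1},
\]
with the $N_{j,j+1}\sim\mathcal{N}(0,\sigma^2)$ mutually independent and $\sigma^2 = 2D/(1-D)$. After subtracting $X_{l-1}$ and $X_{l+1}$ from its two pair-messages, decoder $l$ has two independent Gaussian observations of $X_l$, whose MMSE combination attains variance $\sigma^2/(\sigma^2+2)=D$. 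Ordering the pair-sets cyclically as $v=(\{1,2\},\{2,3\},\dots,\{m,1\})$ and solving the LP $C^{LP}_{ach}$ at the symmetric point $R_{j,j+1} = R'_{j,j+1} = \tfrac{1}{4}\log(1/D)$ yields total rate $\tfrac{m}{4}\log(1/D)$.

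The identities to verify are (i) $I(\mathbf{X},\mathbf{U}^{-}_{j,j+1};\mathbf{U}_{j,j+1}) = \tfrac{1}{2}\log(1/D)$, so that condition~2 of $C^{LP}_{ach}$ is binding; and (ii) for each decoder $l$, Gaussian algebra gives $I(\mathbf{U}_{l-1,l};\mathbf{Y}_l) = \tfrac{1}{2}\log\tfrac{2}{1+D}$ and $I(\mathbf{U}_{l,l+1};\mathbf{U}_{l-1,l},\mathbf{Y}_l) = \tfrac{1}{2}\log\tfrac{1+D}{2D}$, whose sum is exactly $\tfrac{1}{2}\log(1/D)$---matching $R'_{l-1,l}+R'_{l,l+1}$ in the $\set{D}'_l=\set{D}_l$ case of condition~3 rewritten as in~(\ref{thm:ach_gen_LPconds_rep}). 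The single-edge variants of condition~3 reduce to the elementary inequality $(1-\sqrt{D})^2\ge 0$.

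For the converse I will apply Theorem~\ref{theorem:lowergen_comp} with $S_{\mathbf{A}}$ equal to the $2^m$ subsets of $\{X_1,\dots,X_m\}$ viewed as generalized side information. Because the sources are i.i.d.\ Gaussian, a direct calculation yields $R(\mathbf{D}_{\mathbf{A}}) = |T(\mathbf{A},\mathbf{B})|\cdot\tfrac{1}{2}\log(1/D)$ for any $\mathbf{A}\subseteq\mathbf{B}\subseteq\mathbf{X}$, where $T(\mathbf{A},\mathbf{B}) = \{i : \mathbf{Y}_i\subseteq\mathbf{A},\ X_i\in\mathbf{B}\setminus\mathbf{A}\}$. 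The resulting LP is thus a scaled version of the LP governing odd-cycle index coding. To obtain $K(\emptyset)\ge \tfrac{m}{4}\log(1/D)$ I will combine the (monotonicity+) constraints $K(\{X_{i-1},X_{i+1}\}) \ge K(\{X_{i-1},X_i,X_{i+1}\}) + \tfrac{1}{2}\log(1/D)$ with the (submodularity) relations on pairs of consecutive triples $\{X_{i-1},X_i,X_{i+1}\}$ and $\{X_i,X_{i+1},X_{i+2}\}$ and on pairs of adjacent singletons $\{X_i\},\{X_{i+1}\}$. A direct $m=5$ check shows that a cyclic-symmetric combination of these two families forces the stated bound, and the same dual certificate extends to arbitrary odd $m$.

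The main obstacle is the converse: (monotonicity+) alone only yields the local-maxima bound $\lfloor m/2\rfloor\cdot\tfrac{1}{2}\log(1/D)$, which is strictly below $\tfrac{m}{4}\log(1/D)$ for odd $m$. The extra half-log is recovered only through essential use of the odd-cycle submodularity relations, so writing down the correct cyclic-symmetric dual certificate for general odd $m$---analogous to how Blasiak \emph{et al.} treat their odd-cycle LP---is the key computational step.
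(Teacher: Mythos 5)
Your achievability argument is correct and is a genuinely different instantiation of Theorem~\ref{thm:gen_ach_gauss} from the paper's: the paper takes each pair message to be two separately noised components, $\mb{U}_{jk}=(X_j+N_j,\,X_k+\bar N_k)$ with noise variance $2D/(1-D)$, whereas you take a single noised sum $X_j+X_{j+1}+N_{j,j+1}$, which is closer in spirit to the XOR used in odd-cycle index coding. Your identities check out: $I(\mb{X},\mb{U}^-_{j,j+1};\mb{U}_{j,j+1})=\frac12\log\frac1D$ (the noises are independent given $\mb{X}$), the two decoder-side terms sum to $\frac12\log\frac1D$, the single-edge constraints reduce to $2\sqrt D\le 1+D$, and the MMSE at each decoder is $D$; hence $R_{j,j+1}=R'_{j,j+1}=\frac14\log\frac1D$ is LP-feasible and yields $\frac m4\log\frac1D$.

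The converse, however, has a genuine gap: the dual certificate is asserted rather than exhibited, and the constraint families you commit to cannot produce it. Concretely, consider exactly the inequalities you name---the local \emph{monotonicity+} constraints $K(\mbt{Y}_i)\ge K(\mbt{Y}_i\cup\{X_i\})+\frac12\log\frac1D$, \emph{submodularity} on consecutive triples and on adjacent singletons---together with all monotonicity, non-negativity, and $K(\mbt{X})=0$. The assignment $K(\mathbf A)=\frac12\log\frac1D$ for $\mathbf A\in\{\emptyset\}\cup\{\{X_i\}\}_{i}\cup\{\mbt{Y}_i\}_{i}$ and $K(\mathbf A)=0$ for every other subset satisfies all of these constraints (each $\mbt{Y}_i$ exceeds its triple by exactly one per-component rate; triples, adjacent pairs and four-blocks are all $0$, so the triple submodularity is trivially met; $2K(\{X_i\})\ge K(\emptyset)+K(\{X_i,X_{i+1}\})$ holds; $K$ is antitone), yet it has $K(\emptyset)=\frac12\log\frac1D$, far below $\frac m4\log\frac1D$. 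Hence no nonnegative combination of those inequalities can force the stated bound, so the claimed ``direct $m=5$ check'' cannot succeed with the listed ingredients; the extra gains must come from monotonicity+ steps applied to \emph{large} generalized side-information sets that collect many decoders' contributions at once.

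This is exactly what the paper's converse does, and it also shows why ``the same dual certificate extends to arbitrary odd $m$'' is the step that needs real work: the scalable certificate uses $K(\emptyset)\ge K(\mb{O}),K(\mb{E}),K(X_m)$, the multi-gain steps $K(\mb{O})\ge K(\mb{O}^+)+\frac{m-3}{2}R(D+\epsilon)$ and $K(\mb{E})\ge K(\mb{E}^+)+\frac{m-3}{2}R(D+\epsilon)$, and submodularity on the long alternating blocks $(\mb{O}^+,\mb{E}^+)$ with intersection $\mb{M}$ and on the pair $(\mb{M},\{X_m\})$, followed by monotonicity+ from $\{X_1,\ldots,X_{m-1}\}$ and from $\mb{M}\cup\{X_m\}$; summing gives $2K(\emptyset)\ge m\,R(D+\epsilon)$. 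For $m=5$ the sets $\mb{O}^+,\mb{E}^+$ happen to be consecutive triples, which may be why your local families look sufficient at $m=5$, but for $m\ge7$ the needed submodularity instances are on $(m-2)$-element blocks and on $(\mb{M},\{X_m\})$---neither consecutive triples nor adjacent singletons---so your certificate does not extend as described. To complete the proof you must either write down an explicit dual certificate valid for all odd $m$ (as above) or prove that your local relations imply the multi-gain inequalities, which they do not by the counterexample assignment given.
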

\begin{proof}[Proof of Theorem \ref{thm:odd_cycle_GI}]
\textit{Achievability:} The achievability argument is obtained by using Theorem \ref{thm:gen_ach_gauss}. Let $v \in \set{V}$  be fixed. We select the messages $U_{\set{S}_j}$ such that 
\begin{align}
\mb{U}_{jk} = (X_{j} + N_{j}, X_{k} +\bar{N}_{k}) \mbox{ for } j \in [m],  k \equiv j+1\mod m
\end{align}
and all the other messages $U_{\set{S}_j}$ are degenerate.\footnote{We represent $U_{\{j,k\}}$ as $U_{jk}$ for ease of notation.} 
Here $(N_i, \bar{N}_i)$, $i \in [m]$  are Gaussian random variables with variance $K_{N_i} = K_{\bar{N}_i}=\frac{2D}{1-D} $ and all $N_i, \bar{N}_i$'s are independent of each other and the source $\mb{X}$. All $\mb{U}_{\set{S}_j}$ satisfy conditions 1), 2)  and 3) of $C^G_{ach,v}(\mb{D})$ as well as condition 4) of $C^G_{ach,v}(\mb{D})$ since $K_{X_j|\mb{U}_{jk}, \mb{U}_{ij},\mb{Y_j}} = (K^{-1}_{X_j} + K^{-1}_{N_j}+  K^{-1}_{\bar{N}_j})^{-1}=D $, where $i = j-1\mod m$.  Let $j \in [m]$,  $i \equiv j-1\mod m$, $k \equiv j+1 \mod m$, and $l \equiv k+1\mod m$.  Then from the conditions in $C^{LP}_{ach}$,  we can write
\begin{align}
\label{ineq:ach_G_lp1}
&R_{jk} \ge I\big(\mb{X};\mb{U}_{jk}\big) - R'_{jk}, \mbox{ from condition 2) of } C^{LP}_{ach} 
\\
&\mbox{ and since any disjoint sets of $\mb{U}_{\set{S}_j}$ are conditionally independent of each other given $\mb{X}$. }
 \notag  \\
 \label{ineq:ach_G_lp2}
& R'_{jk} \le
\min\{ I\left( \mb{U}_{jk}; \mb{U}_{ij},\mb{Y_j}\right),  I\left( \mb{U}_{jk}; \mb{U}_{kl},\mb{Y_k}\right)\}, 
 \\
 \label{ineq:ach_G_lp3}
& \mbox{ from condition 3) of } C^{LP}_{ach}.
\notag \\
& R'_{ij} + R'_{jk} \le
 I\left( \mb{U}_{ij}; \mb{Y_j}\right) +  I\left( \mb{U}_{jk};  \mb{U}_{ij},\mb{Y_j}\right), 
 \mbox{ by condition 3) of } C^{LP}_{ach}.
\end{align}

Note that the terms inside the minimum in (\ref{ineq:ach_G_lp2}) are equal to each other and also the encoding order of the messages does not affect  the right-hand side of (\ref{ineq:ach_G_lp3}). Then using the chain rule, the mutual information terms in (\ref{ineq:ach_G_lp1})--(\ref{ineq:ach_G_lp3}) can be written as
\begin{align*}
 I(\mb{X};\mb{U}_{jk}) &= I(X_j; X_j + N_j) +I(X_k; X_k + \bar{N}_k)
\\
&= \log\frac{1+D}{2D}.
\\
I\left( \mb{U}_{jk}; \mb{U}_{ij},\mb{Y_j}\right) &=I\left( \mb{U}_{jk}; \mb{U}_{ij},X_i,X_k\right)
\\
&= I(X_k + \bar{N}_k;X_k)+I\left( X_j + N_j; X_j + \bar{N}_j \right) 
\\
& =  \frac{1}{2}\log\frac{1+D}{2D} + \frac{1}{2}\log\frac{(1+D)^2}{4D}.
\\
 I\left( \mb{U}_{ij}; \mb{Y_j}\right) +  I\left( \mb{U}_{jk};  \mb{U}_{ij},\mb{Y_j}\right)
 &=  I\left( \mb{U}_{ij}; X_i,X_k\right) +  I\left( \mb{U}_{jk}; \mb{U}_{ij},\mb{Y_j}\right) 
 \\
 &= \frac{1}{2}\log\frac{1+D}{2D} + \frac{1}{2}\log\frac{1+D}{2D} + \frac{1}{2}\log\frac{(1+D)^2}{4D}.
\end{align*}

Then selecting  $R'_{jk} =  \frac{1}{2}\log\frac{1+D}{2D} + \frac{1}{4}\log\frac{(1+D)^2}{4D}$ and $R_{jk} =  \log\frac{1+D}{2D} - R'_{jk}$, $j \in [m]$, $k = j+1 \mod m$ satisfies (\ref{ineq:ach_G_lp1})--(\ref{ineq:ach_G_lp3}) and we take all other rates $R_{\set{S}_j},R'_{\set{S}_j}$ as $0$. Hence, the achievable rate is
\begin{align*}
\sum^m_{i =1}R_{ij} &= m\left( \log\frac{1+D}{2D}  - \frac{1}{2}\log\frac{1+D}{2D} - \frac{1}{4}\log\frac{(1+D)^2}{4D}\right)
\\
& =\frac{m}{4}\left(2\log\frac{1+D}{2D} -\log\frac{(1+D)^2}{4D}\right)
\\
& = \frac{m}{4}\log\frac{1}{D}.
\end{align*}
\\
\textit{Converse:} We utilize the computable relaxation of $R_{lb}(\mb{D}+ \epsilon\mb{1})$ in Theorem \ref{theorem:lowergen_comp}. Similar to the proof of \cite[Theorem 5.1]{blasiak} we define the ordered sets:
\\
$\mb{O}=\{ X_i : i \equiv 1\mod 2, i\neq m \}$,
$\mb{O}^+ = \{ X_i : i \le m-2 \}$
\\
$\mb{E}=\{ X_i : i \equiv 0 \mod 2\}$,
$\mb{E}^+ = \{ X_i : 2\le i \le m-1\}$
\\
$\mb{M} =\{ X_i : 2 \le i \le m-2 \}$,
and $\mbt{S} = \mbt{X} \setminus (\mbt{M} \cup X_m)$. Note that $(\mbt{O}^+ \setminus \mbt{O}) \cap (\mbt{E}^+ \setminus \mbt{E}) = \emptyset $ and $\mbt{M} = (\mbt{O}^+ \setminus \mbt{O}) \cup (\mbt{E}^+ \setminus \mbt{E})$. Also, define  $R(D) =\frac{1}{2}\log\frac{1}{D}$.
Then using the conditions of the LP in Theorem \ref{theorem:lowergen_comp} we can obtain the following inequalities
\begin{align}
\label{ineq:conv:odd_cycle_GI_s}
&K(\emptyset) \ge K(\mbt{O}) 
\quad   \mbox{  by  (\textit{monotonicity})}
\\
&K(\emptyset) \ge K(\mbt{E}) \quad  \mbox{ by 
 (\textit{monotonicity})}
\\
&K(\emptyset) \ge K(X_m) \quad   \mbox{  by  (\textit{monotonicity})}
\\
\label{ineq:conv:odd_cycle_GI_e3}
&K(\mbt{O}) \ge K(\mbt{O}^+) +  \sum_{X_i \in
\mbt{O}^+ \setminus \mbt{O} }R(D + \epsilon) 
\end{align}
\begin{align}
\label{ineq:conv:odd_cycle_GI_e4}
&K(\mbt{E}) \ge K(\mbt{E}^+ ) +  \sum_{X_i \in
\mbt{E}^+ \setminus \mbt{E} }R(D + \epsilon)  
\\
\label{ineq:conv:odd_cycle_GI_e1}
&K(\mbt{O}^+) + K(\mbt{E}^+) \ge K(\mbt{M}) + K(\mbt{X}) + R(D + \epsilon)
\\
\label{ineq:conv:odd_cycle_GI_e2}
&K(\mb{M}) + K(X_m) \ge K(\emptyset) + K(\mb{X}) + \sum_{X_i \in \mb{S}}R(D + \epsilon)
\end{align}
where (\ref{ineq:conv:odd_cycle_GI_e3}) is due to the following.  By \textit{monotonicity+}, we have $K(\mbt{O}) \ge K(\mbt{O}^+) +  R(D_{\mbt{O}} + \epsilon)$. We can see that
$\sum_{X_i \in
\mbt{O}^+ \setminus \mbt{O} }R(D + \epsilon)$ is an upper bound to $R(D_{\mbt{O}} + \epsilon)$ by selecting the auxiliary random variable $V = \{X_i + N_i | i \in [m] \} $, where $N_i$ is independent of $\mb{X}$ and all $N_j$'s, $j \neq i$, in the minimization in Notation \ref{notation:computable}. Also, utilizing the chain rule one can verify that $R(D_{\mbt{O}} + \epsilon) \ge \sum_{X_i \in
\mbt{O}^+ \setminus \mbt{O} }R(D + \epsilon)$. By following a similar procedure to that used to obtain  (\ref{ineq:conv:odd_cycle_GI_e3}), we can also obtain
(\ref{ineq:conv:odd_cycle_GI_e4}). Furthermore, (\ref{ineq:conv:odd_cycle_GI_e1}) and (\ref{ineq:conv:odd_cycle_GI_e2}) are due to \textit{submodularity} and \textit{monotonicity+}. If we add inequalities (\ref{ineq:conv:odd_cycle_GI_s})--(\ref{ineq:conv:odd_cycle_GI_e2}) side-by-side, we obtain
$2K(\emptyset) \ge mR(D + \epsilon)$.
Taking $\epsilon \rightarrow 0$ gives the result.
\end{proof}
Recall that prior to the LP lower bound introduced here, the lower bound in Theorem \ref{theorem:minmax}, $R^m_{lb}(\mb{D}+ \epsilon\mb{1})$, was the state-of-the-art lower bound to $R(\mb{D})$. The next lemma shows that $R^m_{lb}(\mb{D}+ \epsilon\mb{1})$ gives $\frac{m-1}{4}\log(\frac{1}{D})$ for the odd-cycle Gaussian problem and is thus not tight.

\begin{lemma}
\label{lemma:minimax_odd_cycle_gauss}
For the odd-cycle Gaussian problem, 
$\liminf_{\epsilon \rightarrow 0} R^m_{lb}(\mb{D}+ \epsilon\mb{1}) = \frac{m-1}{4}\log(\frac{1}{D})$.
\end{lemma}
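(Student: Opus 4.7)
The plan is to show that $\inf_{\bar{C}}\bar{R}_{lb} = \tfrac{m-1}{4}\log\tfrac{1}{D+\epsilon}$ for every $\epsilon>0$, from which the lemma follows by taking $\liminf\epsilon\to 0$ in the formula of Theorem~\ref{theorem:minmax}. Note first that $\bar{P}$ is a singleton here, since each $\mb{Y_i}$ is a deterministic function of $\mb{X}$, so there is nothing to maximize over $\bar{P}$. I will use $S=\{1,3,\ldots,m-2\}$, a maximum independent set of $C_m$ of size $s=(m-1)/2$, and write $U_S=\{U_{Y_i}:i\in S\}$ and $Y_S=\{Y_i:i\in S\}$.

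For the upper bound on $\inf_{\bar{C}}\bar{R}_{lb}$ I would take $V=\emptyset$ and $U_{Y_i}=X_i+N_i$ with $N_i\sim\mathcal{N}(0,\tfrac{D+\epsilon}{1-(D+\epsilon)})$ independent of everything else, so the MMSE of $X_i$ from $U_{Y_i}$ alone is exactly $D+\epsilon$, making the choice feasible in $\bar{C}$. Each conditional mutual information term in $\bar{R}_{lb}(\sigma)$ then reduces to $I(X_{\sigma(k)};X_{\sigma(k)}+N_{\sigma(k)}\mid\text{past})$, which equals $\tfrac{1}{2}\log\tfrac{1}{D+\epsilon}$ when neither cyclic neighbor $\sigma(k)\pm 1$ has appeared in $\sigma(1),\ldots,\sigma(k-1)$ (so that no past $Y_{\sigma(j)}$ contains $X_{\sigma(k)}$, and the rest of the past is independent of $(X_{\sigma(k)},N_{\sigma(k)})$), and $0$ otherwise. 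The set of such ``new'' indices must be independent in $C_m$, hence of size at most $(m-1)/2$ for odd $m$; this maximum is attained by taking the first $(m-1)/2$ positions of $\sigma$ to be a maximum independent set. Hence $\max_\sigma\bar{R}_{lb}(\sigma)=\tfrac{m-1}{4}\log\tfrac{1}{D+\epsilon}$.

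For the matching lower bound I would fix any feasible $(V,U_{Y_1},\ldots,U_{Y_m})\in\bar{C}$ and choose any permutation $\sigma$ with $\{\sigma(1),\ldots,\sigma(s)\}=S$. Chain-rule manipulations analogous to the $\Gamma_k$ reformulation in the proof of Theorem~\ref{theorem:subs_minmax}, but with indices relabeled through $\sigma$, yield $\bar{R}_{lb}(\sigma)=\Gamma_s$; dropping the two nonnegative flanking sums in $\Gamma_s$ gives $\bar{R}_{lb}(\sigma)\ge I(\mb{X};V,U_S\mid Y_S)$. Since a maximum independent set of $C_m$ is also dominating, every $j\in S^c$ is a cyclic neighbor of some $i\in S$, so $Y_S=X_{S^c}$ deterministically; combined with independence of $X_S$ and $X_{S^c}$ this gives $I(\mb{X};V,U_S\mid Y_S)=h(X_S)-h(X_S\mid V,U_S,X_{S^c})$. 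For each $i\in S$, the inclusion $(V,U_{Y_i},Y_i)\subseteq(V,U_S,X_{S^c})$ together with the distortion constraint forces the conditional MMSE of $X_i$ given $(V,U_S,X_{S^c})$ to be at most $D+\epsilon$; the Gaussian maximum-entropy bound together with Jensen then gives $h(X_i\mid V,U_S,X_{S^c})\le\tfrac{1}{2}\log(2\pi e(D+\epsilon))$, and the conditional chain rule sums these into $h(X_S\mid V,U_S,X_{S^c})\le\tfrac{s}{2}\log(2\pi e(D+\epsilon))$. Combining with $h(X_S)=\tfrac{s}{2}\log(2\pi e)$ yields $\bar{R}_{lb}(\sigma)\ge\tfrac{m-1}{4}\log\tfrac{1}{D+\epsilon}$ for every choice of auxiliaries, proving the lower bound.

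The main obstacle is the lower bound direction: the sum $\bar{R}_{lb}(\sigma)$ does not telescope into a single tractable mutual information, so a naive term-by-term lower bound falls short. The crucial step is choosing $\sigma$ so that its first $(m-1)/2$ positions are a \emph{maximum} independent set $S$ of $C_m$ and invoking the $\Gamma_s$ decomposition to isolate $I(\mb{X};V,U_S\mid Y_S)$; the combinatorial fact that such an $S$ is dominating (so $Y_S=X_{S^c}$) is what decouples this mutual information into $|S|$ scalar Gaussian rate-distortion bounds, one for each $i\in S$, summing to the target $\tfrac{m-1}{4}\log\tfrac{1}{D}$.
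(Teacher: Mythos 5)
Your proposal is correct, and the individual steps check out: the $\Gamma_k$-type telescoping identity holds for any $(V,U_{Y_1},\ldots,U_{Y_m})\in\bar{C}$ under the Markov condition (all manipulations are additions of nonnegative conditional mutual informations, so infinite values cause no trouble), the set $S=\{1,3,\ldots,m-2\}$ is indeed both independent and dominating in $C_m$ so that $Y_S$ generates exactly $X_{S^c}$ and each $g_i(V,U_{Y_i},Y_i)$, $i\in S$, is measurable with respect to $(V,U_S,X_{S^c})$, and the count of nonzero terms under the Gaussian test choice forms an independent set, hence is at most $\frac{m-1}{2}$. However, your route differs from the paper's in both halves. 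For the evaluation of $\max_\sigma R_\sigma$ under the test choice $V=\emptyset$, $U_{Y_i}=X_i+N_i$, the paper does not make the independent-set argument: it only exhibits a permutation giving $c\ge\frac{m-1}{2}$ and then caps $c\le\frac{m}{2}$ by invoking the already-proved rate-distortion function of Theorem~\ref{thm:odd_cycle_GI}, concluding $c=\frac{m-1}{2}$ by integrality of $c$ and oddness of $m$. For the optimality of the Gaussian test choice, the paper proves the stronger statement that \emph{every} feasible $(V,U_\cdot)$ satisfies $\bar{R}_\sigma\ge R_\sigma$ for \emph{every} $\sigma$, via a lengthy per-permutation expansion and term-by-term domination of the negative conditional entropies (equations (\ref{app:lemma:summation})--(\ref{app:lemma:bottomline})), so the inner infimum is computed exactly; you instead lower-bound $\inf_{\bar{C}}\bar{R}_{lb}$ by fixing one permutation whose first $\frac{m-1}{2}$ positions are a maximum independent set, dropping the flanking sums in $\Gamma_s$, and reducing $I(\mb{X};V,U_S\mid Y_S)$ to $\frac{m-1}{2}$ scalar Gaussian rate-distortion converses. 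Your argument is shorter and self-contained (no appeal to Theorem~\ref{thm:odd_cycle_GI}, no integrality trick, and the combinatorial reason for $\frac{m-1}{2}$ is explicit), while the paper's buys the exact identification of the minimizing auxiliaries for all permutations simultaneously, at the cost of the long entropy computation and the dependence on the achievability/converse result for the odd-cycle Gaussian problem.
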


\begin{proof}[Proof of Lemma \ref{lemma:minimax_odd_cycle_gauss}]
The proof is given in the Appendix \ref{app:strict_ineq}.
\end{proof}

\section*{Acknowledgment}

The authors wish to thank Robert Kleinberg for several helpful comments.
This work was supported by the US National Science Foundation under
grant CCF-1617673.

\begin{appendices}

\section{}
 \label{app:gen_ach}
\begin{proof}[Proof of Theorem \ref{thm:gen_ach}]
Let $\epsilon > 0$, $v \in \set{V}$ be given and suppose the joint distribution of $(\set{U}, X, Y_1, \ldots, Y_m)$ in $C_{ach,v}(\mb{D})$, denoted by $p$, is fixed. The scheme consists of three main steps; namely, code construction, encoding and decoding. First we explain each step then show that the resulting rate  is $\mb{D}$-achievable.

Code construction and encoding are similar to the proof of the achievable scheme in \cite{timo}, which depends on $\epsilon$-letter typicality \cite{kramer} arguments. Here we use the lowercase letter $z$ to denote a realization of a random variable $Z$.
\\
\textit{Code Construction :}
 A codebook, denoted by $\set{C}^{\set{S}_j}$, of size $2^{n(R_{\set{S}_j} +R'_{\set{S}_j} )}$ is created for each set $\set{S}_j \in v$ in the following way. 
Let $\mb{k}_{\set{S}_j} = (k_{\set{S}_j}, k'_{\set{S}_j}) $, where  $k_{\set{S}_j} \in [2^{nR_{\set{S}_j}}]$ and $k'_{\set{S}_j}
\in [2^{nR'_{\set{S}_j}}]$.  A codeword $u_{\set{S}_j}(\mb{k}_{\set{S}_j}) \in \mathcal{U}_{\set{S}_j}^n$ of length $n$ is created by drawing each component from $\mathcal{U}_{\set{S}_j}$ with respect to $p(u_{\set{S}_j})$ in an i.i.d.\ way. 
\\
\textit{Encoding :} Let $0 < \epsilon_0 < \cdots < \epsilon_{2^m +1}$ be sufficiently small  and $x^n \in \mathcal{X}^n$ be given to the encoder. Then encoding is performed in $2^m-1$ stages. Specifically, at stage $j$ encoder picks $\set{C}^{\set{S}_j}$  and  searches for an index  $\mb{k}_{\set{S}_j}$ 
such that $u_{\set{S}_j}(\mb{k}_{\set{S}_j})$ is $\epsilon_j$-letter typical with $x^n$ and
\begin{align}
\label{eq:enc_typ1}
u_{\set{S}_j}^{-} &= \big\{u_{\set{S}_i}(\mb{k}_{\set{S}_i}) | i < j \big\}.
\end{align}
If such a $\mb{k}_{\set{S}_j}$ (or multiple such $\mb{k}_{\set{S}_j}$) exists then the encoder picks one of them arbitrarily and sends the bin index $k_{\set{S}_j}$ to the decoders. Otherwise the encoder picks a codeword randomly and sends the corresponding bin index.
\\
\textit{Decoding :} We apply simultaneous decoding \cite[Section 4]{ElGamal}. Consider decoder $l$. It  forms reconstructions of all its messages,  $u_{\set{D}_l}(\mb{\wh{k}}_{\set{D}_l}) =  \{u_{\set{S}_j}(\mb{\wh{k}}_{\set{S}_j}) | \set{S}_j \in \set{D}_l \}$, where $\mb{\wh{k}}_{\set{D}_l} = \{\mb{\wh{k}}_{\set{S}_j} | \set{S}_j \in \set{D}_l \}$ \footnote{ Since $v \in \set{V}$ is an ordered list, it induces an order on sets $\set{S}_j$. Hence we can take $\mb{\wh{k}}_{\set{D}_l}$ as an ordered set and assume an ordered set structure.}, in the following way.
Decoder $l$ takes the set of bin indices $k_{\set{D}_l}=\{k_{\set{S}_j} | \set{S}_j \in \set{D}_l \}$ then looks for a set of indices $\mb{\wt{k}}_{\set{D}_l}$ such that 
\begin{align}
\label{dec:cond1}
& \wt{k}_{\set{S}_j} = k_{\set{S}_j} \mbox{ for all } \set{S}_j \in \set{D}_l \mbox{ and }
\\
\label{dec:cond2}
&u_{\set{D}_l}(\mb{\wt{k}}_{\set{D}_l}) \mbox{ are $\epsilon_{l^*+1}$-letter typical with  } y_l^n,
\end{align}
where $l^* = \max_{j:\set{S}_j \in \set{D}_l} j$.
Note that if no error occurs at the encoder,  $u_{\set{D}_l}(\mb{k}_{\set{D}_l})$ is $\epsilon_{l^*}$-typical with $x^n$.
If there is more than one set of codewords $u_{\set{S}_j}(\mb{\wt{k}}_{\set{S}_j})$, $\set{S}_j \in \set{D}_l$ whose indices, $\mb{\wt{k}}_{\set{S}_j}$, satisfy (\ref{dec:cond1}) and (\ref{dec:cond2}), decoder $l$ selects one arbitrarily and sets $\mb{\wh{k}}_{\set{S}_j} = \mb{\wt{k}}_{\set{S}_j}$. If decoder $l$ cannot find any such set of indices, it sets $\mb{\wh{k}}_{\set{D}_l}$ to $\mb{1}$ (i.e., it declares an error). Since the joint distribution of $(\set{U},X,Y_1,\ldots,Y_m)$ is in $C_{ach,v}(\mb{D})$, we can find a function $g_l(\cdot,\cdot)$ such that $g_l(u_{\set{D}_l i}(\mb{\wh{k}}_{\set{D}_l}), y_{li}) = \hat{x}_{li}$, where $u_{\set{D}_l i}(\mb{\wh{k}}_{\set{D}_l})$, $y_{li}$ and $\hat{x}_{li}$ are the $i^{th}$ components of  $u_{\set{D}_l}(\mb{\wh{k}}_{\set{D}_l})$, $y^n_{l}$ and $\hat{x}^n_{l}$, respectively.

Now we analyze the error probabilities at the encoding and decoding steps, respectively.
\\
\textit{Error Analysis for Encoder :}
Note that encoding process is correct if the following is satisfied:
\\
1. At each encoding stage $j$, we can find  $U_{\set{S}_i}(\mb{k}_{\set{S}_i})$ such that it is $\epsilon_j$-jointly typical with $(U_{\set{S}_j}^{-}, X^n)$  i.e.,
\begin{align}
\label{eq:prob_cor_e1}
&C_{ \set{S}_j} = \left\{ \exists  \mb{ k}_{\set{S}_j} \mbox{ such that } u_{\set{S}_j}(\mb{k}_{\set{S}_j}) \in \mathcal{T}^{(n)}_{\epsilon_j}(p| U_{\set{S}_j}^{-}, X^n) \right\}.
\end{align}

Then the probability of error at the encoder, $\Pr(E)$ can be expressed as
\begin{align}
\label{eq:enc_error}
\Pr(E) &= \Pr((C_{\set{S}_1} \cap \ldots \cap C_{\set{S}_{2^m-1}})^c)
\notag \\
& = \Pr( C^c_{ \set{S}_1} \cup \ldots \cup C^c_{\set{S}_{2^m-1}})
\notag \\
& = \Pr((C^c_{\set{S}_1} \cap \bar{C}^1)\cup \ldots \cup (C^c_{\set{S}_{2^m-1}} \cap \bar{C}^{2^m-1})),
\end{align}
where $\bar{C}^{j}$ is defined as  $\bigcap^{i<j}_{i=1}C_{\set{S}_i} $ for all $j \in [2^m -1] \setminus \{1\}$ and $\bar{C}^{1} = \emptyset$. Then from (\ref{eq:enc_error}) and the union bound, we can write
\begin{align}
\Pr(E) & \le  \Pr(C^c_{\set{S}_1}\cap \bar{C}^1) + \cdots + \Pr(C^c_{\set{S}_{2^m-1}} \cap \bar{C}^{2^m-1})
\notag \\
\label{eq:prob_error_e}
& \le  \Pr( C^c_{\set{S}_1} | \bar{C}^1)  + \cdots + \Pr(C^c_{\set{S}_{2^m-1}} | \bar{C}^{2^m-1}).
\end{align}

Note that $\Pr(C^c_{\set{S}_j }|\bar{C}^{j})$, $j \in [2^m -1]$ represents the probability of the event that there is no  $U_{\set{S}_j}(\mb{k}_{\set{S}_j})$  $\epsilon_j$-jointly typical with $(U_{\set{S}_j}^{-}, X^n )$ given that for each $i<j$ we find $U_{\set{S}_i}(\mb{k}_{\set{S}_i})$
such that $U_{\set{S}_i}(\mb{k}_{\set{S}_i})$ is 
$\epsilon_i$-jointly typical with $(U_{\set{S}_i}^{-}, X^n )$, i.e.,
\begin{align*}
&\Pr(C^c_{\set{S}_j }|\bar{C}^{j}) = \Pr\left( \forall \mb{k}_{\set{S}_j},   U_{\set{S}_j}(\mb{k}_{\set{S}_j}) \notin \mathcal{T}^{(n)}_{\epsilon_j}(p| U_{\set{S}_j}^{-}, X^n )  | (U_{\set{S}_j}^{-}, X^n) \in \mathcal{T}^{(n)}_{\epsilon_{j-1}}(p) \right).
\end{align*}

 From Lemma \ref{thm:typical_joint} in Appendix \ref{app:typicality} and the inequality $(1 - \alpha)^\beta < e^{-\alpha\beta}$, we can write
\begin{align}
\Pr(C^c_{\set{S}_j }|\bar{C}^{j}) &< e^{-\left[(1 - \delta_{\epsilon_{j-1}, \epsilon_j}(n)) 2^{-n\left(I(X, U_{\set{S}_j}^{-} ; U_{\set{S}_j}) + 2\epsilon_jH(U_{\set{S}_j})\right)}2^{n(R_{\set{S}_j} + R'_{\set{S}_j}) } \right]}
\notag \\
\label{eq:enc_typ2}
& = e^{-\left[(1 - \delta_{\epsilon_{j-1}, \epsilon_j}(n) )2^{n\left((R_{\set{S}_j} + R'_{\set{S}_j}) - I(X, {U}_{\set{S}_j}^{-} ; U_{\set{S}_j}) - 2\epsilon_jH(U_{\set{S}_j})\right)}\right]}, 
\end{align}
 where $\delta_{\epsilon_{j-1}, \epsilon_j}(n)  \rightarrow 0 $ as $n \rightarrow \infty$. 
Note that when $H(U_{\set{S}_j}) = 0$,  $\Pr(C^c_{\set{S}_j }|\bar{C}^{j})$ is equal to zero. 
 Then $\Pr(C^c_{\set{S}_j }|\bar{C}^{j}) < \frac{\epsilon'}{2^m}$ if  $n \ge n_1(\epsilon', \epsilon_jH(U_{\set{S}_j})) $, and
\begin{align}
\label{ineq:cond_enc}
R_{\set{S}_j} + R'_{\set{S}_j} \ge  I(X, U_{\set{S}_j}^{-} ; U_{\set{S}_j}) + 3\epsilon_jH(U_{\set{S}_j}).
\end{align}

Hence, if $(R_{\set{S}_j}, R'_{\set{S}_j})$  satisfy the condition in (\ref{ineq:cond_enc}) for all $j \in [2^m-1]$, from (\ref{eq:prob_error_e}) we can conclude that the probability of error at the encoder satisfies
\begin{align}
\label{ineq:prob_enc_final}
\Pr(E) < \frac{2^m-1}{2^m}\epsilon'
\end{align}
when $n \ge N_1$ where $N_1 =  \max_{j \in [2^m-1]} n_1(\epsilon', \epsilon_jH(U_{\set{S}_j}))$.
\\
\textit{Error Analysis for Decoders :} Let us focus on decoder $l$ for some fixed $l \in [m]$. Decoding at this decoder is successful if the following conditions are satisfied:
\\
1. There is no error at the encoder.
\\
2. The source and the side information are  $\epsilon_0$-typical, i.e.,
\begin{align}
\label{eq:prob_cor_e0}
& D_0 =\left\{ (X^n,Y_1^n, \ldots, Y_m^n ) \in \mathcal{T}^{(n)}_{\epsilon_0}(p)\right\}.
\end{align}
3. The set of codewords $U_{\set{D}_l}(\mb{k}_{\set{D}_l}) = \{U_{\set{S}_j}(\mb{k}_{\set{S}_j}) | \set{S}_j \in \set{D}_l \}$ chosen by the encoder are $\epsilon_{l^*+1}$-letter typical with   $Y_l^n$, i.e.,
\begin{align}
\label{cond:dec1}
D_{1,l} = \left\{ \left(U_{\set{D}_l}(\mb{k}_{\set{D}_l}), X^n, Y^n_l \right) \in \mathcal{T}^{(n)}_{\epsilon_{l^*+1}}(p)\right\}.
\end{align}
4. Within the received bins $k_{\set{D}_l} =\{k_{\set{S}_j} | \set{S}_j \in \set{D}_l\}$, decoder $l$ can find a unique set of codewords, $  U_{\set{D}_l}(\mb{\wh{k}}_{\set{D}_l}) = \{U_{\set{S}_j}(\mb{\wh{k}}_{\set{S}_j}) | \wh{k}_{\set{S}_j} = k_{\set{S}_j}, \set{S}_j \in \set{D}_l \}$, such that 
$U_{\set{D}_l}(\mb{\wh{k}}_{\set{D}_l}) $ are $\epsilon_{l^*+1}$-letter typical with   $Y_l^n$, i.e.,
\begin{align}
\label{cond:dec2}
D_{2,l} = \left\{  \nexists \mb{\wt{k}}_{\set{D}_l} \neq \mb{k}_{\set{D}_l} \mbox{ such that }  \wt{k}_{\set{D}_l} = k_{\set{D}_l},  \left(U_{\set{D}_l}(\mb{\wt{k}}_{\set{D}_l}), Y^n_l \right) \in \mathcal{T}^{(n)}_{\epsilon_{l^*+1}}(p)\right\}.
\end{align}

Then we can write the probability of error at decoder $l$, denoted by $\Pr(D_{err,l})$, as
\begin{align}
\Pr(D_{err,l}) & = \Pr((E^c \cap D_0 \cap D_{1,l} \cap D_{2,l})^c)
\notag \\
& = \Pr(E \cup  D^c_0 \cup D^c_{1,l} \cup D^c_{2,l} )
\notag \\
& = \Pr(\bar{E} \cup (D^c_{1,l} \cap \bar{E}^c )\cup (D^c_{2,l} \cap \bar{E}^c \cap D_{1,l}) ), \mbox{ where } \bar{E} = E \cup  D^c_0 ,
\notag \\
\label{eq:prob_error_d}
& \le  \Pr(\bar{E}) + \Pr( D^c_{1,l}\cap \bar{E}^c) + \Pr( D^c_{2,l} \cap \bar{E}^c \cap  D_{1,l} ).
\end{align}

First we analyze $\Pr(\bar{E})$. By Lemma \ref{thm:typical1} in Appendix \ref{app:typicality}, $\Pr(D^c_0) < \delta_{\epsilon_0}(n)$ where $\delta_{\epsilon_0}(n) \rightarrow 0$ as $n \rightarrow \infty$. Then we can find $n_2(\epsilon', \delta_{\epsilon_0})$, $\epsilon' > 0$ such that if  $n \ge n_2(\epsilon', \delta_{\epsilon_0})$, $\Pr(D^c_0) < \frac{\epsilon'}{2^m}$. Hence, from (\ref{ineq:prob_enc_final}) and the union bound, we have
 \begin{align}
 \Pr(\bar{E}) \le \Pr(E) + \Pr(D^c_0) < \epsilon'
 \end{align}
when $n \ge \max\{ n_2(\epsilon', \delta_{\epsilon_0}),N_1\}$.

Now we focus on $\Pr( D^c_{1,l}\cap \bar{E}^c)$ and  $\Pr( D^c_{2,l} \cap 
\bar{E}^c \cap  D_{1,l} )$. 
We can upper bound $\Pr( D^c_{1,l}\cap \bar{E}^c)$ by
\begin{align}
\label{ineq:dec_markov}
\Pr\left(\left(U_{\set{D}_l}(\mb{k}_{\set{D}_l}), X^n ,Y^n_l \right) \notin \mathcal{T}^n_{\epsilon_{l^*+1}}(p) \big|
(U_{\set{D}_l}(\mb{k}_{\set{D}_l}), X^n ) \in \mathcal{T}^{(n)}_{\epsilon_{l^*}}(p) \right).
\end{align}
By Lemma \ref{thm:typical_markov} in Appendix \ref{app:typicality}, the probability  in (\ref{ineq:dec_markov}) is less than or equal to $\delta_{\epsilon_{l^*}\epsilon_{l^*+1}}(n)$ which goes to $0$ as $n \rightarrow \infty$.
Hence, $\Pr( D^c_{1,l}\cap \bar{E}^c) < \epsilon'$ if  $n \ge n_3(\epsilon',\delta_{\epsilon_{l^*}\epsilon_{l^*+1}} )$.

Now we consider $\Pr( D^c_{2,l} \cap \bar{E}^c \cap  D_{1,l} )$. Note that event $D^c_{2,l}$ can be rewritten as
\begin{align*}
&D^c_{2,l} = \bigcup_{\set{D}'_l :\set{D}'_l \subseteq  \set{D}_l, \set{D}'_l  \neq \emptyset} F_{\set{D}'_l}, \mbox{ where}
\\
&   F_{\set{D}'_l} = \bigg\{   \exists \mb{\wt{k}}_{\set{D}_l}  \mbox{ such that }\mb{\wt{k}}_{\set{S}_j} \neq \mb{k}_{\set{S}_j} \mbox{ for all } \set{S}_j \in \set{D}'_l ,\wt{k}_{\set{D}'_l} = k_{\set{D}'_l},\mb{\wt{k}}_{\set{S}_j} = \mb{k}_{\set{S}_j} \mbox{ for all } \set{S}_j \in  \set{D}_l \setminus\set{D}'_l \mbox{ and }
\\
& \quad \quad \quad \quad
   \left(U_{\set{D}_l}(\mb{\wt{k}}_{\set{D}_l}), Y^n_l \right) \in \mathcal{T}^{(n)}_{\epsilon_{l^*+1}}(p) \bigg\}.
\end{align*}

Using the union bound, we can write
\begin{align}
\label{ineq:prob_error_d2}
\Pr( D^c_{2,l} \cap \bar{E}^c \cap  D_{1,l} ) \le \sum_{\set{D}'_l :\set{D}'_l \subseteq  \set{D}_l, \set{D}'_l  \neq \emptyset} \Pr(F_{\set{D}'_l} \cap \bar{E}^c \cap  D_{1,l}).
\end{align}

Notice that $F_{\set{D}'_l} \cap \bar{E}^c \cap  D_{1,l}$ denotes the error event that there is no error at the encoder and the source and side information are $\epsilon_0$-typical (event $\bar{E}^c$), and decoder $l$ can find a set of indices $\{ \mb{\wh{k}}_{\set{S}_j} |  \set{S}_j \in \set{D}_l \}$ such that $U_{\set{D}_l}(\mb{\wh{k}}_{\set{D}_l})$ are $\epsilon_{l^* +1}$-jointly typical with $(X^n, Y^n_l)$ (event $D_{1,l}$); however the particular subset  $\mb{k}_{\set{D}'_l} = \{\mb{k}_{\set{S}_j} |  \set{S}_j \in \set{D}'_l \}$ of those indices is not unique (event $F_{\set{D}'_l}$). Now we bound each term inside the summation in (\ref{ineq:prob_error_d2}). To do this, first we define an event $\bar{F}_{\set{D}'_l}$ by replacing the typical set $\mathcal{T}^{(n)}_{\epsilon_{l^*+1}}(p) $ in event $F_{\set{D}'_l}$ with $\mathcal{T}^{(n)}_{\epsilon_{l^*+2}}(p)$. In other words,
\begin{align*}
&\bar{F}_{\set{D}'_l}  =  \bigg\{   \exists \mb{\wt{k}}_{\set{D}_l}  \mbox{ such that }\mb{\wt{k}}_{\set{S}_j} \neq \mb{k}_{\set{S}_j} \mbox{ for all } \set{S}_j \in \set{D}'_l ,\wt{k}_{\set{D}'_l} = k_{\set{D}'_l},\mb{\wt{k}}_{\set{S}_j} = \mb{k}_{\set{S}_j} \mbox{ for all } \set{S}_j \in  \set{D}_l \setminus\set{D}'_l \mbox{ and }
\\
& \quad \quad \quad \quad
   \left(U_{\set{D}_l}(\mb{\wt{k}}_{\set{D}_l}), Y^n_l \right) \in \mathcal{T}^{(n)}_{\epsilon_{l^*+2}}(p) \bigg\},
\end{align*}
giving $F_{\set{D}'_l} \subseteq \bar{F}_{\set{D}'_l} $. 
Let $S_1 =\{
 \mb{\wt{k}}_{\set{D}'_l}| \wt{k}'_{\set{S}_j} \neq k'_{\set{S}_j}, 
  \wt{k}_{\set{D}'_l} = k_{\set{D}'_l},
 \forall \set{S}_j \in \set{D}'_l
  \}$ and $S_2 =\{
 \mb{\wt{k}}_{\set{D}'_l}| 
   \wt{k}_{\set{S}_j} = 1,\forall \set{S}_j \in \set{D}'_l \}$. Then we can write
\begin{align}
\label{ineq:err_prob_dec2}
 \Pr(\bar{F}_{\set{D}'_l} \cap \bar{E}^c \cap  D_{1,l}) &\le
  \Pr\left( \bigcup_{ S_1 } 
 U_{\set{D}'_l}(\mb{\wt{k}}_{\set{D}'_l}) \in  \mathcal{T}^{(n)}_{\epsilon_{l^*+2}}(p| U_{\set{D}_l\setminus \set{D}'_l}(\mb{k}_{\set{D}_l\setminus \set{D}'_l}),Y^n_l) \big| ( U_{\set{D}_l\setminus \set{D}'_l}(\mb{k}_{\set{D}_l\setminus \set{D}'_l}),Y^n_l) \in \mathcal{T}^{(n)}_{\epsilon_{l^*+1}}(p)\right)
\\
  \label{ineq:err_prob_dec}
 & \le 
  \Pr\left( \bigcup_{ S_2 } 
 U_{\set{D}'_l}(\mb{\wt{k}}_{\set{D}'_l}) \in  \mathcal{T}^{(n)}_{\epsilon_{l^*+2}}(p| U_{\set{D}_l\setminus \set{D}'_l}(\mb{k}_{\set{D}_l\setminus \set{D}'_l}),Y^n_l) \big| ( U_{\set{D}_l\setminus \set{D}'_l}(\mb{k}_{\set{D}_l\setminus \set{D}'_l}),Y^n_l) \in \mathcal{T}^{(n)}_{\epsilon_{l^*+1}}(p)\right),
 \end{align}
 where (\ref{ineq:err_prob_dec}) is obtained by 
using Lemma \ref{lemma:bin_indp} in Appendix \ref{app:typicality}. Then due to the union bound of probabilities we can write  
 \begin{align}
  \Pr(\bar{F}_{\set{D}'_l} \cap \bar{E}^c \cap  D^c_{1,l}) &
  \le \sum_{ S_2}  \Pr\left(  U_{\set{D}'_l}(\mb{\wt{k}}_{\set{D}'_l}) \in  \mathcal{T}^{(n)}_{\epsilon_{l^*+2}}(p| U_{\set{D}_l\setminus \set{D}'_l}(\mb{k}_{\set{D}_l\setminus \set{D}'_l}),Y^n_l) \big| ( U_{\set{D}_l\setminus \set{D}'_l}(\mb{k}_{\set{D}_l\setminus \set{D}'_l}),Y^n_l) \in \mathcal{T}^{(n)}_{\epsilon_{l^*+1}}(p)\right)
\notag  \\
 & \le 
 \label{ineq:ach_proof_dec_err}
 2^{n\sum_{
  \substack{ \set{S}_j \in \set{D}'_l}} R'_{\set{S}_j}
  }
  2^{-n\left(\sum_{\set{S}_j \in \set{D}'_l} H(U_{\set{S}_j}) - H(U_{\set{D}'_l} | U_{\set{D}_l \setminus \set{D}'_l}, Y_l)-2 \epsilon_{l^* + 2} \left(\sum_{\set{S}_j \in \set{D}'_l} H(U_{\set{S}_j}) \right) \right)},
\mbox{  from Corollary \ref{corr:typical_joint}}.
\end{align}
Note that $R'_{\set{S}_j} \ge 0, \mbox{ for all } j \in [2^m-1]$ and when each $ R'_{\set{S}_j} = 0$, $\set{S}_j \in \set{D}'_l$, there is only one codeword $U(\mb{k}_{\set{S}_j})$, $\set{S}_j \in \set{D}'_l$ in each bin.  Then,  from (\ref{ineq:err_prob_dec2})
 $\Pr(\bar{F}_{\set{D}'_l} \cap \bar{E}^c \cap  D_{1,l}) = 0$ in this case. Also, when each $H(U_{\set{S}_j}) = 0$, $\set{S}_j \in \set{D}'_l$,  $\Pr(\bar{F}_{\set{D}'_l} \cap \bar{E}^c \cap  D_{1,l})$ is equal to $0$.

Thus from (\ref{ineq:ach_proof_dec_err}), if 
\begin{align}
\label{ineq:cond_dec2}
 \sum_{\set{S}_j \in \set{D}'_l} R'_{\set{S}_j} &\le
\max \left\{ \left( \sum_{\set{S}_j \in \set{D}'_l} H(U_{\set{S}_j}) \right)- H(U_{\set{D}'_l} | U_{\set{D}_l \setminus \set{D}'_l}, Y_l)-3 \epsilon_{l^* + 2} \left(\sum_{\set{S}_j \in \set{D}'_l} H(U_{\set{S}_j}) \right), 0
 \right\}
\end{align} 
and $n \ge n_4(\epsilon',\epsilon_{l^* + 2}, H(U_{\set{S}_j}) )$, $\Pr( D^c_{2,l} \cap \bar{E}^c \cap  D^c_{1,l} ) < \frac{\epsilon'}{2^{|\set{D}_l|}}$. Then from (\ref{eq:prob_error_d}), if $R'_{\set{S}_j}$ satisfies (\ref{ineq:cond_dec2}) for all $\set{D}_l$, $l \in [m]$ and $n > N$, where $ N = \max\{N_1, n_2(\epsilon', \delta_{\epsilon_0}),n_3(\epsilon',\delta_{\epsilon_{l^*}\epsilon_{l^*+1}} ),\max_{l \in [m]}\{n_4(\epsilon',\epsilon_{l^* + 2},H(U_{\set{S}_j}) )\}\}$
\begin{align}
\label{ineq:Derr_l}
\Pr(D_{err,l}) < 3\epsilon' .
\end{align}
Let 
\begin{align*}
D_{err} = \cup_{l \in [m]} D_{err,l}
\end{align*}
denote the event that there is a decoding error at some decoder. By (\ref{ineq:Derr_l}) and the union bound we have
\begin{align}
\label{ineq:Derr_l_union}
\Pr(D_{err}) < 3\epsilon'm .
\end{align}
Thus there must exist a single code in the ensemble for which (\ref{ineq:Derr_l_union}) holds.
Now we focus on the distortion constraints at the decoder for this particular code. 
Assuming that there is no error occurring at the encoder and the decoders (corresponding to event $E^c \cap D^c_{err,l}$), decoder $l$ can find a unique  $u_{\set{D}_l}(\mb{k}_{\set{D}_l})$ such that $(u_{\set{D}_l}(\mb{k}_{\set{D}_l}), y^n_l, x^n )$ is $\epsilon_{l^* +1}$-jointly typical and it can reconstruct $\hat{x}^n_l$ symbol-by-symbol through $\hat{x}_{li}= g_l(u_{\set{D}_li},y_{li})$, $i \in [n]$.  Then using the arguments in 
\cite[page 57]{kramer} we can bound the average distortion at decoder $l$ by 
\begin{align}
 \frac{1}{n} 
\sum_{i = 1}^{n} d_l(x_i,\wh{x}_{li})
& = \sum_{i = 1}^{n} d_l(x_i, g_l(u_{\set{D}_li},y_{li}))
\notag \\
& \le E\left[ d_l(X, g_l(U_{\set{D}_l}, Y_{l}))\right] + \epsilon_{l^* +1}D_{l,max}
\notag \\
& \le D_l + \epsilon_{l^* +1}D_{l,max},
\end{align}
where $D_{l,max}$ is the maximum distortion that $d_l(\cdot,\cdot)$ can give. 
Then the expected distortion at decoder $l$ can be bounded by
\begin{align}
\label{ineq:avg_dist_const}
E\left[ \frac{1}{n} \sum_{i = 1}^{n} d_l(x_i,\wh{x}_{li})\right] 
& \le (D_l + \epsilon_{l^* +1}D_{l,max})\Pr(E^c \cap D^c_{err,l}) + D_{l,max}\Pr(E \cup D_{err,l})
\notag \\
& \le D_l + D_{l,max}(\epsilon_{l^* +1}+ \Pr(E \cup D_{err,l}))
\notag \\
& < D_l +  D_{l,max}(\epsilon_{l^* +1} + 4\epsilon'm),
\end{align}
where (\ref{ineq:avg_dist_const}) holds if $n> N$ and $(R_{\set{S}_j}, R'_{\set{S}_j})$, $\set{S}_j \subseteq [m]$ satisfy the conditions in (\ref{ineq:cond_enc}), (\ref{ineq:cond_dec2}), and the following non-negativity conditions :
 \begin{align}
 \label{cond:nonneg1}
& R_{\set{S}_j} \ge 0, \mbox{ for all } j \in [2^m-1] 
\\
 \label{cond:nonneg2}
 & R'_{\set{S}_j} \ge 0, \mbox{ for all } j \in [2^m-1].
 \end{align}


Thus for all sufficiently large $n$, there exists a code whose expected distortion
at decoder $l$ satisfies (\ref{ineq:avg_dist_const}) and whose rate does not exceed
\begin{align}
\label{lowerLP:1}
&\inf \sum^{2^m-1}_{j =1} R_{\set{S}_j}
\\
& \mbox{ subject to :}  R_{\set{S}_j},  R'_{\set{S}_j}, j \in [2^m-1] \mbox{ satisfying }  (\ref{ineq:cond_enc}), (\ref{ineq:cond_dec2}), (\ref{cond:nonneg1}), \mbox{ and }  (\ref{cond:nonneg2}).
\notag
\end{align}

\begin{lemma}
\label{lemma:LP_inequality}
Let $0<\epsilon_0<\epsilon_1< \ldots <\epsilon_{2^m +1}$, and  $U_{\set{S}_j}$,$\set{S}_j \in v$, be as in the proof of Theorem \ref{thm:gen_ach}. For $\gamma \ge 0$, consider the following linear program:
\begin{align}
\label{lowerLP:2}
&\wt{R}(\gamma) = \inf_{C^{LP}_{ach}(\gamma)} \sum^{2^m-1}_{j =1} R_{\set{S}_j}, 
\end{align}
where $C^{LP}_{ach}(\gamma)$ denotes the set of $R_{\set{S}_j}$ and $R'_{\set{S}_j}$ such that
\\
1) $ R_{\set{S}_j} \ge 0$ and $R'_{\set{S}_j} \ge 0$, for all $j \in [2^m-1]$;
\\
2) $R_{\set{S}_j} + R'_{\set{S}_j} \ge  I(X, U_{\set{S}_j}^{-} ; U_{\set{S}_j}) + 3\gamma$,  for all $j \in [2^m-1]$;
\\
3) For each decoder $l$, $l \in [m]$
\begin{align*}
 \sum_{\set{S}_j \in \set{D}'_l} R'_{\set{S}_j} &\le
\max \left\{ \left( \sum_{\set{S}_j \in \set{D}'_l} H(U_{\set{S}_j}) \right)- H(U_{\set{D}'_l} | U_{\set{D}_l \setminus \set{D}'_l}, Y_l)-3(2^m-1)\gamma, 0
 \right\}.
\end{align*}

Then $\wt{R}(\gamma)$ is continuous at $\gamma = 0$ and is greater than or equal to the optimal value in (\ref{lowerLP:1}) if 
\begin{align}
\label{third_jour:gamma_cont_cond}
\gamma \ge \epsilon_{2^m +1}\max_{U_{\set{S}_j}}H(U_{\set{S}_j}).
\end{align}
\end{lemma}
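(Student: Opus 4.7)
The plan is to show that whenever $\gamma \ge \epsilon_{2^m+1}\max_{\set{S}_j} H(U_{\set{S}_j})$, every $(R_{\set{S}_j}, R'_{\set{S}_j})$ feasible for the LP defining $\wt{R}(\gamma)$ is also feasible for the LP in~(\ref{lowerLP:1}); the inequality on the infima then follows immediately from this set inclusion. Since the $\epsilon_i$ are ordered, $\epsilon_j \le \epsilon_{2^m+1}$ for all $j \in [2^m-1]$, so $3\gamma \ge 3\epsilon_j H(U_{\set{S}_j})$, and thus constraint 2) of $C^{LP}_{ach}(\gamma)$ is at least as strong as~(\ref{ineq:cond_enc}). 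For constraint 3), I use the trivial bound $\sum_{\set{S}_j \in \set{D}'_l} H(U_{\set{S}_j}) \le (2^m-1)\max_j H(U_{\set{S}_j})$ together with $\epsilon_{l^*+2} \le \epsilon_{2^m+1}$ to get $3(2^m-1)\gamma \ge 3\epsilon_{l^*+2}\sum_{\set{S}_j \in \set{D}'_l} H(U_{\set{S}_j})$, so the upper bound on $\sum R'_{\set{S}_j}$ in $C^{LP}_{ach}(\gamma)$ is no larger than the one in~(\ref{ineq:cond_dec2}). Combined with the non-negativity conditions~(\ref{cond:nonneg1})--(\ref{cond:nonneg2}), which are identical in both LPs, this gives the claimed set inclusion.

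\textbf{Part 2 (continuity at $\gamma = 0$).} The LP has a fixed finite set of variables $\{R_{\set{S}_j}, R'_{\set{S}_j}\}_{j \in [2^m-1]}$, and the constraint coefficients depend on $\gamma$ only through the continuous functions $3\gamma$ and $\max\{b_{l,\set{D}'_l} - 3(2^m-1)\gamma,\, 0\}$, where $b_{l,\set{D}'_l} = \sum_{\set{S}_j \in \set{D}'_l} H(U_{\set{S}_j}) - H(U_{\set{D}'_l} \mid U_{\set{D}_l \setminus \set{D}'_l}, Y_l)$. By sub-additivity of entropy,
\[
b_{l,\set{D}'_l} \ge H(U_{\set{D}'_l}) - H(U_{\set{D}'_l} \mid U_{\set{D}_l \setminus \set{D}'_l}, Y_l) \ge 0,
\]
so either $b_{l,\set{D}'_l} > 0$, in which case the linear branch is active on a neighborhood of $\gamma = 0$, or $b_{l,\set{D}'_l} = 0$, in which case the $\max$ is identically $0$ for all $\gamma \ge 0$. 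In both cases the constraint right-hand sides are continuous at $\gamma = 0$. A feasible point at $\gamma = 0$ exists (e.g., $R'_{\set{S}_j} = 0$ and $R_{\set{S}_j} = I(X, U^-_{\set{S}_j}; U_{\set{S}_j})$), and the optimal value is bounded above by $\sum_j H(U_{\set{S}_j})$ uniformly in $\gamma$. Continuity of $\wt{R}$ at $0$ then follows from standard sensitivity results for parametric linear programs~\cite{cont_LP}, as invoked elsewhere in the paper.

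\textbf{Main obstacle.} The principal subtlety is reconciling the $\max\{\cdot,0\}$ in constraint 3) with the standard continuity theorems for linear programs, which typically assume affine dependence on the parameter. The sub-additivity argument above resolves this by showing that on a neighborhood of $\gamma = 0$ the LP is genuinely affine in $\gamma$ on each coordinate (not merely piecewise affine). The second subtle point is that the prefactor $(2^m - 1)$ appearing in constraint 3) of $C^{LP}_{ach}(\gamma)$ must be large enough to absorb all of the $\epsilon_{l^*+2}\sum_{\set{S}_j \in \set{D}'_l} H(U_{\set{S}_j})$ terms uniformly in $l$ and $\set{D}'_l$; this is precisely what the inequality $\sum_{\set{S}_j \in \set{D}'_l} H(U_{\set{S}_j}) \le (2^m-1)\max_j H(U_{\set{S}_j})$ in Part 1 provides.
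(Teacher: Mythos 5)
Your proof is correct and follows essentially the same route as the paper's (which simply notes that $C^{LP}_{ach}(\gamma)$ is contained in the feasible set of~(\ref{lowerLP:1}) under~(\ref{third_jour:gamma_cont_cond}), observes nonemptiness, and cites~\cite{cont_LP} for continuity); you merely spell out the inclusion and the behavior of the $\max\{\cdot,0\}$ term near $\gamma=0$ in more detail. The only nitpick is that your claimed uniform-in-$\gamma$ upper bound $\sum_j H(U_{\set{S}_j})$ on the optimal value is not literally valid for large $\gamma$ (constraint 2) forces the value to grow with $\gamma$), but this is immaterial since only a neighborhood of $\gamma=0$ is needed.
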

\begin{proof}[Proof of Lemma \ref{lemma:LP_inequality}]
Note that when $\gamma = 0$, $C^{LP}_{ach}(\gamma)$ is equal to $C^{LP}_{ach}$. Also, since the alphabets are finite, $C^{LP}_{ach}(\gamma)$ is nonempty for any $\gamma \ge 0$. The continuity of $\wt{R}(\gamma)$ in $\gamma$ then follows from standard results on the continuity of LPs \cite{cont_LP}. The relation with (\ref{lowerLP:1}) follows by noting that $C^{LP}_{ach}(\gamma)$ is contained in the set defined by the constraints  (\ref{ineq:cond_enc}), (\ref{ineq:cond_dec2}), (\ref{cond:nonneg1}), \mbox{ and }  (\ref{cond:nonneg2}), whenever (\ref{third_jour:gamma_cont_cond}) holds. 
\end{proof}

Now given $\epsilon > 0$, choose  $0<\epsilon_0<\epsilon_1< \ldots <\epsilon_{2^m +1}$, $\epsilon'$ and $\gamma$ such that 
\begin{align*}
& D_{l,max}(\epsilon_{l^*+1} + 4\epsilon'm) < \epsilon \mbox{ for all } l \in [m]
\\
&\gamma \ge \epsilon_{2^m +1}\max_{U_{\set{S}_j}}H(U_{\set{S}_j})
\end{align*}
and $\wt{R}(\gamma) < \wt{R}(0) + \epsilon$. Then we have that for all sufficiently large $n$, there exists a code with rate at most  $ \wt{R}(0) + \epsilon$ whose expected distortion at decoder $l$ is at most $D_l + \epsilon$. It follows that 
$\wt{R}(0)$ is $\mb{D}$-achievable as desired and hence $R'_{ach}(\mb{D})$ is $\mb{D}$-achievable. Lastly, since $R(\mb{D})$ is a convex function with respect to $\mb{D}$ and it is upper bounded by $R'_{ach}(\mb{D})$, $R(\mb{D})$ must lie beneath the lower convex envelope of $R'_{ach}(\mb{D})$.
\end{proof}
\section{}
\label{app:strict_ineq}
\begin{proof}[Proof of Lemma \ref{lemma:minimax_odd_cycle_gauss}]
Let $\epsilon > 0$ be given. Note that since each side information variable is a function of the source, the set $\bar{P}$ in Theorem \ref{theorem:minmax} contains only one element. Then  let us select $V = \emptyset$ and $U_{Y_i} = X_i + N_i$ for all $i \in [m]$ where $N_i$ is independent of $\mb{X}$ and the rest of the $N_j$'s, $j \neq i$ and is such that $K_{X_i |U_{Y_i}} = D + \epsilon$. Then the $U_{Y_i}$'s are feasible in the optimization in Theorem \ref{theorem:minmax} and we can write
\begin{align}
R^m_{lb}(\mb{D}+ \epsilon\mb{1}) \le \max_{\sigma} R_{\sigma}
\end{align}
where 
\begin{align}
 &R_{\sigma} = I(\mbt{X}; {U_{Y_\sigma(1)}}|\mbt{Y}_{\sigma(1)})
 + I( \mbt{X} ; {U_{Y_\sigma(2)}}|{U_{Y_\sigma(1)}},\mbt{Y}_{\sigma(1)},\mbt{Y}_{\sigma(2)}) +\cdots
\notag \\
&\phantom{=} \quad + I( \mbt{X} ; {U_{Y_\sigma(m)}}|  {U_{Y_\sigma(1)}},\ldots, U_{Y_\sigma(m-1)},\mbt{Y}_{\sigma(1)}, \ldots, \mbt{Y}_{\sigma(m)}),
\end{align}
and $\sigma(.)$ denotes a permutation on the set $[m]$.
Using the chain rule and since $U_{Y_i} = X_i + N_i$, we can write $R_{\sigma}$  as
\begin{align}
\label{app:lemma:strict_ineq:ineq1}
& I(X_{\sigma(1)}; U_{Y_\sigma(1)}|\mbt{Y}_{\sigma(1)})
 + I( X_{\sigma(2)} ; U_{Y_\sigma(2)}|  \mbt{Y}_{\sigma(1)}, \mbt{Y}_{\sigma(2)}) +\cdots
\notag \\
&\quad + I( X_{\sigma(m)}; U_{Y_\sigma(m)}|  \mbt{Y}_{\sigma(1)}, \ldots, \mbt{Y}_{\sigma(m)}),
\end{align}
where each mutual information term is equal to either $\frac{1}{2}\log\frac{1}{D+ \epsilon}$ or $0$.

Now we show that $\max_{\sigma} R_{\sigma}$, where  $R_{\sigma}$ is equal to (\ref{app:lemma:strict_ineq:ineq1}),   equals $R^m_{lb}(\mb{D}+ \epsilon\mb{1})$. Note that $\bar{R}_{lb}$ in Theorem \ref{theorem:minmax} can be written as
\begin{align*}
\max_{\sigma} \bar{R}_{\sigma},
\end{align*}
where $\bar{R}_{\sigma}$ is equal to the right-hand side of (\ref{4lower_general}) for a given $\sigma$.
Using a series of chain rules and expanding the mutual information terms, we can rewrite $\bar{R}_{\sigma}$ as \footnote{We interpret the differential
entropy of an empty set of continuous random variables to be zero.}
\begin{align}
&  I(\mbt{X} \setminus \mbt{Y}_{\sigma(1)}, \mbt{Y}_{\sigma(1)}; V,{U_{Y_\sigma(1)}}|\mbt{Y}_{\sigma(1)})
 + I( \mbt{X} \setminus \cup^2_{i=1}\mbt{Y}_{\sigma(i)},  \cup^2_{i=1}\mbt{Y}_{\sigma(i)} ; {U_{Y_\sigma(2)}}|V,{U_{Y_\sigma(1)}},\mbt{Y}_{\sigma(1)},\mbt{Y}_{\sigma(2)}) +\cdots
\notag \\
&\quad + I( \mbt{X}\setminus  \cup^m_{i=1}\mbt{Y}_{\sigma(i)},  \cup^m_{i=1}\mbt{Y}_{\sigma(i)} ; {U_{Y_\sigma(m)}}| V, {U_{Y_\sigma(1)}},\ldots, U_{Y_\sigma(m-1)},\mbt{Y}_{\sigma(1)}, \ldots, \mbt{Y}_{\sigma(m)})
\\
&= I(\mbt{X} \setminus \mbt{Y}_{\sigma(1)}; V,{U_{Y_\sigma(1)}}|\mbt{Y}_{\sigma(1)})
 + I( \mbt{X} \setminus \cup^2_{i=1}\mbt{Y}_{\sigma(i)}; {U_{Y_\sigma(2)}}|V,{U_{Y_\sigma(1)}},\mbt{Y}_{\sigma(1)},\mbt{Y}_{\sigma(2)}) +\cdots
\notag \\
&\quad + I( \mbt{X}\setminus  \cup^m_{i=1}\mbt{Y}_{\sigma(i)} ; {U_{Y_\sigma(m)}}| V, {U_{Y_\sigma(1)}},\ldots, U_{Y_\sigma(m-1)},\mbt{Y}_{\sigma(1)}, \ldots, \mbt{Y}_{\sigma(m)})
\\
&\overset{a}{=} I(\mbt{X} \setminus \mbt{Y}_{\sigma(1)}; V,{U_{Y_\sigma(1)}}|\mbt{Y}_{\sigma(1)})
 + I( \mbt{X} \setminus \cup^2_{i=1}\mbt{Y}_{\sigma(i)}; {U_{Y_\sigma(2)}}|V, {U_{Y_\sigma(1)}},\mbt{Y}_{\sigma(1)},\mbt{Y}_{\sigma(2)}) +\cdots
\notag \\
&\quad + I( \mbt{X}\setminus  \cup^k_{i=1}\mbt{Y}_{\sigma(i)} ; {U_{Y_\sigma(k)}}| V,  {U_{Y_\sigma(1)}},\ldots, U_{Y_\sigma(k-1)},\mbt{Y}_{\sigma(1)}, \ldots, \mbt{Y}_{\sigma(k)})
\\
&\overset{b}{=} h(\mbt{X} \setminus \mbt{Y}_{\sigma(1)}|\mbt{Y}_{\sigma(1)}) - h(\mbt{X} \setminus \mbt{Y}_{\sigma(1)}| V, {U_{Y_\sigma(1)}},\mbt{Y}_{\sigma(1)})
\notag \\
&\quad  + h( \mbt{X} \setminus \cup^2_{i=1}\mbt{Y}_{\sigma(i)}|V, {U_{Y_\sigma(1)}},\mbt{Y}_{\sigma(1)},\mbt{Y}_{\sigma(2)}) -  h( \mbt{X} \setminus \cup^2_{i=1}\mbt{Y}_{\sigma(i)}| V, {U_{Y_\sigma(1)}},{U_{Y_\sigma(2)}},\mbt{Y}_{\sigma(1)},\mbt{Y_{\sigma(2)}}) +\cdots
\notag \\
&\quad + h( \mbt{X}\setminus  \cup^k_{i=1}\mbt{Y}_{\sigma(i)} | V, {U_{Y_\sigma(1)}},\ldots, U_{Y_\sigma(k-1)},\mbt{Y}_{\sigma(1)}, \ldots, \mbt{Y}_{\sigma(k)}) 
\notag
\\
& \quad - h( \mbt{X}\setminus  \cup^k_{i=1}\mbt{Y}_{\sigma(i)} |  V, {U_{Y_\sigma(1)}},\ldots, U_{Y_\sigma(k)},\mbt{Y}_{\sigma(1)}, \ldots, \mbt{Y}_{\sigma(k)})
\\
&\overset{c}{=} h(\mbt{X} \setminus \mbt{Y}_{\sigma(1)}|\mbt{Y}_{\sigma(1)}) 
- h(\mbt{Y}_{\sigma(2)} \setminus \mbt{Y}_{\sigma(1)}| V, {U_{Y_\sigma(1)}},\mbt{Y}_{\sigma(1)})
- h( \mbt{X} \setminus \cup^2_{i=1}\mbt{Y}_{\sigma(i)}| V, {U_{Y_\sigma(1)}},\mbt{Y}_{\sigma(1)},\mbt{Y_{\sigma(2)}})
\notag \\
&\quad  + h( \mbt{X} \setminus \cup^2_{i=1}\mbt{Y}_{\sigma(i)}|{V, U_{Y_\sigma(1)}},\mbt{Y}_{\sigma(1)},\mbt{Y}_{\sigma(2)})
\notag 
\\
 &\quad
 -  h(\mbt{Y}_{\sigma(3)}\setminus \cup^2_{i=1}\mbt{Y}_{\sigma(i)}| V, {U_{Y_\sigma(1)}},{U_{Y_\sigma(2)}},\mbt{Y}_{\sigma(1)},\mbt{Y}_{\sigma(2)}) 
  -  h( \mbt{X} \setminus \cup^3_{i=1}\mbt{Y}_{\sigma(i)}| V, {U_{Y_\sigma(1)}},{U_{Y_\sigma(2)}},\mbt{Y}_{\sigma(1)},\mbt{Y}_{\sigma(2)},\mbt{Y}_{\sigma(3)}) +
\notag \\
&\quad \cdots + h( \mbt{X}\setminus  \cup^k_{i=1}\mbt{Y}_{\sigma(i)} | V, {U_{Y_\sigma(1)}},\ldots, U_{Y_\sigma(k-1)},\mbt{Y}_{\sigma(1)}, \ldots, \mbt{Y}_{\sigma(k)}) 
\notag
\\
& \quad - h( \mbt{X}\setminus  \cup^k_{i=1}\mbt{Y}_{\sigma(i)} | V, {U_{Y_\sigma(1)}},\ldots, U_{Y_\sigma(k)},\mbt{Y}_{\sigma(1)}, \ldots, \mbt{Y}_{\sigma(k)})
\\
\label{app:lemma:strict_ineq:ineq2}
& =  h(\mbt{X} \setminus  \mbt{Y}_{\sigma(1)} | \mbt{Y}_{\sigma(1)}) - h(\mbt{Y}_{\sigma(2)} \setminus \mbt{Y}_{\sigma(1)}| V,U_{Y_\sigma(1)},\mbt{Y}_{\sigma(1)})
-  h(\mbt{Y}_{\sigma(3)} \setminus  \cup^2_{i=1}\mbt{Y}_{\sigma(i)} | V,{U_{Y_\sigma(1)}},U_{Y_\sigma(2)}, \mbt{Y}_{\sigma(1)}, \mbt{Y}_{\sigma(2)})-\cdots
\notag \\
&\quad - h( \mbt{Y}_{\sigma(k)} \setminus  \cup^{k-1}_{i=1}\mbt{Y}_{\sigma(i)} | V, {U_{Y_\sigma(1)}},\ldots, U_{Y_\sigma(k-1)}, \mbt{Y}_{\sigma(1)}, \ldots, \mbt{Y}_{\sigma(k-1)}) 
\notag \\
&\quad 
-h(\mbt{X} \setminus  \cup^k_{i=1}\mbt{Y}_{\sigma(i)} | V,{U_{Y_\sigma(1)}},\ldots, U_{Y_\sigma(k)},  \mbt{Y}_{\sigma(1)}, \ldots, \mbt{Y}_{\sigma(k)})
\\
\label{app:lemma:strict_ineq:ineq3}
& \ge  h(\mbt{X} \setminus \mbt{Y}_{\sigma(1)} |\mbt{Y}_{\sigma(1)}) - h(\mbt{Y}_{\sigma(2)} \setminus \mbt{Y}_{\sigma(1)} | \wh{X}_{\sigma(1)},\mbt{Y}_{\sigma(1)})
-  h(\mbt{Y}_{\sigma(3)} \setminus  \cup^2_{i=1}\mbt{Y}_{\sigma(i)}| \wh{X}_{\sigma(1)},\wh{X}_{\sigma(2)}, \mbt{Y}_{\sigma(1)}, \mbt{Y}_{\sigma(2)})+\cdots
\notag \\
&\quad - h( \mbt{Y}_{\sigma(k)} \setminus  \cup^{k-1}_{i=1}\mbt{Y}_{\sigma(i)} |  \wh{X}_{\sigma(1)},\ldots,  \wh{X}_{\sigma(k-1)}, \mbt{Y}_{\sigma(1)}, \ldots, \mbt{Y}_{\sigma(k-1)}) 
\notag \\
&\quad 
-h(\mbt{X} \setminus  \cup^k_{i=1}\mbt{Y}_{\sigma(i)} |  \wh{X}_{\sigma(1)},\ldots,  \wh{X}_{\sigma(k)},  \mbt{Y}_{\sigma(1)}, \ldots, \mbt{Y}_{\sigma(k)}), \\
\notag
 & = h(\mbt{X} \setminus \mbt{Y}_{\sigma(1)}|\mbt{Y}_{\sigma(1)})
    - \sum_{j = 1}^{k-1}
      h(\mbt{Y}_{\sigma(j+1)} \setminus \cup_{i=1}^{j}\mbt{Y}_{\sigma(i)}|
        \wh{X}_{\sigma(1)},\ldots, \wh{X}_{\sigma(j)}, 
              \mbt{Y}_{\sigma(1)}, \ldots, \mbt{Y}_{\sigma(j)}) \\
  & \quad   - h(\mbt{X} \setminus \cup_{i=1}^{k}\mbt{Y}_{\sigma(i)}|
        \wh{X}_{\sigma(1)},\ldots, \wh{X}_{\sigma(k)}, 
              \mbt{Y}_{\sigma(1)}, \ldots, \mbt{Y}_{\sigma(k)}),
\label{app:lemma:summation}
\end{align}
where $\wh{X}_{\sigma(i)}$ is such that $K_{{X}_{\sigma(i)}| \wh{X}_{\sigma(i)},
\mbt{Y_{\sigma(i)}} } \le D + \epsilon$, $N_{\sigma(i)}$ is as defined before, and
\\
a : follows since $\mbt{Y}_{i} = ({X_{i-1}}, {X_{i+1}})$ and there exists a $k > 1$ such that $\mbt{X}\setminus \cup^l_{i=1}\mbt{Y}_{\sigma(i)} = \emptyset$ for all $l > k$.
\\
b : follows by expanding each mutual information term.
\\
c : follows by applying the chain rule to all minus terms except the last one (i.e., the second term, the fourth term, etc.).

Now 
\begin{align}
\notag
& h(\mbt{Y}_{\sigma(j+1)} \setminus \cup_{i=1}^{j}\mbt{Y}_{\sigma(i)}|
   \wh{X}_{\sigma(1)},\ldots, \wh{X}_{\sigma(j)}, 
         \mbt{Y}_{\sigma(1)}, \ldots, \mbt{Y}_{\sigma(j)}) \\
& \le 
\label{app:lemma:otherX}
h(\mbt{X}_{\sigma(j+1)-1} \setminus \cup_{i=1}^{j}\mbt{Y}_{\sigma(i)}|
   \wh{X}_{\sigma(1)},\ldots, \wh{X}_{\sigma(j)}, 
         \mbt{Y}_{\sigma(1)}, \ldots, \mbt{Y}_{\sigma(j)}) \\
\notag
& \quad + 
h(\mbt{X}_{\sigma(j+1)+1} \setminus \cup_{i=1}^{j}\mbt{Y}_{\sigma(i)}|
   \wh{X}_{\sigma(1)},\ldots, \wh{X}_{\sigma(j)}, 
         \mbt{Y}_{\sigma(1)}, \ldots, \mbt{Y}_{\sigma(j)}).
\end{align}
We shall show that
\begin{align}
\notag
& h(\mbt{X}_{\sigma(j+1)-1} \setminus \cup_{i=1}^{j}\mbt{Y}_{\sigma(i)}|
   \wh{X}_{\sigma(1)},\ldots, \wh{X}_{\sigma(j)}, 
         \mbt{Y}_{\sigma(1)}, \ldots, \mbt{Y}_{\sigma(j)}) \\
 &  \le
h(\mbt{X}_{\sigma(j+1)-1} \setminus \cup_{i=1}^{j}\mbt{Y}_{\sigma(i)}|
   {X}_{\sigma(1)} + {N}_{\sigma(1)},\ldots, 
   {X}_{\sigma(j)} + {N}_{\sigma(j)}, 
    \mbt{Y}_{\sigma(1)}, \ldots, \mbt{Y}_{\sigma(j)})
\label{app:lemma:target}
\end{align}
for all $j \in \{1,\ldots,k-1\}$ and similarly for the quantities
\begin{align}
\label{app:lemma:other1}
h(\mbt{X}_{\sigma(j+1)+1} \setminus \cup_{i = 1}^j \mbt{Y}_{\sigma(i)}|
    \wh{X}_{\sigma(1)}, \ldots, \wh{X}_{\sigma(j)},
     \mbt{Y}_{\sigma(1)}, \ldots, \mbt{Y}_{\sigma(j)}).
\intertext{and}
\label{app:lemma:other2}
h(\mbt{X} \setminus \cup_{i = 1}^k \mbt{Y}_{\sigma(i)}|
    \wh{X}_{\sigma(1)}, \ldots, \wh{X}_{\sigma(k)},
     \mbt{Y}_{\sigma(1)}, \ldots, \mbt{Y}_{\sigma(k)}),
\end{align}
appearing in (\ref{app:lemma:otherX}) and (\ref{app:lemma:summation}), respectively.
To show (\ref{app:lemma:target}), fix $j \in \{1,\ldots,k-1\}$ and define the
sets of indices \footnote{Recall that here $x \mod m$ is defined
to be in $[m]$.}
\begin{align}
I_1 & = \cup_{i = 1}^j \{\sigma(i) - 1 \mod m, \sigma(i) +1 \mod m\} \\
I_2 & = \cup_{i = 1}^j \{\sigma(i)\}.
\end{align}
If $\sigma(j+1) - 1 \mod m\in I_1$, then the entropy quantities 
on both sides
of (\ref{app:lemma:target}) are empty so (\ref{app:lemma:target}) 
trivially holds.
If $\sigma(j+1) - 1 \mod m\in I_2 \setminus I_1$, then we have
\begin{align}
& h(\mbt{X}_{\sigma(j+1)-1} \setminus \cup_{i=1}^{j}\mbt{Y}_{\sigma(i)}|
   \wh{X}_{\sigma(1)},\ldots, \wh{X}_{\sigma(j)}, 
         \mbt{Y}_{\sigma(1)}, \ldots, \mbt{Y}_{\sigma(j)}) \\
  & \le \frac{1}{2} \log (2 \pi e (D + \epsilon)) \\
  & = h(\mbt{X}_{\sigma(j+1)-1} \setminus \cup_{i=1}^{j}\mbt{Y}_{\sigma(i)}|
   {X}_{\sigma(1)} + {N}_{\sigma(1)},\ldots, 
   {X}_{\sigma(j)} + {N}_{\sigma(j)}, 
    \mbt{Y}_{\sigma(1)}, \ldots, \mbt{Y}_{\sigma(j)}).
\end{align}
And if $\sigma(j + 1) - 1 \mod m \notin I_1 \cup I_2$, then we have
\begin{align}
& h(\mbt{X}_{\sigma(j+1)-1} \setminus \cup_{i=1}^{j}\mbt{Y}_{\sigma(i)}|
   \wh{X}_{\sigma(1)},\ldots, \wh{X}_{\sigma(j)}, 
         \mbt{Y}_{\sigma(1)}, \ldots, \mbt{Y}_{\sigma(j)}) \\
   & \le h(\mbt{X}_{\sigma(j+1)-1} \setminus \cup_{i=1}^{j}\mbt{Y}_{\sigma(i)})
      \\
  & = h(\mbt{X}_{\sigma(j+1)-1} \setminus \cup_{i=1}^{j}\mbt{Y}_{\sigma(i)}|
   {X}_{\sigma(1)} + {N}_{\sigma(1)},\ldots, 
   {X}_{\sigma(j)} + {N}_{\sigma(j)}, 
    \mbt{Y}_{\sigma(1)}, \ldots, \mbt{Y}_{\sigma(j)}).
\end{align}
This establishes (\ref{app:lemma:target}). The argument for the quantities
in (\ref{app:lemma:other1}) and (\ref{app:lemma:other2}) 
is similar. Substituting into
(\ref{app:lemma:summation}) gives
\begin{align}
\label{app:lemma:bottomline}
\bar{R}_\sigma
& \ge  h(\mbt{X} \setminus  \mbt{Y}_{\sigma(1)} |\mbt{Y}_{\sigma(1)}) - h(\mbt{Y}_{\sigma(2)} \setminus \mbt{Y}_{\sigma(1)} | X_{\sigma(1)} +N_{\sigma(1)}, \mbt{Y}_{\sigma(1)})
\notag \\
& \quad
-  h(\mbt{Y}_{\sigma(3)} \setminus  \cup^2_{i=1}\mbt{Y}_{\sigma(i)} | X_{\sigma(1)} +N_{\sigma(1)}, X_{\sigma(2)} +N_{\sigma(2)} ,\mbt{Y}_{\sigma(1)}, \mbt{Y}_{\sigma(2)}) - \cdots
\notag \\
&\quad 
- h( \mbt{Y}_{\sigma(k)} \setminus  \cup^{k-1}_{i=1}\mbt{Y}_{\sigma(i)} |  X_{\sigma(1)} +N_{\sigma(1)}, \ldots,  X_{\sigma(k-1)} +N_{\sigma(k-1)}, \mbt{Y}_{\sigma(1)}, \ldots, \mbt{Y}_{\sigma(k-1)}) 
\notag  \\
&\quad
-h(\mbt{X} \setminus  \cup^k_{i=1}\mbt{Y}_{\sigma(i)} |   X_{\sigma(1)} +N_{\sigma(1)}, \ldots,  X_{\sigma(k)} +N_{\sigma(k)}, \mbt{Y}_{\sigma(1)},  \mbt{Y}_{\sigma(1)}, \ldots, \mbt{Y}_{\sigma(k)}).
\end{align}
Note that this last inequality is an equality
when $V = \emptyset$ and $U_{Y_\sigma(i)} = \wh{X}_{\sigma(i)} = X_{\sigma(i)} +N_{\sigma(i)}$, implying that $\bar{R}_\sigma = {R}_\sigma$.  Hence, 
\begin{align*}
R^m_{lb}(\mb{D}+ \epsilon\mb{1}) = \max_{\sigma} R_{\sigma}.
\end{align*}
From (\ref{app:lemma:strict_ineq:ineq1}), we know that $\max_{\sigma} R_{\sigma}$ is equal to $c \frac{1}{2}\log\frac{1}{D +\epsilon}$, where $c$ is an integer. Now we find $c$. When the permutation $\sigma(i) = 2i -1$, for $i \in [\frac{m-1}{2}]$ and  $\sigma(i) = 2(i -\frac{m-1}{2})$  for $i \in \{\frac{m-1}{2}, \ldots m\}$, we get $R_{\sigma} = \frac{m-1}{2} \frac{1}{2}\log\frac{1}{D +\epsilon}$, implying $c \ge \frac{m-1}{2}$. Also, from  Theorem \ref{thm:odd_cycle_GI}, we know that  $c \le \frac{m}{2}$. Hence $c = \frac{m-1}{2}$. Then we have 

\begin{align}
R^m_{lb}(\mb{D}+ \epsilon\mb{1}) = \frac{m-1}{2} \frac{1}{2}\log\frac{1}{D +\epsilon}.
\end{align}
Taking $\epsilon \rightarrow 0$ on both sides gives the result.
\end{proof}

\section{}
\label{app:typicality}
We first give the definition of $\epsilon$-letter typical sequences \cite{kramer} and then reference results that are useful to prove Theorem \ref{thm:gen_ach}.

\begin{definition} Let $\epsilon> 0$ be given. $x^n \in \mathcal{X}^n$ is called an \textit{$\epsilon$-letter typical sequence} with respect to $p_X$ if 
\begin{align*}
\left|\frac{1}{n}N(a|x^n) - p_X(a)\right| \le p_X(a)\epsilon, \mbox{ for all } a \in \mathcal{X},  
\end{align*}
where $N(a|x^n)$ denotes the number of times the symbol $a$ occurs in $x^n$.
Also $\mathcal{T}^{(n)}_{\epsilon}(p_X)$ denotes the set of all $\epsilon$-letter typical sequences with respect to $p_X$.
\end{definition}

\begin{definition} Let $\epsilon> 0$ be given. $(x^n, y^n) \in \mathcal{X}^n \times \mathcal{Y}^n$ is called a \emph{ jointly  typical sequence} with respect to $p_{XY}$ if 
\begin{align*}
\left|\frac{1}{n}N(a, b|x^n,y^n) - p_{XY}(a,b)\right| \le p_{XY}(a,b)\epsilon, \mbox{ for all } (a,b) \in \mathcal{X} \times \mathcal{Y}.
\end{align*}
Also $\mathcal{T}^{(n)}_{\epsilon}(p_{XY})$ denotes the set of all jointly typical sequences with respect to $p_{XY}$.
\end{definition}

\begin{definition} Let $\epsilon> 0$ be given. The set of\emph{ conditionally typical sequences}, $\mathcal{T}^{(n)}_{\epsilon}(p_{XY}|x^n)$, is defined as 
\begin{align*}
\mathcal{T}^{(n)}_{\epsilon}(p_{XY}|x^n) = \{ y^n | (x^n,y^n) \in \mathcal{T}^{(n)}_{\epsilon}(p_{XY}) \}.
\end{align*}
\end{definition}

\begin{lemma}\cite[Theorem 1.1]{kramer}
\label{thm:typical1}
Let $0 < \epsilon \le \mu_X$ where $\mu_X = \min_{x \in support(p_X)}p(x)$ and $X^n \in \mathcal{X}^n$ is drawn i.i.d.\ with respect to $p_X$. Then 
\begin{align*}
1 - \delta_\epsilon(n) \le  \Pr [X^n \in  \mathcal{T}^{(n)}_\epsilon(p_X)] \le 1,
\end{align*}
where $\delta_{\epsilon}(n) = 2|\mathcal{X} | e^{-n\epsilon^2\mu_X}$.
\end{lemma}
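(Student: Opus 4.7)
The plan is to prove the concentration bound via a letter-by-letter Chernoff-type argument followed by a union bound over the alphabet. The upper inequality $\Pr[X^n \in \mathcal{T}^{(n)}_\epsilon(p_X)] \le 1$ is of course trivial, so the substance is the lower bound.

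First, I would fix $a \in \mathcal{X}$ and note that $N(a|X^n) = \sum_{i=1}^{n} \mathbf{1}[X_i = a]$ is a sum of $n$ i.i.d.\ Bernoulli$(p_X(a))$ random variables. For any $a$ outside the support of $p_X$ one has $p_X(a)=0$ and $N(a|X^n)=0$ almost surely, so the constraint $|N(a|X^n)/n - p_X(a)| \le p_X(a)\epsilon$ holds deterministically and such letters contribute nothing to the failure probability. The interesting letters are those in the support, where $p_X(a) \ge \mu_X$.

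Next, for each such $a$ I would apply a two-sided Chernoff bound on the relative deviation of $N(a|X^n)/n$ from its mean $p_X(a)$. Using the moment generating function $E[e^{t \mathbf{1}[X_i=a]}] = 1 - p_X(a) + p_X(a)e^t$, optimizing $t$, and invoking the constraint $\epsilon \le \mu_X \le 1$ to stay in the regime where the sharp bound holds, one obtains an estimate of the form
\begin{equation*}
\Pr\!\left[\left|\frac{N(a|X^n)}{n} - p_X(a)\right| > p_X(a)\,\epsilon\right] \le 2\,e^{-n\,p_X(a)\,\epsilon^{2}} \le 2\,e^{-n\mu_X\epsilon^{2}}.
\end{equation*}
A standard union bound over the (at most) $|\mathcal{X}|$ letters then gives
\begin{equation*}
\Pr\!\left[X^n \notin \mathcal{T}^{(n)}_{\epsilon}(p_X)\right] \le 2|\mathcal{X}|\,e^{-n\mu_X\epsilon^{2}} = \delta_\epsilon(n),
\end{equation*}
which is exactly the claim.

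The only genuinely delicate step is nailing down the exponent $n\mu_X\epsilon^{2}$ without an extraneous constant such as $1/2$ or $1/3$. Naive applications of either Hoeffding's inequality (which gives $e^{-2n p_X(a)^2\epsilon^{2}}$, quadratic in $p_X(a)$) or the coarse multiplicative Chernoff bound (which produces $e^{-n p_X(a)\epsilon^{2}/3}$) are off by such a factor, so one must either invoke the sharp form of the Chernoff bound valid on the small-$\epsilon$ regime $\epsilon \le \mu_X \le 1$, or absorb the discrepancy into the definition of $\delta_\epsilon(n)$. Modulo this book-keeping the argument is routine, and it is essentially the proof given in Kramer's monograph.
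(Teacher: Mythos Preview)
The paper does not give its own proof of this lemma; it is simply quoted from Kramer's monograph as a black box, so there is no argument in the paper to compare against. Your Chernoff-plus-union-bound outline is exactly the standard proof and is essentially what appears in the cited reference. Your own caveat about the constant in the exponent is accurate: the off-the-shelf multiplicative Chernoff bound yields $e^{-n p_X(a)\epsilon^{2}/3}$ rather than $e^{-n p_X(a)\epsilon^{2}}$, and getting the sharper constant requires a slightly more careful estimate. For the paper's purposes, however, only $\delta_\epsilon(n)\to 0$ is ever used, so the precise constant is immaterial.
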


\begin{lemma}
\label{thm:typical_joint}\cite[Theorem 1.3]{kramer}
Let $0 < \epsilon_1 < \epsilon_2 \le \mu_{XY}$ where $\mu_{XY} = \min_{(x,y) \in support(p_{XY})}p(x,y)$ and $Y^n \in \mathcal{Y}^n$  drawn i.i.d.\ with respect to $p_Y$. If $x^n \in \mathcal{T}^{(n)}_{\epsilon_1}(p_X)$ then 
\begin{align*}
\left(1 - \delta_{\epsilon_1, \epsilon_2}(n)\right) 2^{-n\left(I(X;Y) + 2\epsilon_2H(Y)\right)} 
\leq \Pr\left[ Y^n \in \mathcal{T}^{(n)}_{\epsilon_2}\left(p_{XY}\mid x^n\right) \right] \leq 2^{-n\left(I(X;Y)-2 \epsilon_2 H(Y) \right)},
\end{align*}
where $\delta_{\epsilon_1, \epsilon_2}(n) = 2|\mathcal{X}||\mathcal{Y}| \cdot e^{-n\frac{(\epsilon_2 -
    \epsilon_1)^2}{1+\epsilon_1}\mu_{XY}}$.
\end{lemma}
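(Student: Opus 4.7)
The plan is to adopt the standard method-of-types sandwich: the probability $\Pr[Y^n\in\mathcal{T}^{(n)}_{\epsilon_2}(p_{XY}\mid x^n)]$ equals $\sum_{y^n\in\mathcal{T}^{(n)}_{\epsilon_2}(p_{XY}\mid x^n)}p_Y^n(y^n)$, so I would bound it above and below by the product of (i) an estimate of $p_Y^n(y^n)$ for an individual conditionally typical $y^n$ and (ii) the cardinality $|\mathcal{T}^{(n)}_{\epsilon_2}(p_{XY}\mid x^n)|$, each controlled via joint typicality.

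First I would establish that any $(x^n,y^n)\in\mathcal{T}^{(n)}_{\epsilon_2}(p_{XY})$ satisfies the marginal property $y^n\in\mathcal{T}^{(n)}_{\epsilon_2}(p_Y)$, by summing $N(b\mid y^n)=\sum_a N(a,b\mid x^n,y^n)$ and applying the triangle inequality to the joint-type tolerances. Writing $\log p_Y^n(y^n)=\sum_b N(b\mid y^n)\log p_Y(b)$ and splitting $N(b\mid y^n)=np_Y(b)+(N(b\mid y^n)-np_Y(b))$ then yields
\begin{align*}
2^{-n H(Y)(1+\epsilon_2)}\le p_Y^n(y^n)\le 2^{-n H(Y)(1-\epsilon_2)}
\end{align*}
for every such $y^n$. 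The identical expansion of $\log p_{Y\mid X}^n(y^n\mid x^n)=\sum_{a,b}N(a,b\mid x^n,y^n)\log p_{Y\mid X}(b\mid a)$ using the joint-type bound produces the analogous sandwich for $p_{Y\mid X}^n(y^n\mid x^n)$, with $H(Y\mid X)$ replacing $H(Y)$.

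For the cardinality estimates, the upper bound $|\mathcal{T}^{(n)}_{\epsilon_2}(p_{XY}\mid x^n)|\le 2^{n H(Y\mid X)(1+\epsilon_2)}$ follows immediately from $\sum_{y^n}p_{Y\mid X}^n(y^n\mid x^n)\le 1$ together with the lower sandwich on $p_{Y\mid X}^n(y^n\mid x^n)$. The lower cardinality bound is the main technical step: I would establish a \emph{conditional} AEP asserting that, when $Y_1,\ldots,Y_n$ are independent with $Y_i\sim p_{Y\mid X}(\cdot\mid x_i)$ and $x^n\in\mathcal{T}^{(n)}_{\epsilon_1}(p_X)$, one has $\Pr[(x^n,Y^n)\in\mathcal{T}^{(n)}_{\epsilon_2}(p_{XY})]\ge 1-\delta_{\epsilon_1,\epsilon_2}(n)$. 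To do so I would write each count $N(a,b\mid x^n,Y^n)$ as a sum of $N(a\mid x^n)$ independent Bernoulli$(p_{Y\mid X}(b\mid a))$ indicators; the assumption $x^n\in\mathcal{T}^{(n)}_{\epsilon_1}(p_X)$ places its mean in $np_{XY}(a,b)[1-\epsilon_1,1+\epsilon_1]$, so the slack $\epsilon_2-\epsilon_1$ is the effective deviation tolerance. A Chernoff/Hoeffding-style concentration bound, combined with a union bound over the $|\mathcal{X}|\cdot|\mathcal{Y}|$ pairs, produces precisely the prefactor $2|\mathcal{X}||\mathcal{Y}|$ and the exponent $-n\mu_{XY}(\epsilon_2-\epsilon_1)^2/(1+\epsilon_1)$ stated in the lemma.

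Multiplying the sandwiches then gives $\Pr\le 2^{n H(Y\mid X)(1+\epsilon_2)}\cdot 2^{-n H(Y)(1-\epsilon_2)}=2^{-n(I(X;Y)-\epsilon_2(H(Y\mid X)+H(Y)))}$, and since $H(Y\mid X)\le H(Y)$ the exponent is at most $-n(I(X;Y)-2\epsilon_2 H(Y))$; the matching lower bound $(1-\delta_{\epsilon_1,\epsilon_2}(n))\,2^{-n(I(X;Y)+2\epsilon_2 H(Y))}$ follows from the same algebra applied to the lower cardinality estimate and the upper estimate of $p_Y^n(y^n)$. The main obstacle is the conditional concentration step that delivers $\delta_{\epsilon_1,\epsilon_2}(n)$ in its stated form, since everything else reduces to mechanical bookkeeping with joint-type expansions of log-probabilities.
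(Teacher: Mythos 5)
The paper does not prove this lemma itself---it imports it verbatim as Theorem 1.3 of \cite{kramer}---and your sandwich argument (per-sequence bounds on $p_Y^n(y^n)$ and $p_{Y|X}^n(y^n|x^n)$ times cardinality bounds on $\mathcal{T}^{(n)}_{\epsilon_2}(p_{XY}|x^n)$, with the lower cardinality bound supplied by a conditional AEP obtained from Bernoulli concentration and a union bound over the $|\mathcal{X}||\mathcal{Y}|$ letter pairs) is exactly the standard proof in that source, and it is correct. One cosmetic slip: the final lower bound requires the \emph{lower} estimate $p_Y^n(y^n)\ge 2^{-nH(Y)(1+\epsilon_2)}$, i.e.\ the $(1+\epsilon_2)$ side of the sandwich, not the ``upper estimate of $p_Y^n(y^n)$'' as you wrote, although the exponent you report shows you in fact used the right one.
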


\begin{corollary}
\label{corr:typical_joint}
Let $0 < \epsilon_1 < \epsilon_2 \le \mu_{XYZ}$ where $\mu_{XYZ} = \min_{(x,y,z) \in support(p_{XYZ})}p(x,y,z)$. $Y^n \in \mathcal{Y}^n$ is drawn i.i.d.\ with respect to $p_Y$  and $Z^n \in \mathcal{Z}^n$  is drawn i.i.d.\ with respect to $p_Z$. If $x^n \in \mathcal{T}^n_{\epsilon_1}(p_X)$ then 
\begin{align*}
 \Pr\left[ (Y^n,Z^n) \in \mathcal{T}^{(n)}_{\epsilon_2}\left(p_{XYZ}\mid x^n\right) \right] \leq 2^{-n\left((H(Y) + H(Z) - H(Y,Z|X))-2 \epsilon_2 (H(Y) + H(Z)) \right)}.
\end{align*}
\end{corollary}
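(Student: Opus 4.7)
My plan is to prove this corollary by adapting the argument of Lemma~\ref{thm:typical_joint} (Kramer's Theorem~1.3) to the case of two independently sampled sequences. The essential new ingredient is that, since $Y^n$ and $Z^n$ are drawn independently, their joint probability factorizes as $p_Y^n(y^n)\,p_Z^n(z^n)$, so I will bound each factor separately using marginal typicality, and then multiply by a bound on the size of the conditional typical set.

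First, I will write the probability as a sum over the conditional typical set,
\begin{align*}
\Pr\!\left[(Y^n,Z^n) \in \mathcal{T}^{(n)}_{\epsilon_2}(p_{XYZ}\mid x^n)\right] = \sum_{(y^n,z^n) \in \mathcal{T}^{(n)}_{\epsilon_2}(p_{XYZ}\mid x^n)} p_Y^n(y^n)\,p_Z^n(z^n).
\end{align*}
For any pair $(y^n, z^n)$ in this set, summing the joint-count constraints $|N(a,b,c\mid x^n,y^n,z^n)/n - p_{XYZ}(a,b,c)| \le \epsilon_2\,p_{XYZ}(a,b,c)$ over $(a,c)$ and $(a,b)$ respectively shows $y^n \in \mathcal{T}^{(n)}_{\epsilon_2}(p_Y)$ and $z^n \in \mathcal{T}^{(n)}_{\epsilon_2}(p_Z)$. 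Then the standard per-sequence typicality identity used in Lemma~\ref{thm:typical_joint} gives $p_Y^n(y^n) \le 2^{-n(1-\epsilon_2) H(Y)}$ and $p_Z^n(z^n) \le 2^{-n(1-\epsilon_2) H(Z)}$, so each summand is at most $2^{-n(1-\epsilon_2)(H(Y)+H(Z))}$.

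Next, I will bound the cardinality $|\mathcal{T}^{(n)}_{\epsilon_2}(p_{XYZ}\mid x^n)|$ by viewing $W=(Y,Z)$ as a single letter and working with the conditional distribution $p_{W\mid X}$. Expanding the log-likelihood as
\begin{align*}
-\tfrac{1}{n}\log p_{W\mid X}^n(y^n,z^n\mid x^n) = -\sum_{a,b,c} \frac{N(a,b,c\mid x^n,y^n,z^n)}{n}\, \log p_{W\mid X}(b,c\mid a),
\end{align*}
and upper bounding $N(a,b,c)/n \le (1+\epsilon_2)\,p_{XYZ}(a,b,c)$ by joint $\epsilon_2$-typicality, I obtain $p_{W\mid X}^n(y^n,z^n\mid x^n) \ge 2^{-n(H(Y,Z\mid X) + \epsilon_2 H(Y,Z\mid X))}$ for every $(y^n,z^n)$ in the conditional typical set. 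Since these conditional probabilities sum to at most one, I get $|\mathcal{T}^{(n)}_{\epsilon_2}(p_{XYZ}\mid x^n)| \le 2^{n(H(Y,Z\mid X) + \epsilon_2(H(Y)+H(Z)))}$, where the final step uses $H(Y,Z\mid X) \le H(Y)+H(Z)$.

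Multiplying the per-summand bound by the cardinality bound then yields
\begin{align*}
\Pr\!\left[(Y^n,Z^n) \in \mathcal{T}^{(n)}_{\epsilon_2}(p_{XYZ}\mid x^n)\right] \le 2^{-n\big((H(Y)+H(Z)-H(Y,Z\mid X)) - 2\epsilon_2(H(Y)+H(Z))\big)},
\end{align*}
which is exactly the claimed inequality. The main subtle step is obtaining the cardinality bound with slack proportional to $H(Y,Z\mid X)$ rather than the larger $H(X,Y,Z)+H(X)$ one would get from the naive approach of bounding $p_{XYZ}^n/p_X^n$ directly; the cleaner bound requires expanding $-\tfrac{1}{n}\log p_{W\mid X}^n$ in terms of the joint type $N(a,b,c)/n$ and the conditional log-probability $\log p_{W\mid X}(b,c\mid a)$, so that the typicality slack attaches to $H(Y,Z\mid X)$ alone. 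The hypothesis $\epsilon_1 < \epsilon_2$ plays no direct role in this cleaner argument (unlike in Lemma~\ref{thm:typical_joint}), since the cardinality bound is driven entirely by joint $\epsilon_2$-typicality; it is retained only so the conditional typical set is well-defined relative to a strictly typical $x^n$.
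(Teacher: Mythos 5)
Your proposal is correct and follows essentially the same route as the paper's proof: writing the probability as a sum of $p_Y^n(y^n)p_Z^n(z^n)$ over the conditional typical set, bounding each factor via marginal $\epsilon_2$-typicality, and bounding $|\mathcal{T}^{(n)}_{\epsilon_2}(p_{XYZ}\mid x^n)|$ by $2^{n(1+\epsilon_2)H(Y,Z\mid X)}$ before absorbing the slack using $H(Y,Z\mid X)\le H(Y)+H(Z)$. The only difference is that you derive the per-sequence probability and cardinality bounds from the type-count definitions, whereas the paper simply cites them as Theorems 1.1 and 1.2 of \cite{kramer}.
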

\begin{proof}
\begin{align*}
\Pr\left[ (Y^n,Z^n) \in \mathcal{T}^{(n)}_{\epsilon_2}\left(p_{XYZ}\mid x^n\right) \right] &= \sum_{(y^n, z^n) \in \mathcal{T}^{(n)}_{\epsilon_2}\left(p_{XYZ}\mid x^n\right)} p^n_Y(y^n)p^n_Z(z^n)
\\
& \le 2^{- n(1 - \epsilon_2)H(Y)}2^{- n(1- \epsilon_2)H(Z)}|\mathcal{T}^{(n)}_{\epsilon_2}\left(p_{XYZ}\mid x^n\right)|, \mbox{ by \cite[Theorem 1.1]{kramer} }
\\
& \le 2^{- n(1 - \epsilon_2)H(Y)}2^{- n(1- \epsilon_2)H(Z)}2^{nH(Y, Z |X)(1+\epsilon_2)}, \mbox{ by \cite[Theorem 1.2]{kramer} }
\\
& \le  2^{- n\left( (H(Y) + H(Z) - H(Y,Z|X)) - 2\epsilon_2(H(Y)+H(Z))\right)}.
\end{align*}
\end{proof}
\begin{lemma}\cite[Markov Lemma]{kramer}
\label{thm:typical_markov}
Let $0 < \epsilon_1 < \epsilon_2 \le \mu_{XYZ}$ where $\mu_{XYZ} = \min_{(x,y,z) \in support(p_{XYZ})}p(x,y,z)$ and $(X^n, Y^n, Z^n )$ is  drawn i.i.d.\ with respect to $p_{XYZ}$ such that $X \lra Y \lra Z$. If $(x^n,y^n) \in \mathcal{T}^n_{\epsilon_1}(p_{XY})$ then 
\begin{align*}
 \Pr\left[ Z^n \in \mathcal{T}^{(n)}_{\epsilon_2}\left(p_{XYZ}\mid x^n, y^n \right)| Y^n =y^n \right] 
&= \Pr\left[ Z^n \in \mathcal{T}^{(n)}_{\epsilon_2}\left(p_{XYZ}\mid x^n, y^n\right)| Y^n =y^n, X^n =x^n \right] 
\\
& \ge 1 - \delta_{\epsilon_1, \epsilon_2}(n)
\end{align*}
where $\delta_{\epsilon_1, \epsilon_2}(n) = 2|\mathcal{X}||\mathcal{Y}||\mathcal{Z}| \cdot e^{-n\frac{(\epsilon_2 -
    \epsilon_1)^2}{1+\epsilon_1}\mu_{XYZ}}$.
\end{lemma}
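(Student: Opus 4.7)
My plan is to split the claim into two pieces: the conditional-independence equality, and the quantitative lower bound via concentration.

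For the equality, I would use only the Markov chain $X \lra Y \lra Z$ and the i.i.d.\ structure. Because $(X^n,Y^n,Z^n)$ is drawn i.i.d.\ from $p_{XYZ}$ and $p(z \mid x,y) = p(z \mid y)$, the product form $\prod_{i=1}^n p_X(x_i)\, p_{Y\mid X}(y_i\mid x_i)\, p_{Z\mid Y}(z_i\mid y_i)$ shows that, conditional on $Y^n=y^n$, the coordinates $Z_i$ are independent with $Z_i\sim p_{Z\mid Y}(\cdot\mid y_i)$, regardless of whether we further condition on $X^n=x^n$. Hence any event depending only on $Z^n$ has the same probability under $\Pr[\cdot\mid Y^n=y^n]$ and $\Pr[\cdot\mid Y^n=y^n,X^n=x^n]$, which gives the first equality of the statement.

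For the lower bound I would fix $(x^n,y^n)\in\mathcal{T}^{(n)}_{\epsilon_1}(p_{XY})$ and analyze each joint-type count
\[
N(a,b,c\mid x^n,y^n,Z^n)\;=\;\sum_{i\in I_{ab}}\mathbf{1}\{Z_i=c\},\qquad I_{ab}=\{i:x_i=a,\,y_i=b\}.
\]
By $\epsilon_1$-typicality, $|I_{ab}|=N(a,b\mid x^n,y^n)\in n\,p_{XY}(a,b)(1\pm\epsilon_1)$, and given $Y^n=y^n$ the summands are independent Bernoulli$\bigl(p_{Z\mid Y}(c\mid b)\bigr)$. Thus the conditional mean of $N(a,b,c\mid x^n,y^n,Z^n)$ is $|I_{ab}|\,p_{Z\mid Y}(c\mid b)\in n\,p_{XYZ}(a,b,c)(1\pm\epsilon_1)$. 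Consequently, in order for $Z^n\notin\mathcal{T}^{(n)}_{\epsilon_2}(p_{XYZ}\mid x^n,y^n)$ to fail at the triple $(a,b,c)$, the sum must deviate from its conditional mean by at least $(\epsilon_2-\epsilon_1)\,n\,p_{XYZ}(a,b,c)$.

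I would then invoke a multiplicative Chernoff tail bound for this sum with relative deviation $(\epsilon_2-\epsilon_1)/(1+\epsilon_1)$ from the conditional mean, using $|I_{ab}|\le n\,p_{XY}(a,b)(1+\epsilon_1)$ and the lower bound $p_{XYZ}(a,b,c)\ge \mu_{XYZ}$ for triples in the support. Triples outside the support contribute zero: either $p_{XY}(a,b)=0$, forcing $|I_{ab}|=0$ by $\epsilon_1$-typicality, or $p_{Z\mid Y}(c\mid b)=0$, forcing the sum itself to vanish. The per-triple failure probability is therefore at most $2\exp\!\bigl(-n(\epsilon_2-\epsilon_1)^2\mu_{XYZ}/(1+\epsilon_1)\bigr)$, and a union bound over all $(a,b,c)\in\mathcal{X}\times\mathcal{Y}\times\mathcal{Z}$ yields the claimed $\delta_{\epsilon_1,\epsilon_2}(n)$.

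The main obstacle will be calibrating the concentration step so that the exponent is $\mu_{XYZ}/(1+\epsilon_1)$ rather than the weaker $\mu_{XYZ}^2$ or $p_{XYZ}(a,b,c)^2$ that a naive Hoeffding application would produce; this requires using the multiplicative form of Chernoff (whose exponent is linear in the mean $|I_{ab}|\,p_{Z\mid Y}(c\mid b)$) together with the support lower bound $\mu_{XYZ}$ at exactly the right place. Once that is done, the proof is a straightforward Markov-conditioning plus concentration plus union-bound argument parallel to the proof of Lemma~\ref{thm:typical_joint}.
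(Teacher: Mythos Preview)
The paper does not prove this lemma; it is quoted verbatim from Kramer's lecture notes (the citation \cite[Markov Lemma]{kramer}) and no argument is supplied in Appendix~\ref{app:typicality}. So there is no in-paper proof to compare against.

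Your sketch is the standard argument and is structurally correct. The equality follows immediately from the product form and the Markov condition, exactly as you say. The concentration step is also the right idea: decompose the triple-count $N(a,b,c\mid x^n,y^n,Z^n)$ into a sum of conditionally independent Bernoulli variables over $I_{ab}$, use $\epsilon_1$-typicality of $(x^n,y^n)$ to locate the conditional mean inside $n\,p_{XYZ}(a,b,c)(1\pm\epsilon_1)$, translate the target window $(1\pm\epsilon_2)$ into a relative deviation of at least $(\epsilon_2-\epsilon_1)/(1+\epsilon_1)$, apply a multiplicative Chernoff bound, and union-bound over $(a,b,c)$. The handling of triples outside the support is also correct.

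One small bookkeeping remark: to hit the stated exponent you need both sides of the $\epsilon_1$-typicality window. The upper bound $|I_{ab}|\le n\,p_{XY}(a,b)(1+\epsilon_1)$ is what lower-bounds the \emph{relative} deviation (as you wrote), but it is the lower bound $|I_{ab}|\ge n\,p_{XY}(a,b)(1-\epsilon_1)$ that lower-bounds the conditional mean appearing in the Chernoff exponent. The specific constant-free form $e^{-n(\epsilon_2-\epsilon_1)^2\mu_{XYZ}/(1+\epsilon_1)}$ is inherited from the particular Chernoff inequality stated in Kramer's notes (the same one underlying Lemmas~\ref{thm:typical1} and~\ref{thm:typical_joint}); a textbook multiplicative Chernoff with a $1/3$ factor would give the same form up to that harmless constant, which does not affect any use of the lemma in this paper.
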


\begin{lemma}
\label{lemma:bin_indp}
Let $A$, $B$ and $C$ denote the events 
\begin{align*}
&\{ \exists \mb{\wt{k}}_{\set{D}'_l}  \mbox{ such that } \mb{\wt{k}}_{\set{D}'_l} \neq \mb{k}_{\set{D}'_l}, \wt{k}_{\set{D}'_l} = k_{\set{D}'_l},
 U_{\set{D}'_l}(\mb{\wt{k}}_{\set{D}'_l}) \in  \mathcal{T}^{(n)}_{\epsilon_{l^*+2}}(p| U_{\set{D}_l\setminus \set{D}'_l}(\mb{k}_{\set{D}_l\setminus \set{D}'_l}),Y^n_l)\} \mbox{ and }
 \\
 &  \{\exists \mb{\wt{k}}_{\set{D}'_l}  \mbox{ such that } \wt{k}_{\set{D}'_l} = \mb{1},
 U_{\set{D}'_l}(\mb{\wt{k}}_{\set{D}'_l}) \in  \mathcal{T}^{(n)}_{\epsilon_{l^*+2}}(p| U_{\set{D}_l\setminus \set{D}'_l}(\mb{k}_{\set{D}_l\setminus \set{D}'_l}),Y^n_l) \}
 \\
 &
 \{ (U_{\set{D}_l\setminus \set{D}'_l}(\mb{k}_{\set{D}_l\setminus \set{D}'_l}),Y^n_l )\in  \mathcal{T}^{(n)}_{\epsilon_{l^*+1}}(p) \}
\end{align*}
respectively. Then 
\begin{align*}
&\Pr\left( A |C\right) 
 \le
  \Pr\left( B|C\right).
\end{align*}
\end{lemma}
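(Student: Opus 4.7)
The plan is to exploit the i.i.d.\ codebook generation via a deterministic index-level injection $\phi:S_1\hookrightarrow S_2$, combined with the symmetry that codewords at disjoint sets of indices have identical marginal laws.  The cardinalities satisfy $|S_1|=\prod_{\set{S}_j\in\set{D}'_l}(2^{nR'_{\set{S}_j}}-1)\le\prod_{\set{S}_j\in\set{D}'_l}2^{nR'_{\set{S}_j}}=|S_2|$, leaving room for such an injection.  For a given realization of $\mb{k}_{\set{D}'_l}$ I would take
\[
\phi\bigl((k_{\set{S}_j},\wt{k}'_{\set{S}_j})_{\set{S}_j\in\set{D}'_l}\bigr)\;=\;(1,\wt{k}'_{\set{S}_j})_{\set{S}_j\in\set{D}'_l}\in S_2,
\]
which is injective because the within-bin coordinates are preserved.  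The constraint $\wt{k}'_{\set{S}_j}\ne k'_{\set{S}_j}$ defining $S_1$ ensures that the image index $(1,\wt{k}'_{\set{S}_j})$ never coincides with the encoder-selected index $\mb{k}_{\set{S}_j}$, even in the edge case $k_{\set{S}_j}=1$.

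To conclude, I would condition on $\mb{k}_{\set{D}_l}$ taking an arbitrary value, which makes $S_1$ and $S_2$ deterministic, and then argue that conditional on $(C,\mb{k}_{\set{D}_l})$ the two collections of codewords $\{U_{\set{D}'_l}(\mb{\wt{k}})\}_{\mb{\wt{k}}\in S_1}$ and $\{U_{\set{D}'_l}(\phi(\mb{\wt{k}}))\}_{\mb{\wt{k}}\in S_1}$ share the common joint law $\prod_{\set{S}_j\in\set{D}'_l}p_{U_{\set{S}_j}}^n$ and are independent of $W=(U_{\set{D}_l\setminus\set{D}'_l}(\mb{k}_{\set{D}_l\setminus\set{D}'_l}),Y_l^n)$.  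This relies on the i.i.d.\ codebook generation across indices and the independence of the codebook across different messages.  Given this distributional identity, the probability of the $S_1$-union equals that of the $\phi(S_1)$-union, which by $\phi(S_1)\subseteq S_2$ and monotonicity is bounded above by the probability of the $S_2$-union.  Averaging over the conditioning value of $\mb{k}_{\set{D}_l}$ then delivers $\Pr(A\mid C)\le\Pr(B\mid C)$.

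The main obstacle is making the independence claim above fully rigorous.  A priori, conditioning on the encoder having chosen $\mb{k}_{\set{D}_l}$ couples the non-selected codewords in $\set{C}^{\set{S}_j}$ to the event that $\mb{k}_{\set{S}_j}$ was picked, and these dependencies can in principle propagate into $W$ because later encoding stages condition on earlier-selected codewords.  The cleanest resolution is to specify the ``picks one arbitrarily'' tie-breaking rule of the encoder as uniform random selection among the jointly typical indices at each stage; marginalizing over the tie-breaking randomness then restores the i.i.d.\ product law for the non-selected codewords, and independence from $W$ follows from the block-wise independence of codewords across different messages together with the independence of the codebook from $(X^n,Y_l^n)$.
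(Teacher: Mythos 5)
Your overall route---inject the colliding indices of $S_1$ into bin $1$ via $\phi$, argue that the codewords indexed by $S_1$ and by $\phi(S_1)$ have the same law jointly with $W=(U_{\set{D}_l\setminus\set{D}'_l}(\mb{k}_{\set{D}_l\setminus\set{D}'_l}),Y_l^n)$, and then use $\phi(S_1)\subseteq S_2$ and monotonicity---is structurally the same as the paper's proof, which follows \cite[Lemma 11.1]{ElGamal}: condition on the encoder's chosen index, exchange the same-bin/different-within-bin collection for an equal-sized collection of indices in bin $1$, and then enlarge to the whole bin. The gap is in the justification of the exchange step. Your claim that, conditionally on $(C,\mb{k}_{\set{D}_l})$, the non-selected codewords follow the i.i.d.\ product law $\prod_{\set{S}_j\in\set{D}'_l}p^n_{U_{\set{S}_j}}$ and are independent of $W$ is false, and specifying uniform random tie-breaking does not restore it. Conditioning on the encoder having selected the particular index $\mb{k}_{\set{S}_j}$ tilts the non-selected codewords of $\set{C}^{\set{S}_j}$: already with a codebook of two codewords and a uniformly resolved search for a jointly typical one, the event ``index $1$ is selected'' is more likely when codeword $2$ is atypical with $(X^n,U^-_{\set{S}_j})$ than when it is typical, so given the selection the non-selected codeword is biased toward atypicality with the source, and through $X^n$ it becomes dependent on $Y_l^n$ and on the other selected codewords, i.e.\ on $W$. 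Marginalizing over the tie-break yields symmetry among the non-selected indices, not independence or the product law, so the step as you state it would fail.

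What is true, and all your argument actually needs, is the weaker exchangeability statement the paper uses: the codewords are generated i.i.d., the probability of any given selection is invariant under permutations of the non-selected indices within each codebook, and all later encoding stages as well as the event $C$ depend on $\set{C}^{\set{S}_j}$ only through its selected codeword; hence, conditionally on $(C,\mb{k}_{\set{D}_l})$, any two collections consisting, for each $\set{S}_j\in\set{D}'_l$, of $2^{nR'_{\set{S}_j}}-1$ non-selected indices have the same joint conditional law together with $W$. Since both $S_1$ and $\phi(S_1)$ are such collections (your edge-case check that $\phi$ never hits the selected index is exactly what makes this applicable), this gives the equality of the two union probabilities; the inclusion $\phi(S_1)\subseteq S_2$ and averaging over the encoder's bin index then finish the proof as in the paper. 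With the product-law/independence claim replaced by this symmetry argument, your proof coincides with the paper's.
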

\begin{proof}
The proof follows the steps in \cite[Lemma 11.1]{ElGamal}. 
We start with showing that for a particular set of bin indices $b_{\set{D}'_l}$,
\begin{align}
\label{ineq:lemma_bin_indp}
&\Pr\left(  A | C, k_{\set{D}'_l} = b_{\set{D}'_l} \mbox{ is chosen at the encoder }\right) 
 \le
  \Pr\left(  B | C, k_{\set{D}'_l} = b_{\set{D}'_l}  \mbox{ is chosen at the encoder }\right).
 \end{align}
 We can write
 \begin{align}
 &\Pr\left(  A | C, k_{\set{D}'_l} = b_{\set{D}'_l} \mbox{ is chosen at the encoder}\right)
\notag  \\
 &= \sum_{  b'_{\set{D}'_l} }p(b'_{\set{D}'_l}| b_{\set{D}'_l} ) 
 \Pr\bigg(  \exists \mb{\wt{k}}_{\set{D}'_l}  \mbox{ such that } \wt{k}_{\set{D}'_l} = b_{\set{D}'_l}, \wt{k}'_{\set{D}'_l} \neq b'_{\set{D}'_l}, 
 U_{\set{D}'_l}(\mb{\wt{k}}_{\set{D}'_l}) \in  \mathcal{T}^{(n)}_{\epsilon_{l^*+2}}(p|U_{\set{D}_l\setminus \set{D}'_l}(\mb{k}_{\set{D}_l\setminus \set{D}'_l}),Y^n_l) \bigg| 
\notag \\ 
 & \quad  \quad  \quad  \quad  \quad  \quad  \quad  \quad  \quad \quad
  \quad  \quad  \quad  \quad
C,  \mb{k}_{\set{D}'_l} =( b_{\set{D}'_l}, \bar{b}'_{\set{D}'_l}) \mbox{ is chosen at the encoder} \bigg)
\notag  \\
  & \overset{a}{=} \sum_{ b'_{\set{D}'_l} }p(b'_{\set{D}'_l}  | b_{\set{D}'_l} ) \Pr\bigg(  \exists \mb{\wt{k}}_{\set{D}'_l} \mbox{ such that } \wt{k}_{\set{S}_j} = 1, \wt{k}'_{\set{S}_j} \in [2^{R'_{\set{S}_j}}-1] \mbox{ for all } \set{S}_j \in \set{D}'_l,
  \notag \\
   & \quad  \quad  \quad  \quad \quad \quad \quad \quad \quad \quad
 U_{\set{D}'_l}(\mb{\wt{k}}_{\set{D}'_l}) \in  \mathcal{T}^{(n)}_{\epsilon_{l^*+2}}(p|U_{\set{D}_l\setminus \set{D}'_l}(\mb{k}_{\set{D}_l\setminus \set{D}'_l}),Y^n_l) \bigg| 
C,  \mb{k}_{\set{D}'_l} =( b_{\set{D}'_l}, \bar{b}'_{\set{D}'_l}) \mbox{ is chosen at the encoder} \bigg)
\notag  \\
  & \overset{b}{\le} \sum_{ b'_{\set{D}'_l} }p(b'_{\set{D}'_l}  | b_{\set{D}'_l} ) \Pr\bigg(  \exists \mb{\wt{k}}_{\set{D}'_l} \mbox{ such that } \wt{k}_{\set{S}_j} = 1 \mbox{ for all } \set{S}_j \in \set{D}'_l,
 U_{\set{D}'_l}(\mb{\wt{k}}_{\set{D}'_l}) \in  \mathcal{T}^{(n)}_{\epsilon_{l^*+2}}(p|U_{\set{D}_l\setminus \set{D}'_l}(\mb{k}_{\set{D}_l\setminus \set{D}'_l}),Y^n_l) \bigg| 
\notag \\ 
 & \quad  \quad  \quad  \quad  \quad  \quad  \quad  \quad  \quad \quad
  \quad  \quad  \quad  \quad
 C, \mb{k}_{\set{D}'_l} =( b_{\set{D}'_l}, \bar{b}'_{\set{D}'_l}) \mbox{ is chosen at the encoder} \bigg)
\notag  \\
   & =  \Pr\bigg(  \exists \mb{\wt{k}}_{\set{D}'_l} \mbox{ such that } \wt{k}_{\set{S}_j} = 1 \mbox{ for all } \set{S}_j \in \set{D}'_l,
 U_{\set{D}'_l}(\mb{\wt{k}}_{\set{D}'_l}) \in  \mathcal{T}^{(n)}_{\epsilon_{l^*+2}}(p|U_{\set{D}_l\setminus \set{D}'_l}(\mb{k}_{\set{D}_l\setminus \set{D}'_l}),Y^n_l) \bigg| 
\notag \\ 
 & \quad  \quad  \quad  \quad  \quad  \quad  \quad  \quad  \quad \quad
  \quad  \quad  \quad  \quad
C,  k_{\set{D}'_l} = b_{\set{D}'_l} \mbox{ is chosen at the encoder} \bigg)
 \notag \\
  \label{ineq:last_in_app}
  & = \Pr(B| C, k_{\set{D}'_l} = b_{\set{D}'_l} \mbox{ is chosen at the encoder}), 
 \end{align}
 where
 \\
 a : follows because given any set of codeword indices $\mb{b}_{\set{D}'_l} = ( b_{\set{D}'_l}, \bar{b}'_{\set{D}'_l})$ and event $C$, for each $\set{S}_j \in \set{D}'_l$, any collection of $[2^{R'_{\set{S}_j}}-1]$, the number of codewords $u^n(\mb{k}_{\set{S}_j})$ whose index $\mb{k}_{\set{S}_j}$ is different from $\mb{b}_{\set{S}_j}$  has the same distribution.
 \\
 b: Each bin in codebook $\mathcal{C}^{\set{S}_j}$ has size $2^{R'_{\set{S}_j}}$.
 \\
 
 Multiplying both sides of (\ref{ineq:last_in_app}) with $p(b_{\set{D}'_l})$ and summing over all bin indices $b_{\set{D}'_l}$ concludes the proof.
\end{proof}

\end{appendices}
\bibliographystyle{IEEEtran}

\bibliography{IEEEabrv,references}

\end{document}